\documentclass[11pt,oneside]{article}

\usepackage[a4paper,margin=27mm,headheight=15pt]{geometry}
\usepackage[T1]{fontenc}
\usepackage[utf8]{inputenc}
\usepackage{mathpazo}
\usepackage[scaled=0.91]{helvet}
\usepackage{microtype}
\usepackage{amsmath,amssymb,amsthm,mathtools,mathrsfs}
\usepackage{bm}
\usepackage{booktabs,array,tabularx,multirow}
\usepackage{enumitem}
\usepackage{xcolor}
\usepackage{tikz}
\usetikzlibrary{arrows.meta,calc,decorations.pathmorphing,fit,patterns,positioning,shapes.geometric,backgrounds}
\usepackage{quantikz}
\usepackage[most]{tcolorbox}
\usepackage{caption}
\usepackage{subcaption}
\usepackage{titlesec}
\usepackage{fancyhdr}
\usepackage{xurl}
\usepackage[colorlinks=true,linkcolor=DeepBlue,citecolor=DeepBlue,urlcolor=Coral]{hyperref}
\usepackage[capitalize,noabbrev]{cleveref}

\definecolor{DeepBlue}{HTML}{17324D}
\definecolor{Blue}{HTML}{285B78}
\definecolor{PaleBlue}{HTML}{EAF2F6}
\definecolor{Coral}{HTML}{D85A4F}
\definecolor{PaleCoral}{HTML}{F8E8E5}
\definecolor{SpiderGreen}{HTML}{3B8C70}
\definecolor{PaleGreen}{HTML}{E5F2ED}
\definecolor{SpiderRed}{HTML}{CF4F68}
\definecolor{Ink}{HTML}{24313B}
\definecolor{WarmGray}{HTML}{F5F2ED}
\definecolor{MidGray}{HTML}{66727B}

\hypersetup{
  pdftitle={Rethinking Quantum Circuits},
  pdfauthor={Steven Rayan},
  pdfsubject={Lecture notes from the Niels Bohr Quantum Summer School},
  pdfkeywords={quantum circuits, ZX-calculus, quantum error correction, hyperbolic geometry, superconducting circuits}
}

\color{Ink}

\setlist[itemize]{topsep=3pt,itemsep=2pt,leftmargin=1.5em}
\setlist[enumerate]{topsep=3pt,itemsep=2pt,leftmargin=1.7em}

\titleformat{\section}
  {\sffamily\bfseries\color{DeepBlue}\Large}
  {\thesection}{0.75em}{}
\titleformat{\subsection}
  {\sffamily\bfseries\color{Blue}\large}
  {\thesubsection}{0.65em}{}
\titleformat{\subsubsection}
  {\sffamily\bfseries\color{Ink}\normalsize}
  {\thesubsubsection}{0.55em}{}
\titlespacing*{\section}{0pt}{2.2em}{0.8em}
\titlespacing*{\subsection}{0pt}{1.6em}{0.5em}

\newtheoremstyle{rayanplain}
  {7pt}{7pt}{\itshape}{}{
  \sffamily\bfseries\color{DeepBlue}}{.}{0.5em}{}
\theoremstyle{rayanplain}
\newtheorem{theorem}{Theorem}[section]
\newtheorem{proposition}[theorem]{Proposition}

\theoremstyle{definition}

\newtheorem{example}[theorem]{Example}

\newtcolorbox{keyidea}[1][]{
  enhanced,breakable,colback=PaleBlue,colframe=Blue,
  boxrule=0.7pt,arc=1.2mm,left=2.2mm,right=2.2mm,top=1.5mm,bottom=1.5mm,
  fonttitle=\sffamily\bfseries,title=#1}
\newtcolorbox{warningbox}[1][]{
  enhanced,breakable,colback=PaleCoral,colframe=Coral,
  boxrule=0.7pt,arc=1.2mm,left=2.2mm,right=2.2mm,top=1.5mm,bottom=1.5mm,
  fonttitle=\sffamily\bfseries,title=#1}
\newtcolorbox{bridgebox}[1][]{
  enhanced,breakable,colback=PaleGreen,colframe=SpiderGreen,
  boxrule=0.7pt,arc=1.2mm,left=2.2mm,right=2.2mm,top=1.5mm,bottom=1.5mm,
  fonttitle=\sffamily\bfseries,title=#1}

\newcommand{\C}{\mathbb{C}}
\newcommand{\Ftwo}{\mathbb{F}_{2}}
\providecommand{\ket}[1]{\lvert #1\rangle}
\providecommand{\bra}[1]{\langle #1\rvert}
\providecommand{\braket}[2]{\langle #1\mid #2\rangle}
\providecommand{\proj}[1]{\lvert #1\rangle\!\langle #1\rvert}
\newcommand{\id}{\mathrm{id}}
\newcommand{\Foot}{\operatorname{Foot}}
\newcommand{\im}{\operatorname{im}}
\newcommand{\rank}{\operatorname{rank}}
\newcommand{\Tr}{\operatorname{Tr}}
\newcommand{\cH}{\mathcal{H}}
\newcommand{\cC}{\mathcal{C}}
\newcommand{\cE}{\mathcal{E}}
\newcommand{\cR}{\mathcal{R}}
\newcommand{\cN}{\mathcal{N}}
\newcommand{\Zsp}[3]{Z^{#1}_{#2,#3}}
\newcommand{\Xsp}[3]{X^{#1}_{#2,#3}}
\providecommand{\llbracket}{\mathopen{[\![}}
\providecommand{\rrbracket}{\mathclose{]\!]}}
\newcommand{\versiondate}{19 August 2026}

\tikzset{
  wire/.style={line width=0.85pt,draw=Ink},
  flow/.style={-{Latex[length=2.2mm]},line width=0.9pt,draw=DeepBlue},
  gate/.style={draw=DeepBlue,fill=white,rounded corners=1.2mm,minimum height=7mm,minimum width=11mm,font=\sffamily\small},
  boxnode/.style={draw=DeepBlue,fill=PaleBlue,rounded corners=1.4mm,align=center,inner sep=4pt,font=\sffamily\small},
  coralnode/.style={draw=Coral,fill=PaleCoral,rounded corners=1.4mm,align=center,inner sep=4pt,font=\sffamily\small},
  greennode/.style={draw=SpiderGreen,fill=PaleGreen,rounded corners=1.4mm,align=center,inner sep=4pt,font=\sffamily\small},
  zspider/.style={circle,draw=SpiderGreen,fill=SpiderGreen,minimum size=6mm,inner sep=0pt},
  xspider/.style={circle,draw=SpiderRed,fill=SpiderRed,minimum size=6mm,inner sep=0pt},
  port/.style={circle,draw=DeepBlue,fill=white,minimum size=4.2mm,inner sep=0pt},
  resonator/.style={circle,draw=DeepBlue,fill=PaleBlue,minimum size=3.5mm,inner sep=0pt}
}

\begin{document}

\hypersetup{pageanchor=false}
\begin{titlepage}
\thispagestyle{empty}
\begin{tikzpicture}[remember picture,overlay]
  \fill[WarmGray] (current page.south west) rectangle (current page.north east);
  \fill[DeepBlue] (current page.north west) rectangle ([yshift=-29mm]current page.north east);
  \fill[Coral] ([yshift=-29mm]current page.north west) rectangle ([yshift=-33mm]current page.north east);
\end{tikzpicture}

\vspace*{36mm}
{\sffamily\bfseries\fontsize{30}{35}\selectfont\color{DeepBlue}
Rethinking Quantum Circuits\par}
\vspace{5mm}
{\sffamily\fontsize{17}{21}\selectfont\color{Coral}
Lecture notes from the 2026 Niels Bohr Quantum Summer School\par}
\vspace{11mm}
{\Large Steven Rayan\par}
\vspace{3mm}
{\sffamily\small
Centre for Quantum Topology and Its Applications (quanTA)\par
Department of Mathematics and Statistics\par
University of Saskatchewan\par
\href{mailto:rayan@math.usask.ca}{rayan@math.usask.ca}\par}
\vspace{7mm}
\begin{center}
\begin{tikzpicture}
  \begin{scope}
    \clip[rounded corners=2.5mm] (0,0) rectangle (94mm,94mm);
    \node[anchor=south west,inner sep=0] at (0,0)
      {\includegraphics[width=94mm]{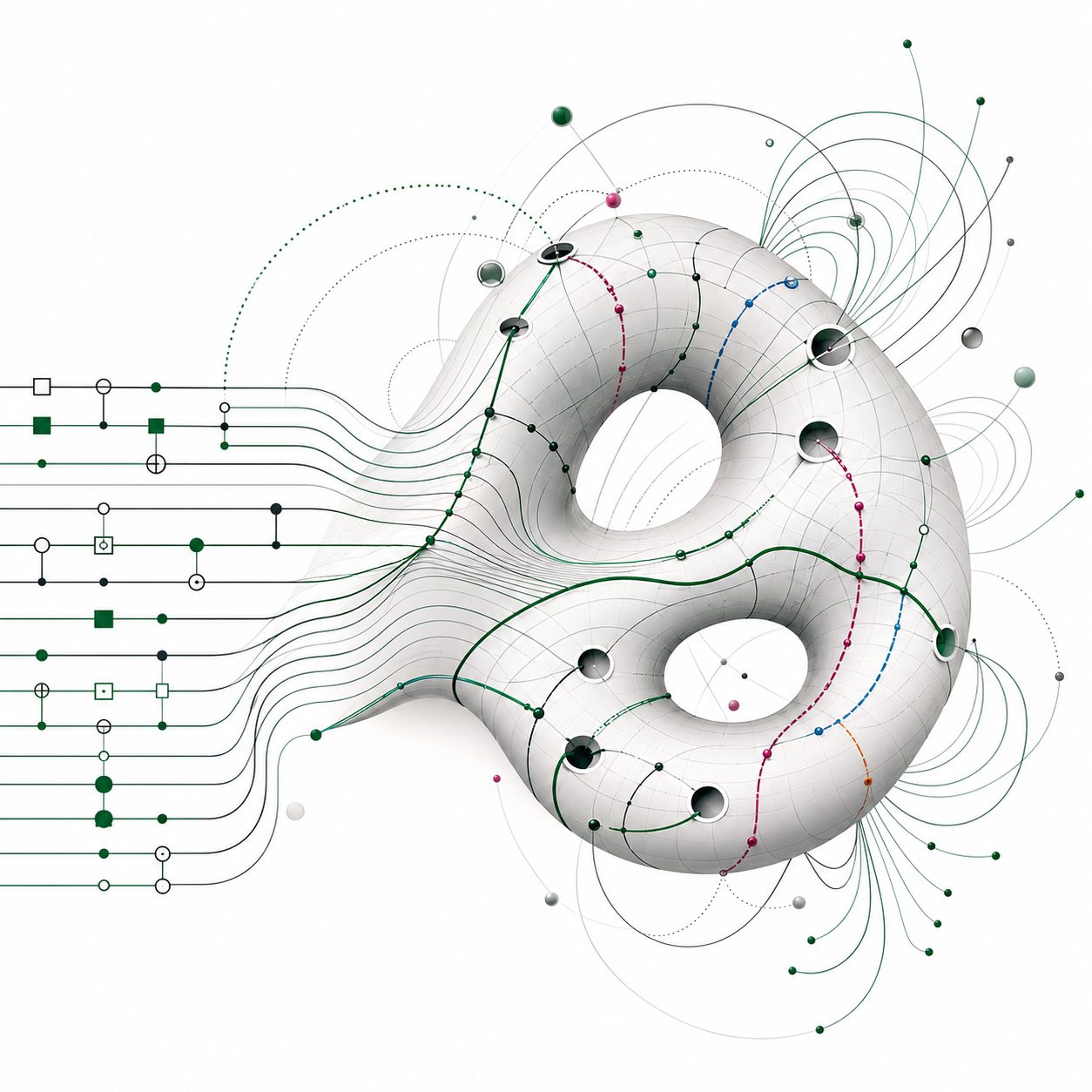}};
  \end{scope}
  \draw[DeepBlue,line width=1.2pt,rounded corners=2.5mm]
    (0,0) rectangle (94mm,94mm);
\end{tikzpicture}
\end{center}
\vfill
{\sffamily\small Syddansk Universitet, Odense, 10--13 August 2026\par}
{\sffamily\small Version Date: \versiondate\par}
\end{titlepage}

\hypersetup{pageanchor=true}
\pagenumbering{roman}
\begin{abstract}
These notes develop four interconnected ways of reading a quantum circuit.  A circuit for us begins as an operational composition of gates; then, it becomes a diagram whose local equalities may be used as calculations; next, it becomes a protected process once errors, syndromes, and logical degrees of freedom are separated; and finally, it becomes geometric when its connectivity, topology, and boundary data are treated as physical design parameters.  The development begins at the level of bits and qubits before appealing to Deutsch's and Grover's algorithms as basic examples of quantum circuits.  With the basics in hand, we interpret quantum circuits diagramatically, leading us to compact closed string diagrams and the ZX-calculus.  After that, we consider how to correct quantum circuits by introducing the Knill--Laflamme condition, homological surface codes, and related concepts with a view towards thinking of these as operations on diagrams. The lectures eventually arrive at the properties of hyperbolic quantum codes and the prospect of physical superconducting circuits emulating the negatively-curved lattices needed to support those codes. These mathematical ideas and physical experiments, taken together, represent one way to impart a geometric layer onto quantum circuits. By the very end, we bring the ideas nearly full circle by assessing the extent to which these device physics experiments operationalize the basic ZX diagrams encountered much earlier in the story.  While the later material reports on original research, and while the discussion becomes increasingly mathematical as the sections progress, no prior knowledge of quantum information, quantum computing, or quantum error correction is actually assumed.
\end{abstract}

\tableofcontents
\clearpage
\pagenumbering{arabic}

\setcounter{section}{-1}

\section{Talking Quantum Circuits}
\label{sec:talking}

\subsection{Origins of these notes}

These notes grew out of four lectures that I delivered at the Niels Bohr Quantum Summer School at the Centre for Quantum Mathematics, University of Southern Denmark, in Odense in August 2026.  The school brought together PhD students from mathematics, computer science, physics, chemistry, and neighbouring subjects for a two-week programme devoted in 2026 to quantum algorithms and quantum software.  The breadth of the audience was a welcome pedagogical challenge for me: a notation obvious to a quantum information scientist could be unfamiliar to a geometer, while a topological construction natural to a mathematician could look remote and unmotivated to an experimentalist or device physicist.

The series was called \emph{Rethinking Quantum Circuits}.  Its four lectures were \emph{Thinking Quantum Circuits}, \emph{Drawing Quantum Circuits}, \emph{Correcting Quantum Circuits}, and \emph{Geometrizing Quantum Circuits}.  Their reception, and especially the sustained questions after each lecture, suggested that the lectures should survive in an expanded and easily referenced form.  The questions and interactions after the lectures have markedly affected or even shifted the focus in some places.

The title of this introductory section is deliberate.  Talking about quantum circuits across disciplines requires more than fixing symbols.  It requires deciding what a circuit is allowed to mean.  Is it a schedule of unitary gates?  A tensor contraction?  A proof object?  A noisy spacetime history?  A graph that should be compiled into hardware?  The answer throughout these notes is that all of these readings are useful, but they must be related without conflating them.

\subsection{The central viewpoint}

The basic object is a \emph{composable process}.  A process has an input type, an output type, and a rule that transforms the former into the latter.  In elementary quantum computing the types are tensor powers
\[
  \cH_n=(\C^2)^{\otimes n},
\]
and a closed-system gate on $n$ qubits is a unitary $U:\cH_n\to\cH_n$.  But the grammar of circuits is wider than unitary dynamics.  State preparations have no quantum inputs, selected measurement branches have no quantum outputs, encoders can change the number of wires, and channels include an environment that may later be ignored.  We will therefore reserve the word \emph{unitary} for the reversible square maps and use \emph{gate} more broadly for a typed linear process when the context is diagrammatic.

The four sections apply four verbs to this common object:

\begin{figure}[ht]
\centering
\begin{tikzpicture}[node distance=6mm and 7mm]
  \node[boxnode,text width=29mm] (think) {\textbf{Think}\\states, gates,\\oracles, measurement};
  \node[greennode,text width=29mm,right=of think] (draw) {\textbf{Draw}\\wires, cups, caps,\\spiders, rewrites};
  \node[coralnode,text width=29mm,right=of draw] (correct) {\textbf{Correct}\\errors, syndromes,\\codes, logical classes};
  \node[boxnode,text width=29mm,right=of correct] (geo) {\textbf{Geometrize}\\metrics, topology,\\boundaries, hardware};
  \draw[flow] (think) -- (draw);
  \draw[flow] (draw) -- (correct);
  \draw[flow] (correct) -- (geo);
\end{tikzpicture}
\caption{The same circuit is studied operationally, diagrammatically, fault-tolerantly, and geometrically.}
\label{fig:series-map}
\end{figure}

\begin{itemize}
\item \textbf{Thinking} asks what information the wires carry, what rules the gates implement, and how interference changes computational query complexity.
\item \textbf{Drawing} asks how composition and tensor product can be made visible, and when a local redraw is a proof of equality.
\item \textbf{Correcting} asks how logical information can survive imperfect physical processes, and how errors may be classified without reading the logical state.
\item \textbf{Geometrizing} asks what changes when adjacency, topology, metric data, and boundary access become part of the design of the process itself.
\end{itemize}

The progression is not a hierarchy in which one description replaces the last.  Matrices remain indispensable after diagrams are introduced; topology does not replace the Knill--Laflamme theorem; and a fabricated graph is not automatically a quantum code.  The aim is instead to move between descriptions at the moment each exposes the relevant structure.

\begin{keyidea}[A recurring principle]
Local components acquire meaning through composition, but the property we care about may be global.  Interference detects a global property of an oracle; a logical operator is a global class invisible to local checks; and a hyperbolic chip uses local couplings to encode global graph geometry.
\end{keyidea}

\subsection{Four changes of language}

One way to understand the series is as a sequence of translations.  Each translation preserves an underlying process while changing which questions become easy to ask.

The first translation is from a truth table to a linear operator.  A classical reversible gate permutes basis states; a quantum gate extends the allowed transformations to every unitary.  This does more than replace bits by vectors.  It introduces relative phase, tensor-product state spaces, and interference.  The operational question becomes: what property of the input can be made visible at the output after the alternatives have recombined?

The second translation is from an array of matrix entries to a typed diagram.  Composition becomes the joining of wires, tensor product becomes juxtaposition, and states and effects appear as legitimate zero-input or zero-output processes.  A local graphical relation can then explain a large matrix identity.  The price is discipline: a picture calculates only after its generators, types, scalar convention, and rewrite rules have been declared.

The third translation is from one intended circuit to a family of imperfect histories.  An encoder correlates a logical degree of freedom across a larger physical system.  Measurements extract a syndrome or, more generally, a footprint.  The central question is no longer whether a single circuit equals the identity, but whether every pair of allowed histories acts indistinguishably on the logical information.  The Knill--Laflamme condition makes that statement exact.

The fourth translation is from a code graph to a physical geometry.  Incidence relations become couplings, nontrivial cycles become logical operators, and boundary access becomes a choice of input and output ports.  This translation has several intermediate stages: a graph can be realized spectrally before it is operated in a nonlinear quantum regime, and a nonlinear quantum simulator can exist before it supports repeated syndrome extraction.  Keeping those stages separate is part of the scientific content.

\begin{center}
\begin{tabularx}{0.96\textwidth}{@{}>{\sffamily\bfseries}lXXX@{}}
\toprule
viewpoint & primitive object & equality or evidence & characteristic obstruction \\
\midrule
operational & state and gate & output statistics & destructive measurement \\
diagrammatic & typed generator & sound local rewrite & hidden scalar or type mismatch \\
fault-tolerant & encoded history & logical-state-independent comparison & an undetectable logical class \\
geometric & weighted cellulation or graph & spectrum, correlations, syndromes & confusing adjacency with quantum order \\
\bottomrule
\end{tabularx}
\end{center}

The table is also a warning against category mistakes.  A spectral match is excellent evidence that a weighted graph was compiled correctly, but it cannot certify a logical qubit.  Conversely, an abstract code may have excellent parameters while ignoring fabrication, measurement, and decoding constraints.  The full path from oracle to device requires several kinds of evidence, none of which can simply substitute for the others.

\subsection{A common mathematical spine}

Despite their different vocabularies, the four lectures repeatedly use the same small collection of mathematical operations.

\paragraph{Composition.}  If $A:U\to V$ and $B:V\to W$, their sequential composite is $BA:U\to W$.  A decoder follows a noisy encoder in exactly this sense; a microwave scattering experiment also composes drive, device response, and readout.  Type compatibility is the first correctness check in every section.

\paragraph{Tensor product.}  Independent interfaces combine as $U\otimes V$.  This operation creates the exponentially large state spaces of quantum computation, the multiple legs of a spider, the physical register of a code, and the many local modes of a lattice.  Tensor product permits correlation, but correlation is an additional property of a state or process, not a synonym for tensor product itself.

\paragraph{Adjoint and folding.}  The adjoint reverses inputs and outputs.  In diagrams it reflects a process; in error correction it allows two possible error histories to be compared by the folded operator $V^\dagger E_a^\dagger E_bV$; in spectroscopy it appears in probabilities and response functions.  Folding is therefore a general way of turning two histories into one observable comparison.

\paragraph{Quotient.}  A quotient declares some differences irrelevant.  Global phase is ignored for a pure state ray.  Stabilizers are quotiented out of the space of closed error strings.  In a compact hyperbolic surface, polygon sides are identified to form a quotient geometry.  Each use requires the equivalence relation to be stated: an informal claim that two objects are ``the same'' is incomplete without it.

\paragraph{Boundary.}  Circuit wires end at input and output boundaries; an effect is a process with only an incoming boundary.  A syndrome can be boundary data of a spacetime region.  Relative homology permits error strings to end on selected physical boundaries.  Microwave ports are literal experimental boundaries through which the device is driven and observed.  This recurrence is one reason a field-theoretic language---using boundary-sensitive composition in the spirit of Atiyah's axioms---is natural for circuit error correction \cite{atiyah,rayan-diagrammatic}.

\subsection{Pedagogical commitments}

The intended reader may know linear algebra without quantum information, or quantum algorithms without topology, or geometry without superconducting circuits.  The notes therefore follow four commitments.

First, each new formalism is motivated by a failure of the preceding one.  Reversible classical gates do not provide interference; matrices obscure local diagrammatic structure; a noiseless diagram does not explain physical reliability; and a topological code graph does not specify a device.  The next language is introduced to answer a concrete question left open by the last.

Second, elementary examples are carried far enough to expose the exact mechanism.  Deutsch's algorithm is calculated state by state.  Grover search is treated both by amplitudes and as a rotation.  The three-qubit repetition code is followed from encoding through syndrome extraction to the pairwise Knill--Laflamme test.  These examples are small, but the statements drawn from them are labelled so that a special feature is not mistaken for a general theorem.

Third, pictures and algebra are used together.  A string diagram should make composition visible; an equation should certify its scalar and basis convention.  A homological drawing should reveal a cycle; the chain complex should decide whether it is a boundary.  A device schematic should show the compiler logic; the stated vertex, edge, face, port, and genus counts should anchor it to the actual experiment.

Fourth, the notes distinguish a demonstration from a programme.  This matters most in \cref{sec:geometrizing}.  The reported superconducting devices demonstrate controlled linear spectra of finite hyperbolic graphs.  Josephson-enabled interactions, entanglement measurements, holographic correlators, and repeated quantum error correction are subsequent experimental goals.  The argument is stronger, not weaker, when the evidentiary boundary is explicit.

\subsection{How to read the notes}

Section 1 offers a compact recap of fundamental ideas of quantum information as well as very basic examples of quantum circuits, including Deutsch's and Grover's.  Readers already familiar with quantum algorithms may begin at \cref{sec:drawing}; readers interested primarily in codes may begin at \cref{sec:correcting}, using the spider formula in \cref{eq:z-spider} when it first appears; and readers approaching from experiment may begin with the evidence ladder in \cref{fig:experimental-ladder} before returning to the code geometry that motivates it.

Each lecture section contains worked calculations and a final set of problems.  The problems are not a separate sixth section: they close the relevant lecture and test its central distinctions.  Several are designed to be solved twice, once diagrammatically and once algebraically.  This is deliberate.  Agreement between representations is one of the best ways to discover a missing transpose, normalization, or boundary condition.

The notation becomes denser as the notes progress, but the conceptual questions remain stable:

\begin{enumerate}
\item What are the input and output types?
\item Which information is observable, and which must remain hidden?
\item What equivalence relation defines the desired notion of sameness?
\item Which claim is exact, which is model-dependent, and which is prospective?
\end{enumerate}

These questions are useful far beyond the examples here.  They apply to a compiled algorithm, a tensor-network proof, a decoder, a many-body simulator, or a hardware benchmark.

\subsection{What is proved, what is illustrated, and what is prospective}

Three levels of claim appear in the notes.

First, there are exact mathematical statements: the unitary oracle construction, the Deutsch calculation, the snake identities, spider fusion, the Knill--Laflamme criterion, and the homological classification of strings.  These are stated precisely and, where the proof is short enough to illuminate the idea, proved.

Second, there are controlled models.  The three-qubit repetition code corrects a specified set of bit flips but not arbitrary one-qubit noise.  A tight-binding Hamiltonian on a resonator graph tests the intended weighted adjacency but is not by itself a many-body code Hamiltonian.  The value of such a model lies in stating its promise exactly.

Third, there are research directions.  Adding nonlinear elements to hyperbolic superconducting lattices, measuring boundary correlations and entanglement, or executing repeated hyperbolic syndrome extraction goes beyond what the present linear devices have demonstrated.  These proposals will always be labelled as prospective.

\begin{warningbox}[Three distinctions to keep in view]
\begin{enumerate}
\item A \emph{normal-mode spectral gap} is not a \emph{quantum code gap}.
\item A ZX diagram can represent a map without every useful equality being easy to derive from a chosen rewrite system.
\item A locally undetectable error is logically harmless only when its closed difference is a boundary, not merely when it has no endpoint.
\end{enumerate}
\end{warningbox}

\subsection{Conventions}

Qubit 1 is the top wire in a circuit.  Circuits are read from left to right.  Composition is written algebraically in the opposite order: if $U$ is followed by $V$, the composite is $VU$.  The computational basis is $\{\ket0,\ket1\}$ and
\[
 \ket\pm=\frac{\ket0\pm\ket1}{\sqrt2}.
\]
Unless stated otherwise, homology in the coding sections uses coefficients in $\Ftwo$, so chains record only whether an edge is present or absent and orientation signs disappear.  In the ZX section we keep scalar factors visible.  This is important whenever a diagram is interpreted as a normalized state, a probability amplitude, or a physical channel.

The symbol $\doteq$ denotes equality up to a global phase.  Equalities without this symbol are exact unless stated otherwise.

\subsection{A note on the experimental figures}

Lastly, we note that the figures we have included here to describe the superconducting experiments are original schematics prepared directly for these notes.  They illustrate graph compilation, device architecture, and spectral organization without reproducing micrographs or figures that appear, or will appear, in other published articles and which therefore may be subject to copyright.  The reader is encouraged to consult nearby references to the accompanying literature in order to access the original figures arising from these experiments.

\subsection*{Acknowledgements}

I thank the summer school's programme chairs and organizers, J\o rgen Ellegaard Andersen and Jaco van de Pol, for the invitation.  I thank the Centre for Quantum Mathematics (QM) at Syddansk Universitet (SDU), the Danish Institute for Advanced Study (DIAS), and the Niels Bohr Institute for their hospitality and for providing stimulating environments for the school and its surrounding discussions.  I gratefully acknowledge the Danish e-infrastructure Consortium (DeiC), through which funding from the Danish Agency for Higher Education and Science was made available to support the summer school, itself an initiative arising from Part 2 of the Danish National Quantum Strategy.  At the same time, I am very thankful for support from the ERC Synergy Grant ReNewQuantum (PI: J\o rgen Andersen).  My participation was directly facilitated through these funding sources.  I am grateful to Louise Juel Broch and Jane Jamshidi for facilitation and support that made the event run seamlessly for all.

I am especially grateful to the PhD participants whose enthusiasm and excellent questions inspired me to expand the lectures and undertake this written account.  I also thank the collaborators (alphabetically by surname) whose contributions to our joint projects have entered the later sections: Kamal M. Ali, Sven Bachmann, Noah Gorgichuk, Ahmed A. Mahmoud, Matteo Mariantoni, Ronny Thomale, Gabrielle Tournaire, and Xicheng Xu.  The broader research programme described here has benefited from the intellectual community of the Centre for Quantum Topology and Its Applications (quanTA) at the University of Saskatchewan.

Some of the original work here was funded in part by Natural Sciences and Engineering Research Council of Canada (NSERC) Discovery and Alliance International Catalyst Quantum (G7 Stream) grants that I hold.  I am grateful to these programs for their support.

\clearpage
\section{Thinking Quantum Circuits}
\label{sec:thinking}

\subsection{Classical circuits are compositions of functions}

A classical bit takes one of the values $0$ and $1$.  A register of $n$ bits takes one value in $\{0,1\}^n$.  A classical gate is therefore a function
\[
 f:\{0,1\}^n\longrightarrow\{0,1\}^m.
\]
The familiar gates NOT, AND, OR, XOR, and NAND are distinguished by their truth tables.  Wiring one gate into another is ordinary function composition.  Placing gates on disjoint registers forms their Cartesian product and permits parallel evaluation.

Most familiar Boolean gates lose information.  AND maps four inputs to two possible outputs, so no inverse can reconstruct the input.  A reversible $n$-bit gate, by contrast, is a permutation of $\{0,1\}^n$.  For one bit there are only two: the identity and NOT.

An irreversible function can nevertheless be embedded into a reversible one.  If $f:\{0,1\}^n\to\{0,1\}^m$, define
\begin{equation}
  U_f:(x,y)\longmapsto (x,y\oplus f(x)),
  \qquad x\in\{0,1\}^n,\;y\in\{0,1\}^m,
  \label{eq:reversible-embedding}
\end{equation}
where $\oplus$ is bitwise XOR.  The input $x$ is retained and the target register $y$ stores the function value reversibly.  Applying $U_f$ twice gives the identity.  On computational basis states, the same permutation defines a unitary matrix.

\begin{figure}[ht]
\centering
\begin{tikzpicture}[node distance=12mm and 17mm]
  \node[boxnode,text width=28mm] (bool) {Boolean function\\$f:x\mapsto f(x)$};
  \node[coralnode,text width=33mm,right=of bool] (rev) {reversible embedding\\$(x,y)\mapsto(x,y\oplus f(x))$};
  \node[greennode,text width=34mm,right=of rev] (uni) {quantum oracle\\$\ket{x,y}\mapsto\ket{x,y\oplus f(x)}$};
  \draw[flow] (bool) -- node[above,font=\sffamily\scriptsize]{retain input} (rev);
  \draw[flow] (rev) -- node[above,font=\sffamily\scriptsize]{extend linearly} (uni);
\end{tikzpicture}
\caption{An irreversible truth table can be placed inside a reversible, and hence unitary, oracle.}
\label{fig:oracle-embedding}
\end{figure}

\subsubsection{Truth tables, information loss, and workspace}

For two input bits the common Boolean gates are summarized by
\begin{center}
\begin{tabular}{@{}cc|cccc@{}}
\toprule
$x$ & $y$ & $x\wedge y$ & $x\vee y$ & $x\oplus y$ & $\operatorname{NAND}(x,y)$ \\
\midrule
0&0&0&0&0&1\\
0&1&0&1&1&1\\
1&0&0&1&1&1\\
1&1&1&1&0&0\\
\bottomrule
\end{tabular}
\end{center}
The table displays two different notions that are sometimes conflated.  A gate may be deterministic without being reversible.  AND is completely deterministic, but the output $0$ has three preimages.  A reversible gate must be bijective on the complete register on which it acts.

Retaining the input in \cref{eq:reversible-embedding} supplies enough workspace to make the action injective.  For the one-bit NOT function $f(x)=1\oplus x$, the two-register oracle flips the target when $x=0$: it is a NOT on the target controlled on $x=0$, often called a negative-control or anti-controlled NOT:
\[
 U_{\mathrm{NOT}}\ket{x,y}=\ket{x,y\oplus(1\oplus x)}.
\]
For the identity function $f(x)=x$, it is the usual CNOT.  For a two-bit AND function, the reversible oracle is the three-bit Toffoli action
\[
 \ket{x_1,x_2,y}\longmapsto
 \ket{x_1,x_2,y\oplus x_1x_2}.
\]
The target bit is workspace, but it is not disposable garbage if it remains correlated with data needed later.  Reversible circuit design therefore includes an \emph{uncomputation} discipline: copy out a classical answer when permitted, run the temporary computation backwards, and return workspace to a known state.

On the ordered computational basis $\ket{00},\ket{01},\ket{10},\ket{11}$, CNOT has matrix
\begin{equation}
 \operatorname{CNOT}=
 \begin{pmatrix}
 1&0&0&0\\
 0&1&0&0\\
 0&0&0&1\\
 0&0&1&0
 \end{pmatrix}.
 \label{eq:cnot-matrix}
\end{equation}
Every column contains one $1$ because each basis input has one output; every row contains one $1$ because the action is bijective.  Thus reversible Boolean gates give permutation matrices, which are orthogonal and hence unitary.  Quantum computation keeps all of these gates and adds unitary matrices whose columns are coherent superpositions.

\begin{example}[A reversible evaluation followed by uncomputation]
Suppose a reversible circuit $U_f$ computes $\ket{x,0}\mapsto\ket{x,f(x)}$ and the value $f(x)$ is known to be a computational-basis label.  Introduce a clean target and apply CNOTs to copy that label:
\[
 \ket{x,0,0}\xrightarrow{U_f}
 \ket{x,f(x),0}\longrightarrow
 \ket{x,f(x),f(x)}\xrightarrow{U_f^\dagger}
 \ket{x,0,f(x)}.
\]
The temporary register has been reset.  If the middle register were in an arbitrary unknown superposition unrelated to a basis computation, the copying step would instead create entanglement; it would not violate the no-cloning theorem.
\end{example}

\subsection{Qubits and the enlarged gate space}

A pure qubit state is a unit vector in $\C^2$,
\[
  \ket\psi=\alpha\ket0+\beta\ket1,
  \qquad |\alpha|^2+|\beta|^2=1.
\]
The coefficients are amplitudes, not probabilities.  They may be negative or complex and can interfere before their squared magnitudes are observed.  A measurement in the computational basis returns $0$ with probability $|\alpha|^2$ and $1$ with probability $|\beta|^2$.  One run returns one classical outcome; it does not reveal $\alpha$ and $\beta$.

An $n$-qubit pure state lies in
\[
  \cH_n=(\C^2)^{\otimes n},\qquad \dim \cH_n=2^n.
\]
A closed-system quantum gate is a unitary $U:\cH_n\to\cH_n$.  Reversible classical gates occupy the finite subgroup of permutation matrices inside the continuous group $U(2^n)$.  Quantum logic therefore enlarges the gate space from permutations to all orthonormal changes of basis, including continuous rotations and relative phases.

The Pauli and Hadamard matrices are
\[
 X=\begin{pmatrix}0&1\\1&0\end{pmatrix},\qquad
 Z=\begin{pmatrix}1&0\\0&-1\end{pmatrix},\qquad
 H=\frac{1}{\sqrt2}\begin{pmatrix}1&1\\1&-1\end{pmatrix}.
\]
The Hadamard gate maps $\ket0\mapsto\ket+$ and $\ket1\mapsto\ket-$, and $H^2=I$.  It creates coherent alternatives and later recombines them.  The conjugation identity
\[
  HZH=X
\]
is the elementary form of the statement that a phase difference can be turned into a bit difference.

Circuit grammar has two products.  Sequential composition gives $VU$ when $U$ is followed by $V$.  Parallel composition gives $U\otimes V$.  These are not merely two matrix operations; they are the two ways that boxes can be joined in a typed circuit.

\subsubsection{Global phase, relative phase, and the Bloch sphere}

Multiplying a state vector by a common phase does not change any measurement probability:
\[
 \ket\psi\sim e^{i\gamma}\ket\psi.
\]
Pure physical states are therefore rays rather than individual unit vectors.  After removing a global phase, every one-qubit pure state can be written
\begin{equation}
 \ket{\psi(\vartheta,\varphi)}
 =\cos\frac{\vartheta}{2}\ket0
 +e^{i\varphi}\sin\frac{\vartheta}{2}\ket1,
 \qquad 0\leq\vartheta\leq\pi.
 \label{eq:bloch-state}
\end{equation}
The corresponding Bloch vector is
\[
 \bm r=(\sin\vartheta\cos\varphi,
        \sin\vartheta\sin\varphi,
        \cos\vartheta),
\]
and the density matrix is
\begin{equation}
 \rho=\proj\psi=\frac12(I+r_xX+r_yY+r_zZ),
 \qquad
 Y=\begin{pmatrix}0&-i\\ i&0\end{pmatrix}.
 \label{eq:bloch-density}
\end{equation}
The phase $\varphi$ is relative: it changes the $X$- and $Y$-basis statistics even though the computational-basis probabilities depend only on $\vartheta$.  This is why a phase gate can be invisible to one measurement and decisive after a basis change.

\begin{figure}[ht]
\centering
\begin{tikzpicture}[scale=1.0]
  \draw[DeepBlue,line width=1pt] (0,0) circle (19mm);
  \draw[MidGray,dashed] (-19mm,0) arc[start angle=180,end angle=360,x radius=19mm,y radius=6mm];
  \draw[MidGray] (19mm,0) arc[start angle=0,end angle=180,x radius=19mm,y radius=6mm];
  \draw[flow] (0,-24mm)--(0,24mm) node[above,font=\sffamily\small] {$z$};
  \draw[flow] (-24mm,0)--(24mm,0) node[right,font=\sffamily\small] {$x$};
  \draw[flow] (-14mm,-8mm)--(14mm,8mm) node[above right,font=\sffamily\small] {$y$};
  \draw[Coral,-{Latex[length=2.4mm]},line width=1.5pt] (0,0)--(10mm,9mm) coordinate (r);
  \draw[Coral,dashed] (r)--(10mm,0);
  \node[Coral,font=\sffamily\small,above right] at (r) {$\bm r$};
  \node[font=\sffamily\small] at (0,22mm) {$\ket0$};
  \node[font=\sffamily\small] at (0,-22mm) {$\ket1$};
  \node[font=\sffamily\small] at (22mm,3mm) {$\ket+$};
  \begin{scope}[xshift=72mm]
    \node[boxnode,text width=58mm] (m) at (0,0) {measurement along unit vector $\bm n$\\[1mm]
      $p(\pm)=\frac12(1\pm\bm r\!\cdot\!\bm n)$};
    \node[below=7mm of m,font=\sffamily\small,align=center] {a basis choice selects an axis;\\one shot returns one endpoint};
  \end{scope}
\end{tikzpicture}
\caption{A pure qubit is a point on the Bloch sphere.  Relative phase is an observable direction, not an extra classical probability.}
\label{fig:bloch-sphere}
\end{figure}

The most common one-qubit rotations are
\[
 R_x(\theta)=e^{-i\theta X/2},\qquad
 R_y(\theta)=e^{-i\theta Y/2},\qquad
 R_z(\theta)=e^{-i\theta Z/2}.
\]
Up to global phase, every one-qubit unitary has an Euler decomposition
\begin{equation}
 U\doteq R_z(\alpha)R_x(\beta)R_z(\gamma).
 \label{eq:euler-one-qubit}
\end{equation}
This continuous three-parameter family has no classical one-bit analogue: a reversible classical bit has only the identity and NOT.

\subsubsection{A working gate dictionary}

The following gates recur throughout the notes.  Equalities are exact unless the symbol $\doteq$ indicates equality up to global phase.
\begin{center}
\begin{tabularx}{0.96\textwidth}{@{}>{\sffamily\bfseries}l>{\centering\arraybackslash}p{33mm}X@{}}
\toprule
gate & matrix or action & operational role \\
\midrule
$X$ & $\ket0\leftrightarrow\ket1$ & reversible classical NOT; bit flip \\
$Z$ & $\operatorname{diag}(1,-1)$ & relative phase; phase flip \\
$S$ & $\operatorname{diag}(1,i)$ & quarter turn about the Bloch $z$ axis \\
$T$ & $\operatorname{diag}(1,e^{i\pi/4})$ & non-Clifford eighth turn used in fault-tolerant synthesis \\
$H$ & $(X+Z)/\sqrt2$ & exchanges $X$ and $Z$ bases; creates and recombines alternatives \\
CNOT & $\ket{a,b}\mapsto\ket{a,a\oplus b}$ & coherent parity addition; entangling gate \\
CZ & $\ket{a,b}\mapsto(-1)^{ab}\ket{a,b}$ & symmetric controlled phase; graph-state edge \\
SWAP & $\ket{a,b}\mapsto\ket{b,a}$ & exchanges wire types or physical locations \\
\bottomrule
\end{tabularx}
\end{center}

The controlled gates are related by basis change:
\begin{equation}
 \operatorname{CNOT}=(I\otimes H)\operatorname{CZ}(I\otimes H).
 \label{eq:cnot-cz-basis}
\end{equation}
This identity is both an algebraic compilation rule and a preview of colour change in the ZX-calculus.  It says that adding the control bit into the target in the computational basis is the same as writing a conditional phase and reading the target in the complementary basis.

The set $\{H,S,\operatorname{CNOT}\}$ generates the Clifford group, which maps Pauli operators to Pauli operators under conjugation.  Adding $T$ gives a discrete approximately universal set.  Arbitrary rotations are then approximated to a requested precision; the exact real-angle gates used in an abstract algorithm and their discrete fault-tolerant implementations are different compilation levels.

\subsubsection{Measurements are typed processes}

A projective measurement in an orthonormal basis $\{\ket{b_j}\}$ has outcome probabilities
\[
 p(j)=\bra{b_j}\rho\ket{b_j}.
\]
If the outcome is retained classically, the process is not a unitary map from one qubit to one qubit.  It is a channel from a quantum system to a classical register, possibly together with a conditional post-measurement state.  If a particular outcome is selected, its unnormalized branch is the effect $\bra{b_j}$ acting on the state.  This broader typing will become visible in \cref{sec:drawing} when effects are drawn with no outgoing quantum wire.

Repeated preparation and measurement estimates a distribution.  It does not expose the amplitudes of a single unknown system.  To reconstruct a generic qubit state, one must measure complementary observables on an ensemble of identically prepared systems; for example, the expectation values of $X,Y,Z$ determine the Bloch vector in \cref{eq:bloch-density}.

\subsubsection{Mixed states and purification}

A density operator $\rho$ is positive semidefinite and has unit trace.  It represents either classical uncertainty over preparations or the local state of a subsystem entangled with something else.  Those two origins cannot in general be distinguished from measurements on the subsystem alone.  The purity satisfies
\[
 \frac1d\leq\Tr(\rho^2)\leq1,
\]
with equality $1$ precisely for a pure state.

Every mixed state can be purified.  If
\[
 \rho=\sum_j\lambda_j\proj j,
\]
then
\begin{equation}
 \ket\Psi_{SE}=\sum_j\sqrt{\lambda_j}\ket j_S\ket j_E
 \label{eq:purification}
\end{equation}
satisfies $\Tr_E\proj\Psi=\rho$.  Any two minimal purifications are related by a unitary on the purifying system.  This is the state-level analogue of Stinespring dilation: apparent non-unitarity of a subsystem can be represented as unitary evolution on a larger space followed by forgetting an environment.

For the Bell state, \cref{eq:purification} purifies $I/2$.  The subsystem is maximally mixed not because a definite but unknown pure state has been selected, but because the missing quantum information is stored in correlations.  This distinction becomes operational when access to the partner system restores interference.

A general measurement with outcomes $j$ is described by operators $M_j$ with $\sum_jM_j^\dagger M_j=I$.  The probability and conditional state are
\[
 p(j)=\Tr(M_j\rho M_j^\dagger),
 \qquad
 \rho_j=\frac{M_j\rho M_j^\dagger}{p(j)}.
\]
Ignoring the outcome gives the channel $\rho\mapsto\sum_jM_j\rho M_j^\dagger$.  The distinction between selecting and ignoring a branch will reappear graphically as effect versus discard.

\subsection{Entanglement and the failure of naive copying}

The controlled-NOT gate acts on basis states by
\[
 \operatorname{CNOT}\ket{a,b}=\ket{a,a\oplus b}.
\]
It does not measure a superposed control.  By linearity,
\[
 \ket{00}\xrightarrow{H\otimes I}
 \frac{\ket{00}+\ket{10}}{\sqrt2}
 \xrightarrow{\operatorname{CNOT}}
 \frac{\ket{00}+\ket{11}}{\sqrt2}=\ket{\Phi^+}.
\]
The Bell state $\ket{\Phi^+}$ cannot be written as a product of two one-qubit states.  Its two subsystems are not independent carriers of copied amplitudes; they are parts of one correlated state.

This distinction matters later for error correction.  There is no unitary that clones every unknown state,
\[
  \ket\psi\ket0\longmapsto\ket\psi\ket\psi
  \quad\text{for all }\ket\psi,
\]
because preservation of inner products would require
$\braket{\phi}{\psi}=\braket{\phi}{\psi}^2$ for arbitrary pairs.  Encoding instead maps a chosen orthogonal basis to correlated codewords and extends linearly.  It can copy \emph{classical labels in one basis} without copying an arbitrary quantum state \cite{wootters-zurek}.

\subsubsection{How to recognize two-qubit entanglement}

A general two-qubit pure state has the form
\begin{equation}
 \ket\Psi=a\ket{00}+b\ket{01}+c\ket{10}+d\ket{11}.
 \label{eq:two-qubit-state}
\end{equation}
Arrange the amplitudes into the coefficient matrix
\[
 M_\Psi=\begin{pmatrix}a&b\\c&d\end{pmatrix}.
\]
The state is a product precisely when $M_\Psi$ has rank one, equivalently
\begin{equation}
 ad-bc=0.
 \label{eq:two-qubit-separable}
\end{equation}
For $\ket{\Phi^+}$ the coefficient matrix is $I/\sqrt2$, so its determinant is $1/2$ and the state is entangled.

The Schmidt decomposition gives a basis-independent version.  There exist orthonormal one-qubit bases such that
\[
 \ket\Psi=\sqrt{\lambda_0}\ket{u_0}\ket{v_0}
          +\sqrt{\lambda_1}\ket{u_1}\ket{v_1},
 \qquad \lambda_0+\lambda_1=1.
\]
The state is a product if and only if one Schmidt coefficient vanishes.  It is maximally entangled when $\lambda_0=\lambda_1=1/2$.

For the Bell state, the reduced state of either qubit is maximally mixed:
\[
 \Tr_2\proj{\Phi^+}=\Tr_1\proj{\Phi^+}=\frac I2.
\]
There is no pure local state hidden behind the perfect correlations.  Measuring both qubits in the $Z$ basis produces equal random bits; measuring both in the $X$ basis also produces equal random bits.  The correlations persist across complementary bases even though each subsystem alone is completely featureless.

\begin{example}[A phase changes the correlation basis]
The state $\ket{\Phi^-}=(\ket{00}-\ket{11})/\sqrt2$ has the same computational-basis probabilities as $\ket{\Phi^+}$.  Applying $H\otimes H$ gives
\[
 (H\otimes H)\ket{\Phi^+}=\ket{\Phi^+},\qquad
 (H\otimes H)\ket{\Phi^-}=\frac{\ket{01}+\ket{10}}{\sqrt2}.
\]
Thus the relative sign, invisible in the first basis, becomes agreement versus disagreement in the second.  Error correction will exploit the same conversion when Hadamards turn phase flips into bit flips.
\end{example}

\subsubsection{A second proof of no-cloning}

Linearity makes the contradiction concrete.  Suppose a unitary copier correctly acts on the computational basis:
\[
 U\ket0\ket0=\ket0\ket0,
 \qquad
 U\ket1\ket0=\ket1\ket1.
\]
Then on $\ket+= (\ket0+\ket1)/\sqrt2$ it must give
\[
 U\ket+\ket0=\frac{\ket{00}+\ket{11}}{\sqrt2},
\]
whereas perfect cloning would require
\[
 \ket+\ket+=\frac{\ket{00}+\ket{01}+\ket{10}+\ket{11}}2.
\]
The first output is a Bell state and the second is a product state.  A basis-copying map is therefore possible and useful, but its linear extension creates correlations rather than universal copies.

\subsection{Interference as a computational resource}

Consider the sequence $HZH$ on $\ket0$:
\[
 \ket0\xrightarrow{H}\frac{\ket0+\ket1}{\sqrt2}
 \xrightarrow{Z}\frac{\ket0-\ket1}{\sqrt2}
 \xrightarrow{H}\ket1.
\]
Immediately after $Z$, a computational-basis measurement would still be uniform.  The second Hadamard converts a relative sign into a deterministic classical bit.  This three-step pattern---prepare coherent alternatives, write information into phase, recombine by interference---drives the two algorithms below.

Interference is sometimes described as if all amplitudes were simply added together.  The more precise statement is basis dependent.  If $U$ maps input basis states to amplitudes $U_{yx}$, then the amplitude of output $y$ is
\[
 \braket{y}{U\psi}=\sum_x U_{yx}\braket{x}{\psi}.
\]
Different computational paths contribute complex numbers to the \emph{same} output basis state.  Their phases can make the sum larger or smaller.  Alternatives that remain recorded in distinguishable environment or workspace states cannot interfere completely because they do not arrive at the same final state.

This observation explains the need for uncomputation.  Suppose two alternatives acquire the desired phases but leave different garbage states $\ket{g_0}$ and $\ket{g_1}$:
\[
 \frac{\ket0\ket{g_0}+e^{i\phi}\ket1\ket{g_1}}{\sqrt2}.
\]
If $\braket{g_0}{g_1}=0$, measuring only the first qubit shows no phase interference.  Resetting both garbage states to a common reference erases the which-path record and restores coherence.

\begin{keyidea}[The operational pattern]
A quantum algorithm does not gain useful output merely by producing a large superposition.  It must (i) prepare alternatives, (ii) correlate a desired global property with relative phase or amplitude, (iii) remove irrelevant which-path information, and (iv) recombine in a basis where the property becomes a classical outcome.
\end{keyidea}

\subsection{Oracles and what a query counts}

An oracle is a typed interface, not necessarily a literal memory device.  For a Boolean function the reversible form
\[
 U_f\ket{x,y}=\ket{x,y\oplus f(x)}
\]
specifies how an algorithm may access $f$ coherently.  A query-complexity result counts uses of this interface while allowing other gates at no cost.  It isolates information access from the engineering cost of realizing the oracle.

If the target is prepared in $\ket-$, phase kickback turns the bit oracle into a phase oracle on the control:
\begin{equation}
 O_f\ket x=(-1)^{f(x)}\ket x.
 \label{eq:boolean-phase-oracle}
\end{equation}
Conversely, with standard additional controls one can often convert between bit and phase access, but the exact cost and assumptions should be stated.  In more general algorithms an oracle may mark a subspace, perform modular arithmetic, load data, or implement a block encoding of a matrix.  Two papers using the word ``query'' may therefore be counting access to different primitive operations.

Global phase remains unobservable.  Replacing $f$ by $1\oplus f$ multiplies $O_f$ by $-1$, so a phase oracle alone cannot distinguish a Boolean function from its complement.  Deutsch's promise asks only whether the two values agree, which is invariant under complementation.  The oracle type and the requested output are matched.

Controlled access is another assumption.  A formal unitary $U$ does not automatically grant a controlled-$U$ at the same query cost if $U$ is an unknown black box with an unspecified global phase.  Standard query models declare controlled access when it is needed.  Careful algorithm statements separate what follows from unitarity from what is part of the oracle contract.

\subsection{Deutsch's problem}

Let $f:\{0,1\}\to\{0,1\}$ be promised either constant or balanced.  Classically, one query cannot determine the promise with certainty: the observed value at one input is compatible with one constant and one balanced truth table.  Two queries suffice.

The quantum oracle is the reversible map from \cref{eq:reversible-embedding},
\[
 U_f\ket{x,y}=\ket{x,y\oplus f(x)}.
\]
Prepare $\ket0\ket1$, apply a Hadamard to each qubit, call $U_f$ once, apply $H$ to the first qubit, and measure it.

\begin{figure}[ht]
\centering
\begin{quantikz}[row sep=0.55cm,column sep=0.55cm]
\lstick{$\ket0$} & \gate{H} & \gate[wires=2][1.5cm]{U_f} & \gate{H} & \meter{} \\
\lstick{$\ket1$} & \gate{H} &                       & \qw      & \qw
\end{quantikz}
\caption{Deutsch's circuit makes the parity $f(0)\oplus f(1)$ observable with one oracle call.}
\label{fig:deutsch-circuit}
\end{figure}

The second register becomes $\ket-$.  Since $X\ket-=-\ket-$,
\begin{equation}
 U_f\ket{x}\ket-=(-1)^{f(x)}\ket{x}\ket-.
 \label{eq:phase-kickback}
\end{equation}
The oracle therefore leaves the ancilla factored and transforms the first qubit to
\[
 \frac{(-1)^{f(0)}\ket0+(-1)^{f(1)}\ket1}{\sqrt2}.
\]
If $f$ is constant, this is $\ket+$ up to a global sign; if $f$ is balanced, it is $\ket-$ up to a global sign.  The final Hadamard returns $\ket0$ or $\ket1$ respectively.  Thus the measurement is exactly
\[
 f(0)\oplus f(1).
\]

\begin{proposition}[Deutsch's query separation]
Under the constant-or-balanced promise, one quantum query determines the class of $f$ with certainty, whereas every exact classical algorithm requires two queries.
\end{proposition}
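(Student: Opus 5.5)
The proof has two halves: an upper bound, already essentially computed in the text, and a classical lower bound by an adversary argument.

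\emph{Quantum upper bound.} I would formalize the calculation preceding the statement. After $H\otimes H$ acts on $\ket0\ket1$, the state is $\ket+\ket-$. By linearity and the phase-kickback identity \cref{eq:phase-kickback}, the single call to $U_f$ yields
\[
 \frac{(-1)^{f(0)}\ket0+(-1)^{f(1)}\ket1}{\sqrt2}\otimes\ket- .
\]
Factoring out the global sign $(-1)^{f(0)}$ gives $\ket+$ when $f(0)\oplus f(1)=0$ and $\ket-$ otherwise. The final Hadamard maps these to $\ket0$ and $\ket1$, so the measurement returns $f(0)\oplus f(1)$ with probability one. Under the promise, this bit is exactly the class label: constant or balanced.

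\emph{Classical lower bound.} I would argue by adversary. Let a deterministic classical algorithm make at most one query, and suppose without loss of generality that it makes exactly one, at some input $x$ fixed in advance (with no prior information, the choice of $x$ cannot be adaptive). It observes a value $b$. There are then two consistent functions: the constant function $f\equiv b$, and the balanced function with $f(x)=b$ and $f(1\oplus x)=1\oplus b$. Their transcripts are identical, so the output is identical, and it must be wrong on one of them. The algorithm therefore fails exactness.

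I would extend this to randomized algorithms that are required to be exact, that is, correct with probability one. Every random seed determines a deterministic one-query algorithm, and each such algorithm errs on some input. Hence for each seed the output on one of the paired inputs is wrong, so some function is misclassified with positive probability; more carefully, for a seed querying $x$, both consistent functions receive the same output distribution. Zero queries are handled the same way, since the answer is then independent of $f$. Two queries trivially suffice, because they reveal the whole truth table.

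The main obstacle is stating precisely what ``exact classical algorithm'' and ``query'' mean. I need that the classical model accesses $f$ only through evaluations of $f(x)$ (equivalently, the classical reversible $U_f$ on basis states). With that settled, the indistinguishability argument is immediate.
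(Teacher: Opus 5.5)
Your proposal is correct and follows the paper's argument. The quantum half is the same phase-kickback calculation showing that the measured bit is $f(0)\oplus f(1)$ with certainty, and the classical half is the paper's observation that one queried value is consistent with one constant and one balanced truth table. Your extension to zero-error randomized algorithms goes beyond what the notes spell out, and it is sound: each seed gives a deterministic one-query algorithm that errs on some function, there are only four functions, and so some function is misclassified with probability at least $1/4$.
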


\subsubsection{The four functions and the full calculation}

There are exactly four one-bit Boolean functions:
\begin{center}
\begin{tabular}{@{}cclc@{}}
\toprule
$f(0)$ & $f(1)$ & function & promise class \\
\midrule
0&0&$f(x)=0$&constant\\
1&1&$f(x)=1$&constant\\
0&1&$f(x)=x$&balanced\\
1&0&$f(x)=1\oplus x$&balanced\\
\bottomrule
\end{tabular}
\end{center}
The classical lower bound follows immediately: after querying only $x=0$, either observed value is compatible with one row from each promise class.

For the quantum circuit, write each stage explicitly.  The prepared state is
\[
 \ket{\psi_0}=\ket0\ket1.
\]
After the first pair of Hadamards,
\begin{align*}
 \ket{\psi_1}
 &=\ket+\ket-\\
 &=\frac12(\ket{00}-\ket{01}+\ket{10}-\ket{11}).
\end{align*}
For a fixed value of $x$, the target superposition obeys
\[
 \frac{\ket{f(x)}-\ket{1\oplus f(x)}}{\sqrt2}
 =(-1)^{f(x)}\ket-.
\]
Hence one oracle call produces
\[
 \ket{\psi_2}
 =\frac{(-1)^{f(0)}\ket0+(-1)^{f(1)}\ket1}{\sqrt2}\otimes\ket-.
\]
The target is separable and can now be ignored.  Applying the final Hadamard gives the first-register amplitudes
\begin{align*}
 \braket0{\psi_{\mathrm{out}}}
   &=\frac{(-1)^{f(0)}+(-1)^{f(1)}}2,\\
 \braket1{\psi_{\mathrm{out}}}
   &=\frac{(-1)^{f(0)}-(-1)^{f(1)}}2.
\end{align*}
Exactly one of these amplitudes is nonzero.  Equal truth-table values give outcome $0$; unequal values give outcome $1$.

The target preparation $\ket-$ is essential.  If it were prepared in $\ket+$, then $X\ket+=\ket+$ and the oracle would write no phase at all.  If it were prepared in a computational-basis state and then discarded, the first register would generally become entangled with the target and the deterministic interference would be lost.

\subsubsection{What the algorithm does and does not compute}

The output is the parity $f(0)\oplus f(1)$, not an ordered pair containing both values.  This is an early example of a common quantum-algorithm design principle: a promise problem may ask for less information than a full input reconstruction.  Quantum interference can be arranged to retain exactly that quotient information.

The speedup is also a query statement.  It compares the number of uses of a black-box oracle when all other gates are counted as free.  For such a small problem the circuit is not a claim of practical runtime advantage.  Its value is conceptual: a relative phase created by one coherent oracle call can encode a global property that one exact classical query cannot determine.

The advantage is not well described as ``evaluating both inputs at once.''  The circuit does not return both values.  It engineers interference so that one global relation between them is the only surviving output \cite{deutsch,nielsen-chuang}.

\subsection{Grover search and amplitude amplification}

Let one item $w$ be marked among $N$ unstructured possibilities.  The phase oracle acts by
\[
 O_w\ket{x}=(-1)^{[x=w]}\ket{x}.
\]
Begin in the uniform state
\[
 \ket{s}=\frac1{\sqrt N}\sum_{x=0}^{N-1}\ket{x}.
\]
The diffusion operator
\[
 D=2\proj{s}-I
\]
reflects amplitudes about their mean.  One Grover iteration is $G=DO_w$.  The evolution stays in the two-dimensional plane spanned by the marked state $\ket w$ and the uniform superposition $\ket r$ of unmarked states.  Writing
\[
 \ket s=\sin\theta\ket w+\cos\theta\ket r,
 \qquad \sin\theta=\frac1{\sqrt N},
\]
each iteration rotates the state by $2\theta$ toward $\ket w$.  Approximately $\frac\pi4\sqrt N$ oracle calls suffice, and this quadratic scaling is optimal for black-box unstructured search \cite{grover,nielsen-chuang}.

For $N=4$, the mechanism is exact after one iteration.  Before the oracle all four amplitudes are $1/2$.  The oracle changes the marked amplitude to $-1/2$; their mean is then $1/4$.  Reflection about the mean sends the marked amplitude to $1$ and every unmarked amplitude to $0$.

\begin{figure}[ht]
\centering
\begin{tikzpicture}[x=9mm,y=22mm]
  \foreach \x/\lab in {0/00,1/01,2/10,3/11}{
    \draw[MidGray] (\x,-0.02) -- (\x,0.02);
    \node[below,font=\sffamily\scriptsize] at (\x,-0.04) {$\ket{\lab}$};
  }
  \foreach \x in {0,1,2,3}{\draw[DeepBlue,line width=2pt] (\x,0) -- (\x,0.5);}
  \node[above,font=\sffamily\small,color=DeepBlue] at (1.5,0.57) {uniform amplitudes};
  \begin{scope}[xshift=47mm]
    \foreach \x/\lab in {0/00,1/01,2/10,3/11}{
      \draw[MidGray] (\x,-0.52) -- (\x,0.52);
      \node[below,font=\sffamily\scriptsize] at (\x,-0.55) {$\ket{\lab}$};
    }
    \draw[DeepBlue,line width=2pt] (0,0)--(0,0.5);
    \draw[Coral,line width=2pt] (1,0)--(1,-0.5);
    \draw[DeepBlue,line width=2pt] (2,0)--(2,0.5);
    \draw[DeepBlue,line width=2pt] (3,0)--(3,0.5);
    \node[above,font=\sffamily\small,color=Coral] at (1.5,0.57) {phase oracle};
  \end{scope}
  \begin{scope}[xshift=94mm]
    \foreach \x/\lab in {0/00,1/01,2/10,3/11}{
      \draw[MidGray] (\x,-0.02) -- (\x,1.02);
      \node[below,font=\sffamily\scriptsize] at (\x,-0.04) {$\ket{\lab}$};
    }
    \draw[Coral,line width=3pt] (1,0)--(1,1);
    \node[above,font=\sffamily\small,color=Coral] at (1.5,1.04) {after diffusion};
  \end{scope}
  \draw[flow] (34mm,7mm) -- (43mm,7mm);
  \draw[flow] (81mm,7mm) -- (90mm,7mm);
\end{tikzpicture}
\caption{For four items, phase marking followed by inversion about the mean concentrates all amplitude on the marked state (shown here as $\ket{01}$).}
\label{fig:grover-four}
\end{figure}

\subsubsection{Two reflections make a rotation}

The oracle $O_w=I-2\proj w$ is a reflection: it changes the sign of the component parallel to $\ket w$ and fixes the orthogonal hyperplane.  The diffusion operator $D=2\proj s-I$ is the reflection that fixes $\ket s$ and negates the orthogonal complement.  The product of two reflections is a rotation in the plane generated by their normals.

Let
\[
 \ket r=\frac1{\sqrt{N-1}}\sum_{x\neq w}\ket x,
 \qquad
 \ket s=\sin\theta\ket w+\cos\theta\ket r,
 \qquad
 \sin\theta=\frac1{\sqrt N}.
\]
In the ordered basis $(\ket w,\ket r)$,
\[
 O_w=\begin{pmatrix}-1&0\\0&1\end{pmatrix},
 \qquad
 D=\begin{pmatrix}
 -\cos2\theta&\sin2\theta\\
 \sin2\theta&\cos2\theta
 \end{pmatrix},
\]
and therefore
\begin{equation}
 G=DO_w=
 \begin{pmatrix}
 \cos2\theta&\sin2\theta\\
 -\sin2\theta&\cos2\theta
 \end{pmatrix}
 \label{eq:grover-rotation}
\end{equation}
up to the orientation chosen for the plane.  Iterating gives
\begin{equation}
 G^k\ket s
 =\sin((2k+1)\theta)\ket w
  +\cos((2k+1)\theta)\ket r.
 \label{eq:grover-iterate}
\end{equation}
The success probability is $\sin^2((2k+1)\theta)$.  The best integer $k$ is the one for which $(2k+1)\theta$ lies nearest $\pi/2$ without substantially overshooting.  For large $N$, $\theta\approx N^{-1/2}$ and
\[
 k\approx\frac{\pi}{4}\sqrt N-\frac12.
\]
Continuing to iterate after the optimum rotates the state away from the marked vector.  Grover amplification is coherent oscillation, not a monotone relaxation process.

\begin{figure}[ht]
\centering
\begin{tikzpicture}[scale=1.05]
  \draw[flow] (-0.4,0)--(5.4,0) node[right,font=\sffamily\small] {$\ket r$};
  \draw[flow] (0,-0.4)--(0,4.3) node[above,font=\sffamily\small] {$\ket w$};
  \draw[DeepBlue,line width=1.3pt,-{Latex[length=2.2mm]}] (0,0)--(4.3,1.1) node[right,font=\sffamily\small] {$\ket s$};
  \draw[Coral,line width=1.3pt,-{Latex[length=2.2mm]}] (0,0)--(3.55,2.55) node[right,font=\sffamily\small] {$G\ket s$};
  \draw[SpiderGreen,line width=1.3pt,-{Latex[length=2.2mm]}] (0,0)--(2.1,3.55) node[above right,font=\sffamily\small] {$G^2\ket s$};
  \draw[MidGray,dashed] (4.3,1.1) arc[start angle=14.4,end angle=50.5,radius=44mm];
  \node[font=\sffamily\small,DeepBlue] at (3.7,0.55) {$\theta$};
  \node[font=\sffamily\small,Coral] at (4.05,1.75) {$2\theta$};
  \begin{scope}[xshift=78mm]
    \node[boxnode,text width=60mm] at (0,2.3) {oracle reflection\\across the unmarked axis};
    \node[greennode,text width=60mm] at (0,0.8) {diffusion reflection\\across the uniform state};
    \node[coralnode,text width=60mm] at (0,-0.7) {combined effect\\rotation by $2\theta$};
  \end{scope}
\end{tikzpicture}
\caption{Grover's iterate is confined to a two-dimensional invariant plane even though the Hilbert space has dimension $N$.}
\label{fig:grover-rotation}
\end{figure}

\subsubsection{Inversion about the mean}

For an amplitude vector $\sum_x a_x\ket x$, define the mean
\[
 \bar a=\frac1N\sum_x a_x.
\]
Since $\braket{s}{\psi}=\sqrt N\,\bar a$, the diffusion operator acts componentwise as
\begin{equation}
 a_x\longmapsto 2\bar a-a_x.
 \label{eq:inversion-mean}
\end{equation}
This formula is simply the coordinate expression of reflection about $\ket s$.  For $N=4$ with one marked state, the post-oracle amplitudes are $(-1/2,1/2,1/2,1/2)$ in a suitable ordering.  Their mean is $1/4$, so \cref{eq:inversion-mean} sends the marked amplitude to $1$ and every other amplitude to $0$.

For more than one marked item, replace $\ket w$ by the uniform superposition of the $M$ marked basis states.  Then $\sin\theta=\sqrt{M/N}$, and the query count becomes order $\sqrt{N/M}$ when $M$ is known.  If the number of marked items is unknown, one must modify the schedule; blindly using the single-solution stopping time can overshoot.

\subsubsection{Amplitude amplification beyond search}

Grover's geometry does not depend on the basis states being database entries.  Suppose a unitary $A$ prepares
\[
 A\ket0=\sqrt p\ket{\mathrm{good}}+\sqrt{1-p}\ket{\mathrm{bad}},
\]
where the good and bad subspaces can be distinguished coherently.  Reflections about the prepared state and the good subspace rotate amplitude toward the good component, reducing the number of repetitions from order $1/p$ to order $1/\sqrt p$.  Unstructured search is the special case $p=1/N$.

This generalization makes the conceptual continuity with Deutsch's algorithm clearer.  Both algorithms create a low-dimensional interference problem inside a much larger Hilbert space.  The oracle marks a promised property coherently; a basis change or reflection makes that property observable; the rest of the Hilbert space is organized so that it does not retain which-path information.

\subsection{How many ancillas does a non-unitary process need?}

The question has no single answer until ``non-unitary process'' is typed correctly.

\subsubsection{A contraction as a postselected block}

Let $A:\cH\to\cH$ satisfy $\lVert A\rVert\leq1$.  Define the defect operators
\[
 D_A=(I-A^\dagger A)^{1/2},\qquad D_{A^\dagger}=(I-AA^\dagger)^{1/2}.
\]
The Julia--Halmos block operator
\begin{equation}
 U_A=
 \begin{pmatrix}
 A & D_{A^\dagger}\\
 D_A & -A^\dagger
 \end{pmatrix}
 :\cH\oplus\cH\longrightarrow\cH\oplus\cH
 \label{eq:halmos-dilation}
\end{equation}
is unitary.  Identifying the direct sum with a one-qubit ancilla, $A$ is the ancilla-$0$ block.  Preparing and postselecting that ancilla therefore realizes $A$ probabilistically.  In this mathematical sense, zero extra qubits suffice if $A$ is already unitary; otherwise one ancilla qubit suffices for the block dilation.  Circuit synthesis constraints may demand more workspace.

\subsubsection{A quantum channel as reduced unitary evolution}

A channel $\cE$ has a Kraus representation
\[
 \cE(\rho)=\sum_{j=1}^{r}K_j\rho K_j^\dagger,
 \qquad \sum_jK_j^\dagger K_j=I.
\]
The Stinespring isometry
\[
 V\ket\psi=\sum_{j=1}^{r}K_j\ket\psi\otimes\ket j
\]
satisfies $\cE(\rho)=\Tr_E(V\rho V^\dagger)$.  The minimum environment dimension is the Kraus rank, equivalently the rank of the Choi matrix $J(\cE)$ \cite{choi,stinespring,watrous}.  Hence the minimum number of environment qubits is
\begin{equation}
 q_{\min}=\left\lceil\log_2 \rank J(\cE)\right\rceil.
 \label{eq:channel-ancilla-count}
\end{equation}
A unitary channel has rank one and needs no environment; qubit dephasing and amplitude damping have rank two and need one environment qubit; the generic qubit depolarizing channel has rank four and needs two.

\subsubsection{Three channel examples}

For dephasing with strength $p$,
\[
 \cE_{\mathrm{deph}}(\rho)=(1-p)\rho+pZ\rho Z,
\]
one may choose Kraus operators $K_0=\sqrt{1-p}\,I$ and $K_1=\sqrt p\,Z$.  A one-qubit environment records two alternatives.  Tracing it out suppresses the off-diagonal entries of $\rho$ by the factor $1-2p$ while leaving computational-basis populations unchanged.

Amplitude damping, which models relaxation $\ket1\to\ket0$, has
\[
 K_0=\begin{pmatrix}1&0\\0&\sqrt{1-\gamma}\end{pmatrix},
 \qquad
 K_1=\begin{pmatrix}0&\sqrt\gamma\\0&0\end{pmatrix}.
\]
An isometry on system and one environment qubit is specified by
\begin{align*}
 \ket0\ket0_E&\longmapsto\ket0\ket0_E,\\
 \ket1\ket0_E&\longmapsto
 \sqrt{1-\gamma}\ket1\ket0_E+\sqrt\gamma\ket0\ket1_E.
\end{align*}
The environment state $\ket1_E$ is a record that an excitation was lost.  A unitary extension on the unused input subspace always exists because the displayed images are orthonormal.

The depolarizing channel
\[
 \cE_{\mathrm{dep}}(\rho)
 =(1-p)\rho+\frac p3(X\rho X+Y\rho Y+Z\rho Z)
\]
has four linearly independent Kraus operators for generic $p$.  Its Choi matrix has rank four, so a minimal Stinespring environment has dimension four, namely two qubits.  This count concerns an exact unrestricted dilation.  A particular hardware architecture might use additional ancillas for state preparation, control decomposition, reset, or fault-tolerant implementation.

\subsubsection{Postselection and trace preservation are different tasks}

The block dilation in \cref{eq:halmos-dilation} and the Stinespring dilation solve different problems.  A contraction $A$ appears as one selected block of a larger unitary.  If the ancilla is measured and only the successful outcome is retained, then a normalized input $\ket\psi$ succeeds with probability
\[
 p_{\mathrm{succ}}=\lVert A\ket\psi\rVert^2.
\]
The conditional output is $A\ket\psi/\sqrt{p_{\mathrm{succ}}}$.  This is a non-deterministic branch.

A quantum channel is deterministic and trace preserving after the environment is ignored.  No postselection is required, but coherence between different environment records is lost from the system description.  Confusing these two tasks is the main source of apparently contradictory ancilla counts.

\begin{center}
\begin{tabularx}{0.96\textwidth}{@{}>{\sffamily\bfseries}lXXX@{}}
\toprule
target & enlarged operation & final environment step & minimal size \\
\midrule
unitary $U$ & $U$ itself & none & zero ancillas \\
contraction $A$ & Julia--Halmos unitary & prepare and postselect block & one qubit suffices \\
channel $\cE$ & Stinespring isometry/unitary & trace out environment & $\lceil\log_2\rank J(\cE)\rceil$ qubits \\
\bottomrule
\end{tabularx}
\end{center}

\subsection{Problems for Thinking Quantum Circuits}

\begin{enumerate}[label=\textbf{1.\arabic*.},leftmargin=3.3em]
\item \textbf{Reversible embedding.}  Write the $8\times8$ permutation matrix for the Toffoli gate in the basis $\ket{000},\ldots,\ket{111}$.  Verify directly that it is its own inverse.  Which basis vectors move?

\item \textbf{Relative phase.}  For $\ket{\psi_\phi}=(\ket0+e^{i\phi}\ket1)/\sqrt2$, compute the probabilities of $\pm$ outcomes in an $X$ measurement and of the two eigenstates of $Y$.  Identify $\phi$ modulo $2\pi$ from the two expectation values.

\item \textbf{Separability.}  Apply the determinant test \cref{eq:two-qubit-separable} to each Bell state and to $(\ket{00}+\ket{01})/\sqrt2$.  Find an explicit product factorization in the separable case.

\item \textbf{Deutsch with the wrong ancilla.}  Run the Deutsch circuit symbolically when the second qubit begins in $\ket0$, the first Hadamard is applied, and the target Hadamard is omitted.  Compute the reduced density matrix of the first qubit after the oracle for all four functions.  Which information has become entangled with the target?

\item \textbf{Grover stopping time.}  For $N=8$ and one marked item, calculate $\theta$, the success probabilities after $k=0,1,2,3$ Grover iterations, and the best integer $k$.  Repeat for two marked items.

\item \textbf{A channel ancilla count.}  Compute the Choi matrices of the complete dephasing channel $\rho\mapsto(\rho+Z\rho Z)/2$ and the completely depolarizing qubit channel $\rho\mapsto I/2$.  Determine their ranks and minimal environment dimensions.
\end{enumerate}

\paragraph{Checks.}  In Problem 1 only $\ket{110}$ and $\ket{111}$ are exchanged.  In Problem 2, $\langle X\rangle=\cos\phi$ and $\langle Y\rangle=\sin\phi$.  In Problem 4, the constant functions leave the first qubit in $\ket{+}\!\bra{+}$, while the balanced functions give $I/2$; in the balanced cases the target records the input branch.  For Problem 5, the success probability is given by \cref{eq:grover-iterate}; plotting it against $k$ makes the overshoot visible.  These checks indicate the intended convention but not every intermediate step.

\begin{bridgebox}[From thinking to drawing]
The preceding constructions already use diagrams implicitly: boxes are typed maps, wires carry vector spaces, sequential wiring is composition, and parallel wiring is tensor product.  The next section makes that grammar explicit and asks when pictures can calculate, not merely illustrate.
\end{bridgebox}

\subsection*{End-of-lecture synthesis: thinking operationally}

The circuit model becomes reliable when every calculation begins with a question about \emph{type}.  Is the object a reversible Boolean map, a unitary, a contraction used in a successful branch, or a trace-preserving channel?  The distinction determines what may be copied, what must be uncomputed, whether a measurement record remains, and how many environmental degrees of freedom an implementation needs.  Much confusion about quantum circuits is a type error disguised as physical intuition.

Three further habits organize the lecture.  First, phase is observable only through a reference: a relative phase must be converted into population by interference in a chosen basis.  Second, entanglement is a correlation structure, not a stock of hidden copies.  Its reduced states can be maximally uncertain even when the joint state is pure, which is exactly why it supports protocols unavailable to classical shared randomness.  Third, an algorithm should be read as a controlled flow of information.  A useful pattern is
\[
\begin{aligned}
 &\text{prepare a coherent workspace}
 \;\longrightarrow\;
 \text{mark alternatives by phase}\\[-1mm]
 &\hspace{28mm}\longrightarrow\;
 \text{uncompute records}
 \;\longrightarrow\;
 \text{interfere and measure}.
\end{aligned}
\]
Deutsch's algorithm realizes the smallest instance of phase kickback; Grover's algorithm turns the same principle into a repeated rotation in a two-dimensional invariant subspace.  Their speedups are query statements, so the cost of constructing and decomposing the oracle must be reported separately from the number of oracle calls.

Before accepting a circuit argument, ask five questions: What are the input and output spaces?  Which information is retained by an ancilla or environment?  Which phases are global and which are relative?  What claim is exact, asymptotic, or architecture-dependent?  Finally, where is the decisive interference visible?  These questions are also a specification for the diagrammatic language of the next lecture.  A good drawing should make the types, retained records, relative phases, and points of interference harder---not easier---to overlook.

\clearpage
\section{Drawing Quantum Circuits}
\label{sec:drawing}

\subsection{A diagram is a typed expression}

Fix a finite-dimensional Hilbert space $V$, usually $V=\C^2$.  A box with $m$ input wires and $n$ output wires denotes a linear map
\[
  A:V^{\otimes m}\longrightarrow V^{\otimes n}.
\]
The diagram is not an artist's rendering of a matrix.  It is a typed expression in which the boundary tells us the domain and codomain.  Connecting an output to a compatible input means composition.  Placing two diagrams side by side means tensor product.  These rules form the string-diagram language of monoidal categories \cite{selinger}.

\begin{figure}[ht]
\centering
\begin{tikzpicture}[scale=0.95,transform shape]
  \draw[wire] (0,1) -- (1.3,1); \node[gate] (u) at (2,1) {$U$};
  \draw[wire] (u.east) -- ++(1.2,0); \node[gate] (v) at (4.6,1) {$V$};
  \draw[wire] (v.east) -- ++(1.3,0);
  \node[font=\sffamily\small] at (3.1,0.25) {sequential: $VU$};
  \begin{scope}[xshift=75mm]
    \draw[wire] (0,1.35)--(1.2,1.35); \node[gate] (a) at (1.9,1.35) {$A$}; \draw[wire] (a.east)--++(1.2,0);
    \draw[wire] (0,0.65)--(1.2,0.65); \node[gate] (b) at (1.9,0.65) {$B$}; \draw[wire] (b.east)--++(1.2,0);
    \node[font=\sffamily\small] at (1.9,0.05) {parallel: $A\otimes B$};
  \end{scope}
\end{tikzpicture}
\caption{The geometry distinguishes composition from tensor product before coordinates are chosen.}
\label{fig:two-compositions}
\end{figure}

Unitary gates are the reversible $n$-to-$n$ cases, but the graphical language also includes:
\[
 \ket\psi:\C\to V^{\otimes n},
 \qquad
 \bra\phi:V^{\otimes m}\to\C.
\]
The first is a state preparation with no quantum input wire; the second is an effect with no quantum output wire.  Writing them as boundary pieces of the same language avoids pretending that all physically meaningful boxes are square unitaries.

\subsubsection{Indices explain the drawing rules}

Choose bases and write a map $A:U\to V$ as components $A^i{}_j$.  If $B:V\to W$, connecting the output of $A$ to the input of $B$ means summing the shared index:
\[
 (BA)^k{}_j=\sum_i B^k{}_iA^i{}_j.
\]
For maps $A:U\to V$ and $C:X\to Y$, juxtaposition gives
\[
 (A\otimes C)^{i\alpha}{}_{j\beta}=A^i{}_jC^\alpha{}_\beta.
\]
The internal labels are dummy summation indices.  A continuous deformation that preserves which ports are connected does not change the contraction, because it does not change the index expression.  This is the algebra behind planar deformation.

A crossing requires a decision.  In ordinary symmetric quantum circuits it denotes the swap map
\[
 \sigma_{U,V}:U\otimes V\longrightarrow V\otimes U,
 \qquad u\otimes v\longmapsto v\otimes u.
\]
It is not generally safe to erase a crossing as though one wire had passed through another without an operation.  In braided or fermionic settings, the crossing may carry additional phase or braiding data.  The graphical language inherits the structural rules of the category being represented.

\subsubsection{Types catch mistakes before matrices do}

Consider $A:V\to V\otimes V$ and $B:V\otimes V\to V$.  The composites $BA:V\to V$ and $AB:V\otimes V\to V\otimes V$ are both defined but need not be equal and do not even have the same type.  Their diagrams have different external boundaries.  This elementary observation becomes powerful in a large calculation: many proposed circuit identities fail before any entry is multiplied because the numbers, directions, or labels of boundary wires disagree.

Classical outputs also have a different type from coherent quantum wires.  A measurement outcome can control a later unitary, but the control line then carries a classical label.  Treating it as an ordinary qubit wire would incorrectly preserve coherence between different measurement outcomes.  Later, when discarding is introduced, it too will be an explicit typed operation rather than a wire that simply stops.

\begin{example}[Trace as a closed wire]
Let $A:V\to V$.  Bending its output around to its input with the cup and cap of the next subsection gives
\[
 \sum_i\bra iA\ket i=\Tr(A).
\]
The diagram is closed and therefore denotes a scalar.  Its lack of boundary wires is a type statement: no quantum system remains after all indices have been contracted.
\end{example}

\subsection{The categorical structure in plain language}

The formal setting for these pictures is a symmetric monoidal category with a dagger and chosen duals.  The terminology is compact; the operational content is familiar.

\begin{itemize}
\item \textbf{Objects} are system types such as $V$, $V\otimes W$, and the trivial system $\C$.
\item \textbf{Morphisms} are processes $A:V\to W$.
\item \textbf{Composition} joins compatible processes in sequence.
\item The \textbf{monoidal product} places systems and processes in parallel.
\item The \textbf{symmetry} swaps two parallel systems.
\item The \textbf{dagger} reverses a process and takes its adjoint.
\item A chosen \textbf{dual} supplies cups and caps satisfying the snake equations.
\end{itemize}

Associativity and unit laws are normally suppressed in a circuit drawing.  The threefold tensor products $(U\otimes V)\otimes W$ and $U\otimes(V\otimes W)$ are canonically identified, and a trivial wire $\C$ is not drawn.  A coherence theorem guarantees that diagrams built only from these structural identifications agree whenever their connectivity agrees.  This is why one can omit a forest of parentheses without making the semantics ambiguous.

The dagger flips inputs and outputs:
\[
 (BA)^\dagger=A^\dagger B^\dagger,
 \qquad
 (A\otimes B)^\dagger=A^\dagger\otimes B^\dagger.
\]
A unitary is a morphism with $U^\dagger U=UU^\dagger=I$.  An isometry has only $V^\dagger V=I$ and may have more output dimensions than input dimensions.  An effect is the dagger of a state.  Thus several definitions that look unrelated in matrix notation become orientation statements in the same language.

Compact closure adds the controlled bending of wires.  It does not imply that every physical process is time reversible.  Bending an input into an output uses the algebraic dual and a Bell tensor; it is not a laboratory operation that turns a dissipative channel backwards in time.  This distinction matters when a diagram is used as a proof rather than a spacetime cartoon.

Symmetric monoidal diagrams are also topological only in a limited sense.  Stretching and sliding boxes without changing order or connectivity is sound.  Cutting a wire, moving one box through another, deleting a crossing, or changing a knot can require an equation not supplied by the structural axioms.  The diagram is therefore rigid enough to catch types and flexible enough to ignore irrelevant coordinates.

\subsection{Cups, caps, and the controlled bending of wires}

Choose an orthonormal basis $\{\ket i\}_{i=1}^{d}$ of $V$ and define the unnormalized Bell tensor
\begin{equation}
 \ket\Omega=\sum_{i=1}^{d}\ket i\otimes\ket i,
 \qquad
 \bra\Omega=\sum_{i=1}^{d}\bra i\otimes\bra i.
 \label{eq:cup-cap}
\end{equation}
The state $\ket\Omega:\C\to V\otimes V$ is drawn as a cup and the effect $\bra\Omega:V\otimes V\to\C$ as a cap.  For a qubit, $\ket\Omega=\ket{00}+\ket{11}$.  The normalized Bell state is $\ket\Omega/\sqrt d$, but the unnormalized convention makes the wire-straightening identity exact.

\begin{figure}[ht]
\centering
\begin{tikzpicture}[scale=1.05]
  \draw[wire,rounded corners=7mm] (0,1.2) -- (0,0.35) -- (1.2,0.35) -- (1.2,1.2);
  \node[below,font=\sffamily\small] at (0.6,0.15) {cup $\ket\Omega$};
  \draw[wire,rounded corners=7mm] (3.0,0.15) -- (3.0,1.0) -- (4.2,1.0) -- (4.2,0.15);
  \node[above,font=\sffamily\small] at (3.6,1.18) {cap $\bra\Omega$};
  \draw[flow] (4.9,0.65)--(5.7,0.65);
  \draw[wire,rounded corners=5mm] (6.0,0.2)--(6.0,1.0)--(6.75,1.0)--(6.75,0.2);
  \draw[wire,rounded corners=5mm] (6.75,0.2)--(6.75,-0.55)--(7.5,-0.55)--(7.5,0.2);
  \node[font=\sffamily\small] at (8.45,0.25) {$=\ I_V$};
  \draw[flow] (9.1,0.25)--(9.9,0.25);
  \draw[wire] (10.2,-0.55)--(10.2,1.0);
\end{tikzpicture}
\caption{The cup and cap are dual choices.  Partial contraction gives the snake, or yanking, identity.}
\label{fig:cup-cap-snake}
\end{figure}

Algebraically,
\begin{equation}
 (I_V\otimes\bra\Omega)(\ket\Omega\otimes I_V)=I_V,
 \qquad
 (\bra\Omega\otimes I_V)(I_V\otimes\ket\Omega)=I_V.
 \label{eq:snake}
\end{equation}
One verifies the first equation on a basis vector:
\[
 \sum_i \ket i\,\braket{i}{j}=\ket j.
\]
By linearity it holds on all of $V$.

Why may the picture be rotated?  The precise answer is tensor--Hom adjunction, not freehand topology:
\[
 \operatorname{Hom}(\C,V\otimes V)
 \cong \operatorname{Hom}(V^*,V),
 \qquad
 \operatorname{Hom}(V\otimes V,\C)
 \cong \operatorname{Hom}(V,V^*).
\]
The Bell pairing identifies $V\cong V^*$ after a basis-compatible compact structure has been chosen.  Bending a leg is this currying and dualization operation.  The vertical view is useful because it makes transposition, traces, and the identity \cref{eq:snake} visible.  Contracting horizontally gives the same linear algebra.

The dual choice is essential.  If a cup $\ket{\Omega_C}$ and cap $\bra{\Omega_D}$ are defined by different nondegenerate tensors, their partial contraction is a basis-dependent transfer operator $T(C,D)$, generally not $I$.  Their full closure is
\[
 \braket{\Omega_D}{\Omega_C}=\Tr(D^\dagger C).
\]
If that overlap vanishes, the closed diagram is zero.  If $T(C,D)$ is invertible but not the identity, bending a wire inserts that map.  The common cup--cap convention is exactly what makes yanking valid.

\subsubsection{Sliding a box around a bend produces a transpose}

The maximally entangled tensor satisfies the vectorization identity
\begin{equation}
 (A\otimes I)\ket\Omega=(I\otimes A^{\mathsf T})\ket\Omega.
 \label{eq:cup-transpose}
\end{equation}
Indeed,
\[
 (A\otimes I)\sum_j\ket j\ket j
 =\sum_{i,j}A_{ij}\ket i\ket j
 =(I\otimes A^{\mathsf T})\sum_i\ket i\ket i.
\]
Graphically, moving a box from one leg of a cup to the other turns it over.  The transpose is basis dependent because the compact structure in \cref{eq:cup-cap} was basis dependent.  Reflecting a box and taking its complex conjugate gives the adjoint.  It is therefore important not to say merely that a gate can be ``dragged around'' a bend: the orientation determines whether $A$, $A^{\mathsf T}$, $\bar A$, or $A^\dagger$ appears.

The same identity is the algebraic core of the Choi--Jamio\l kowski correspondence.  A linear map can be encoded as the bipartite vector
\[
 \ket A=(A\otimes I)\ket\Omega,
\]
or, for a channel $\cE$, as the positive operator
\[
 J(\cE)=(\cE\otimes\id)(\proj\Omega).
\]
Bending a wire is what converts an input port into half of a bipartite state.  The ancilla count in \cref{eq:channel-ancilla-count} is consequently also a statement about the rank of this bent representation.

\subsubsection{Teleportation as conditional wire straightening}

Prepare a normalized Bell pair on systems $B,C$ and let an unknown state occupy $A$.  The Bell basis is
\[
 \ket{\beta_{mn}}=(I\otimes X^mZ^n)\ket{\Phi^+},
 \qquad m,n\in\{0,1\}.
\]
The teleportation identity is \cite{bennett-teleportation}
\begin{equation}
 \ket\psi_A\ket{\Phi^+}_{BC}
 =\frac12\sum_{m,n=0}^1
 \ket{\beta_{mn}}_{AB}\otimes X^mZ^n\ket\psi_C.
 \label{eq:teleportation}
\end{equation}
A Bell measurement on $A,B$ selects one branch $(m,n)$.  Two classical bits specify the Pauli correction needed on $C$.  In the $(0,0)$ branch, the cup and cap straighten directly to an identity wire; the other branches have an $X^mZ^n$ decoration that the classical feed-forward removes.

\begin{figure}[ht]
\centering
\begin{tikzpicture}[scale=1.0]
  \draw[wire] (-0.5,1.5)--(1.0,1.5);
  \node[left,font=\sffamily\small] at (-0.5,1.5) {$\ket\psi_A$};
  \draw[wire,rounded corners=6mm] (0.2,-0.2)--(0.2,0.5)--(1.15,0.5)--(1.15,-0.2);
  \node[below,font=\sffamily\small] at (0.65,-0.35) {$\ket{\Phi^+}_{BC}$};
  \draw[wire,rounded corners=6mm] (1.0,1.5)--(1.0,0.9)--(1.95,0.9)--(1.95,1.5);
  \node[above,font=\sffamily\small] at (1.48,1.72) {Bell effect on $AB$};
  \draw[wire] (1.15,0.5)--(1.15,1.5);
  \draw[flow] (2.6,0.75)--(3.8,0.75);
  \draw[wire] (4.15,-0.2)--(4.15,1.5);
  \node[right,font=\sffamily\small] at (4.25,0.65) {$X^mZ^n\ket\psi_C$};
  \draw[flow] (6.2,0.65)--(7.2,0.65);
  \node[gate] (p) at (8.0,0.65) {$Z^nX^m$};
  \draw[wire] (7.2,0.65)--(p); \draw[wire] (p.east)--++(1.2,0);
  \node[right,font=\sffamily\small] at (10.0,0.65) {$\ket\psi$};
\end{tikzpicture}
\caption{A selected teleportation branch is a bent identity wire with a known Pauli decoration.  Classical feed-forward removes the decoration.}
\label{fig:teleportation-straightening}
\end{figure}

\Cref{eq:teleportation} also explains why teleportation does not copy or signal faster than light.  Before the two classical bits arrive, the receiver averages over four Pauli-decorated branches and holds the maximally mixed state.  The unknown input is consumed by the Bell measurement.

\subsubsection{Map--state duality as a calculation tool}

Define the vectorization of $A:V\to W$ by
\[
 \operatorname{vec}(A)=(A\otimes I_V)\ket{\Omega_V}\in W\otimes V.
\]
Then
\begin{align}
 \braket{\operatorname{vec}(A)}{\operatorname{vec}(B)}
   &=\Tr(A^\dagger B),\label{eq:vec-inner}\\
 (C\otimes D)\operatorname{vec}(A)
   &=\operatorname{vec}(CAD^{\mathsf T}).\label{eq:vec-sandwich}
\end{align}
The first identity turns Hilbert--Schmidt operator geometry into ordinary bipartite-state geometry.  The second turns left and right multiplication into local actions on the two halves of a bent wire.

For a unitary $U$ on a $d$-dimensional system, $\operatorname{vec}(U)/\sqrt d$ is maximally entangled.  Conversely, a bipartite pure state is maximally entangled precisely when its unvectorized operator is proportional to a unitary.  This relates gate fidelity to overlap of Choi states.  For example,
\[
 \frac1{d^2}|\Tr(U^\dagger V)|^2
\]
is the squared overlap of the normalized vectorizations of two unitaries.

For a channel $\cE$, the Choi operator
\[
 J(\cE)=\sum_{i,j}\cE(\ket i\bra j)\otimes\ket i\bra j
\]
is positive exactly when $\cE$ is completely positive.  Trace preservation becomes
\[
 \Tr_{\mathrm{out}}J(\cE)=I_{\mathrm{in}}.
\]
These conditions are diagrams with a bent input and a discarded or capped output.  The channel is recoverable from $J(\cE)$ by unbending the input with the same basis convention.  Thus the Choi matrix is not an unrelated trick: it is what a process looks like when one input boundary is converted into a state boundary.

Partial transpose also has a graphical meaning: transpose the operator acting on only one bent leg.  Because partial transpose depends on the chosen tensor factorization and basis, a casual reflection of only part of a diagram can alter positivity.  This is one reason orientation markers and explicit compact conventions matter in mixed-state graphical calculi.

\subsection{Spiders package basis-dependent correlations}

For $m,n\geq0$ and $\alpha\in\mathbb{R}/2\pi\mathbb{Z}$, define the green $Z$-spider
\begin{equation}
 \Zsp{\alpha}{m}{n}
 =\ket0^{\otimes n}\bra0^{\otimes m}
 +e^{i\alpha}\ket1^{\otimes n}\bra1^{\otimes m}.
 \label{eq:z-spider}
\end{equation}
The arity is read directly from the legs.  Several familiar maps occur as special cases:
\[
 \Zsp{0}{0}{2}=\ket{00}+\ket{11},\quad
 \Zsp{0}{2}{0}=\bra{00}+\bra{11},\quad
 \Zsp{0}{1}{1}=I,
\]
and
\[
 \Zsp{0}{1}{2}\ket0=\ket{00},\qquad
 \Zsp{0}{1}{2}\ket1=\ket{11}.
\]
Thus the spider copies the two labels of the computational basis.  It does not clone an arbitrary state:
\[
 \Zsp{0}{1}{2}(\alpha\ket0+\beta\ket1)
 =\alpha\ket{00}+\beta\ket{11},
\]
which is generally entangled and is not $(\alpha\ket0+\beta\ket1)^{\otimes2}$.

The red spider is the same structure in the $X$ basis:
\begin{equation}
 \Xsp{\alpha}{m}{n}
 =H^{\otimes n}\Zsp{\alpha}{m}{n}H^{\otimes m}.
 \label{eq:x-spider}
\end{equation}
It copies $\ket+$ and $\ket-$.  Hadamards on every leg therefore change the colour of a spider.

\subsubsection{One family contains states, effects, products, and coproducts}

Suppressing the phase, write
\[
 \delta=\Zsp{0}{1}{2},\qquad
 \mu=\Zsp{0}{2}{1}=\delta^\dagger,
 \qquad
 \eta=\Zsp{0}{0}{1},\qquad
 \epsilon=\Zsp{0}{1}{0}.
\]
On basis states,
\[
 \delta\ket j=\ket{jj},\qquad
 \mu\ket{jk}=\delta_{jk}\ket j,
\]
while $\eta(1)=\ket0+\ket1$ and $\epsilon=\bra0+\bra1$.  These maps obey associativity, coassociativity, commutativity, cocommutativity, and the Frobenius relation.  They are also \emph{special}:
\[
 \mu\delta=I.
\]
The spider theorem packages all connected diagrams built from this compatible multiplication and comultiplication into a single node with the same external arity.

The zero-input, three-output case is the unnormalized GHZ state
\[
 \Zsp{0}{0}{3}=\ket{000}+\ket{111}.
\]
The same tensor will reappear as the repetition-code encoder once one leg is regarded as an input rather than an output.  The relationship is not a visual coincidence: compact closure converts between these types by bending legs, with the transposes and normalizations fixed by the chosen cup.

\subsubsection{Classical points depend on the observable}

A state $\ket\chi$ is copied by $\delta$ when
\[
 \delta\ket\chi=\ket\chi\otimes\ket\chi
\]
up to the normalization convention appropriate to the diagram.  For the green structure the copied pure states are precisely the computational-basis vectors $\ket0,\ket1$.  The superposition $\ket+$ is not copied:
\[
 \delta\ket+=\frac{\ket{00}+\ket{11}}{\sqrt2}
 \neq\ket+\ket+.
\]
For the red structure the classical points are $\ket+$ and $\ket-$.  Thus the word ``classical'' is relative to a chosen observable structure.  A label that can be broadcast in one basis is coherent information in a complementary basis.

This basis dependence is the graphical form of the no-cloning theorem.  A spider can copy an orthogonal family because linearity then determines an entangling action on all superpositions.  It cannot copy two nonorthogonal states without violating inner-product preservation.

\begin{figure}[ht]
\centering
\begin{tikzpicture}[scale=0.95,transform shape]
  \node[zspider] (z1) at (0,0) {}; \node[zspider] (z2) at (2.3,0) {};
  \draw[wire] (-1.0,0.45)--(z1); \draw[wire] (-1.0,-0.45)--(z1);
  \draw[wire] (z1)--(z2);
  \draw[wire] (z2)--(3.3,0.45); \draw[wire] (z2)--(3.3,-0.45);
  \node[above,font=\sffamily\scriptsize] at (z1.north) {$\alpha$};
  \node[above,font=\sffamily\scriptsize] at (z2.north) {$\beta$};
  \node[font=\sffamily\large] at (4.0,0) {$=$};
  \node[zspider] (z3) at (5.4,0) {};
  \draw[wire] (4.4,0.45)--(z3); \draw[wire] (4.4,-0.45)--(z3);
  \draw[wire] (z3)--(6.4,0.45); \draw[wire] (z3)--(6.4,-0.45);
  \node[above,font=\sffamily\scriptsize] at (z3.north) {$\alpha+\beta$};
  \begin{scope}[xshift=82mm]
    \node[zspider] (g) at (0,0.5) {};
    \node[xspider] (r) at (1.5,-0.5) {};
    \draw[wire] (-1,0.5)--(g); \draw[wire] (g)--(2.6,0.5);
    \draw[wire] (-1,-0.5)--(r); \draw[wire] (r)--(2.6,-0.5);
    \draw[wire] (g)--(r);
    \node[right,font=\sffamily\small] at (2.8,0) {$\sqrt2\,(\text{diagram})=\operatorname{CNOT}$};
  \end{scope}
\end{tikzpicture}
\caption{Left: same-colour spider fusion adds phases.  Right: with normalized $X$-basis states, the standard connected green--red pattern equals $\operatorname{CNOT}/\sqrt2$; scalar-free ZX conventions suppress this factor.}
\label{fig:spider-fusion-cnot}
\end{figure}

\begin{proposition}[Spider fusion]
Two connected spiders of the same colour may be replaced by a single spider whose external legs are the remaining legs and whose phase is the sum of the two phases.
\end{proposition}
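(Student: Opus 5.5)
The plan is to compute directly from the definition \cref{eq:z-spider}, first for green spiders connected by a single wire, then extend to several connecting wires, and finally obtain the red case by conjugating with Hadamards via \cref{eq:x-spider}.

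\textbf{Single connecting wire.} Take $\Zsp{\alpha}{m}{k+1}$ and $\Zsp{\beta}{1+l}{n}$, where one output leg of the first is plugged into one input leg of the second. The remaining legs are carried in parallel as identity wires. By the structural rules of \cref{sec:drawing}, the contraction is the composite $(I^{\otimes k}\otimes\Zsp{\beta}{1+l}{n})(\Zsp{\alpha}{m}{k+1}\otimes I^{\otimes l})$, up to symmetry maps that only reorder legs.

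Expand each spider as a sum of its two terms $j\in\{0,1\}$. The shared wire contributes the factor $\braket{j}{j'}=\delta_{jj'}$, so the two cross terms vanish. Only the $j=j'$ terms survive. The identity wires do not sit in a definite basis state, so I would expand them as $I=\sum_{b}\proj b$. In the surviving terms this gives
\[
\ket{j}^{\otimes n}\bra{j}^{\otimes l}\otimes\ket j^{\otimes k}\bra j^{\otimes m}\,e^{ij(\alpha+\beta)}.
\]
The legs $b$ that must match $j$ on each side are forced to equal $j$. After the symmetry reorders the legs, the result is exactly $\Zsp{\alpha+\beta}{m+l}{k+n}$. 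The phase adds because $e^{ij\alpha}e^{ij\beta}=e^{ij(\alpha+\beta)}$, which is well defined modulo $2\pi$.

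\textbf{Several connecting wires.} If the two spiders share $r\ge1$ wires, the same computation applies. The contraction produces $\prod\braket{j}{j'}=\delta_{jj'}$ (a factor of $\delta_{jj'}^r$ with the same value), so no extra scalar appears.

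\textbf{Red spiders.} By \cref{eq:x-spider}, each red spider is a green spider with $H$ on every leg. On the connecting wire the two Hadamards meet and cancel, since $H^2=I$. This reduces the red case to the green case.

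I expect the main obstacle to be the bookkeeping of legs and orientation rather than the algebra. The statement says ``connected'', but a connecting wire could be drawn as a bent cup or cap, joining output to output or input to input. I would handle this by noting that $\ket\Omega=\Zsp{0}{0}{2}$ and $\bra\Omega=\Zsp{0}{2}{0}$ are themselves phase-free spiders. I would then use the snake identity \cref{eq:snake} together with the single-wire case to fuse the cup or cap into an adjacent spider. Transposes cause no trouble here because spiders are symmetric in the computational basis, so \cref{eq:cup-transpose} only converts a spider into itself with its legs relabelled.

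One point needs care. If the two spiders are joined by $r\ge2$ wires, I would state precisely the convention that the fused spider drops all internal legs and has no scalar. This matches the unnormalized convention fixed in \cref{eq:z-spider}.
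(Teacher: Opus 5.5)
Your proposal is correct and takes the paper's route. You expand both spiders from \cref{eq:z-spider}, so that orthogonality on the shared leg(s) kills the mixed-label terms, only the all-$0$ and all-$1$ terms survive with coefficients $1$ and $e^{i(\alpha+\beta)}$, and the red case follows by Hadamard conjugation. Your extra bookkeeping is sound and only makes explicit what the paper's short proof leaves implicit: several connecting wires still give $\delta_{jj'}$ with no extra scalar, and bent connections are handled by treating cups and caps as phase-free spiders.
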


\begin{proof}
Expand both spiders using \cref{eq:z-spider}.  Contraction of a connected leg produces inner products $\braket0{1}=\braket1{0}=0$, so the mixed labels vanish.  The all-zero labels survive with coefficient $1$, while the all-one labels survive with coefficient $e^{i\alpha}e^{i\beta}=e^{i(\alpha+\beta)}$.  The red rule follows by conjugating every leg by $H$.
\end{proof}

The proof reveals why the rule is local: orthogonality on the connected wire eliminates the cross terms regardless of how many external legs remain.

\subsection{The ZX-calculus}

The ZX-calculus combines the green and red spider families with Hadamard gates and a collection of sound local equations \cite{coecke-duncan,van-de-wetering}.  Among the basic rules are same-colour fusion, identity spiders, colour change, and complementary-basis interaction rules such as bialgebra and Hopf identities.  The calculus can express any pure qubit linear map when arbitrary real phases and scalar factors are admitted.

It is useful to separate three questions.

\begin{description}[style=nextline,leftmargin=2.6em]
\item[Expressivity.] Can some ZX diagram denote the target map?  With both colours and arbitrary phases, the answer is yes for finite pure-qubit linear maps, hence for every qubit unitary.
\item[Soundness.] Does each permitted redraw preserve the denoted linear map?  Soundness is mandatory: a graphical derivation must imply an algebraic equality.
\item[Completeness.] If two diagrams denote the same map, can their equality be derived from a specified axiom set?  Completeness depends on the fragment and the rules.  It is known for stabilizer quantum mechanics, for suitable Clifford+$T$ axiom systems, and for the full pure-state qubit language with arbitrary phases \cite{backens,jeandel-perdrix-vilmart,hadzihasanovic-ng-wang}.
\end{description}

Fusion alone is not a synthesis method for every unitary.  A one-qubit unitary has an Euler decomposition, up to global phase,
\[
 U\doteq Z(\alpha)X(\beta)Z(\gamma),
\]
where the one-input/one-output phase spiders satisfy
\[
 Z(\alpha)=\operatorname{diag}(1,e^{i\alpha})=e^{i\alpha/2}R_z(\alpha),
 \qquad
 X(\beta)=HZ(\beta)H=e^{i\beta/2}R_x(\beta).
\]
Thus this spider decomposition agrees projectively with the rotation convention in \cref{eq:euler-one-qubit}.  Multi-qubit universality follows from one-qubit rotations together with an entangling gate such as CNOT.  Green and red spiders with phases can represent these ingredients.  Fusion then simplifies adjacent same-colour structure inside the larger diagram.

The available phases determine the exactly reachable fragment.  Phases in $(\pi/2)\mathbb{Z}$ give the Clifford/stabilizer fragment.  Phases in $(\pi/4)\mathbb{Z}$ include Clifford+$T$, which is approximately universal but does not express every real rotation exactly.  Arbitrary real phases restore exact expressivity for finite pure-qubit maps.

\subsubsection{A core rewrite dictionary}

The diagrams vary by convention, but the semantic content of the most-used rules is stable.

\begin{center}
\begin{tabularx}{0.97\textwidth}{@{}>{\sffamily\bfseries}lXX@{}}
\toprule
rule & graphical action & algebraic reason \\
\midrule
spider fusion & merge connected same-colour nodes and add phases & orthogonality leaves one common basis label \\
identity & remove a zero-phase one-input/one-output spider & $\ket0\bra0+\ket1\bra1=I$ \\
colour change & apply $H$ on every leg and swap green/red & $H$ exchanges the $Z$ and $X$ eigenbases \\
phase addition & fuse serial rotations of the same colour & diagonal phases multiply \\
bialgebra & redistribute copying of one colour through multiplication of the other & complementary bases define interacting Frobenius algebras \\
Hopf & remove a doubled opposite-colour connection & parity addition performed twice cancels \\
Hadamard cancellation & remove two consecutive $H$ boxes & $H^2=I$ \\
scalar loop & replace a closed unnormalized dimension-$d$ loop by $d$ & trace of the identity \\
\bottomrule
\end{tabularx}
\end{center}

The table is a study guide, not a complete axiom system.  A complete presentation specifies orientations, phase signs, Hadamard-edge conventions, zero and nonzero scalars, and any supplemental rules required by the chosen fragment.  When importing a rewrite from software or another paper, one should translate conventions before applying it.

Each rule is local, but locality of a rule does not mean locality of the circuit cost after extraction.  Fusing two distant-looking phase gadgets may require routing changes; a local complementation can change many graph edges; a scalar component disconnected from all qubit wires can still determine a branch probability.  ``Local proof'' and ``local hardware change'' are different claims.

\subsubsection{Complementary observables: bialgebra and Hopf structure}

Green and red spiders do more than coexist.  Their interaction expresses the complementarity of the $Z$ and $X$ bases.  One representative bialgebra equation says that copying in one colour distributes through multiplication in the other.  Algebraically it can be checked on basis states; graphically it replaces a small complete bipartite patch of red--green connections by a differently grouped patch with the same boundary.

The Hopf rule is a related cancellation: a doubled connection between opposite-colour spiders can disconnect after the appropriate identities and scalar convention are applied.  These local rules are what make CNOT reasoning effective.  A CNOT is represented by a green copying node on the control joined to a red addition node on the target.  Composing two identical CNOT diagrams places two such interactions in sequence.  Spider fusion collects like colours, the Hopf interaction removes the doubled red--green link, and the diagram reduces to two straight wires, matching
\[
 \operatorname{CNOT}^2=I.
\]
The matrix proof is immediate because $a\oplus a\oplus b=b$; the diagrammatic proof reveals which local algebraic laws make the cancellation possible.

\begin{figure}[ht]
\centering
\begin{tikzpicture}[scale=0.95]
  \node[zspider] (g1) at (0,0.7) {};
  \node[xspider] (r1) at (1.6,-0.7) {};
  \node[zspider] (g2) at (3.5,0.7) {};
  \node[xspider] (r2) at (5.1,-0.7) {};
  \draw[wire] (-1.1,0.7)--(g1)--(g2)--(6.2,0.7);
  \draw[wire] (-1.1,-0.7)--(r1)--(r2)--(6.2,-0.7);
  \draw[wire] (g1)--(r1); \draw[wire] (g2)--(r2);
  \node[font=\sffamily\large] at (7.0,0) {$\rightsquigarrow$};
  \node[zspider] (g) at (8.3,0.7) {};
  \node[xspider] (r) at (9.9,-0.7) {};
  \draw[wire] (7.5,0.7)--(g)--(11.0,0.7);
  \draw[wire] (7.5,-0.7)--(r)--(11.0,-0.7);
  \draw[wire,bend left=18] (g) to (r);
  \draw[wire,bend right=18] (g) to (r);
  \node[font=\sffamily\large] at (11.7,0) {$\rightsquigarrow$};
  \draw[wire] (12.4,0.7)--(14.8,0.7);
  \draw[wire] (12.4,-0.7)--(14.8,-0.7);
  \node[below,font=\sffamily\small] at (2.55,-1.25) {two CNOT patterns};
  \node[below,font=\sffamily\small] at (9.15,-1.25) {fusion};
  \node[below,font=\sffamily\small] at (13.6,-1.25) {Hopf cancellation};
\end{tikzpicture}
\caption{Schematic local reduction of two consecutive CNOTs.  Exact placement of scalar nodes depends on the normalized or scalar-free ZX convention.}
\label{fig:two-cnot-zx}
\end{figure}

\subsubsection{Phase gadgets expose parity}

A $Z$-phase gadget with phase $\alpha$ and $k$ legs denotes, up to a stated global phase convention, the parity-controlled rotation
\begin{equation}
 U_{\alpha,S}=\exp\!\left(-\frac{i\alpha}{2}
   \bigotimes_{j\in S} Z_j\right).
 \label{eq:phase-gadget}
\end{equation}
The diagram is a phase-labelled spider connected through a complementary-colour parity structure to the selected qubits.  A computational basis vector is an eigenvector of the Pauli product $\bigotimes_{j\in S}Z_j$, so the phase depends only on the XOR parity of those bits.

Phase gadgets are useful in circuit optimization because identical supports combine by adding phases, commuting supports can be reordered, and linear relations among parities become visible in the graph.  They also illustrate the limits of a purely pictorial claim: reducing gadget count need not reduce depth on a particular coupling graph, and an exactly simplified real-angle diagram may still require approximation over a discrete fault-tolerant gate set.

\subsubsection{Scalars, normalization, and zero diagrams}

Many presentations of the ZX-calculus identify nonzero scalar multiples because global phase and normalization are irrelevant for a restricted task.  These notes do not make that identification silently.  A zero-input diagram intended as a physical state must be normalized before probabilities are computed.  A closed diagram denotes a scalar amplitude.  A postselected branch carries its success amplitude.  The empty diagram denotes $1$, while a closed loop made from the unnormalized cup and cap evaluates to
\[
 \braket\Omega\Omega=d.
\]
For qubits the value is $2$.  Suppressing this loop would change a probability calculation.

The zero scalar is especially important.  Orthogonal effects can annihilate a state, and a mismatched folded error diagram in \cref{sec:correcting} reduces to the zero map.  A calculus defined only projectively over nonzero scalars must still treat zero separately.

\subsubsection{Discarding and mixed processes}

A wire that ends in a bra is postselection on a particular effect.  Discarding is different: it traces out a subsystem and is represented by a dedicated discard symbol in graphical languages for completely positive maps.  If
\[
 \rho_{AB}=\proj{\Phi^+},
\]
then postselecting $B$ onto $\bra0$ prepares the unnormalized state $\ket0/\sqrt2$ on $A$, whereas discarding $B$ leaves $I/2$.  Erasing the distinction would confuse a probabilistic branch with ignorance of an outcome.

Pure-state ZX diagrams can represent unitary circuit identities and selected measurement branches.  General noise, reset, classical control, and open-system evolution require an enriched language---for example, doubling constructions or explicitly completely positive generators.  The Knill--Laflamme calculations below use pure operator representatives $E_a$, while the complete noise process is their Kraus sum.

\subsubsection{ZX diagrams and tensor networks}

Both formalisms draw tensors and contracted legs, but their usual questions differ.  A tensor network emphasizes a chosen factorization and the cost or approximation of contracting it.  A ZX diagram emphasizes an algebra of generators and sound equalities that change the factorization.  The same picture can be read both ways once tensors are assigned, but a generic tensor-network deformation is not automatically a ZX rewrite, and a ZX rewrite may deliberately change contraction complexity.

This distinction becomes useful in holographic codes.  A perfect-tensor network specifies an encoder and reconstruction geometry; a ZX representation may then expose stabilizer or Clifford structure inside particular tensors.  One should not assume that every tensor-network symmetry is generated by the chosen ZX rules, or that a graphically simple ZX normal form is the cheapest tensor contraction.

\subsection{From ordinary circuits to graph-like ZX diagrams}

A conventional circuit can be translated gate by gate.  Computational-basis preparations and effects are green spiders with suitable arity; $Z$ rotations are green one-to-one phase spiders; $X$ rotations are red one-to-one phase spiders; Hadamards are explicit basis-change boxes or decorated edges; and CNOT uses the joined green--red pattern of \cref{fig:spider-fusion-cnot}.  The tensor product and composition of the translation follow the circuit wiring.

After translation, simplification often aims for a \emph{graph-like} form:
\begin{enumerate}
\item all interior spiders have one colour, commonly green;
\item connections between spiders are Hadamard edges;
\item there are no parallel edges or self-loops after their scalar effects are reduced;
\item each boundary wire meets a designated boundary spider;
\item phases are collected on vertices.
\end{enumerate}
Local complementation and pivoting rules can then simplify Clifford structure while phase gadgets retain non-Clifford parity rotations.  This form connects circuit optimization to graph algorithms.

\subsubsection{Example: controlled-Z and graph states}

Using \cref{eq:cnot-cz-basis}, a controlled-$Z$ is obtained from the CNOT diagram by putting a Hadamard on the target before and after.  Colour change moves those Hadamards through the red target spider and converts it to green.  The result is two green spiders connected by a Hadamard edge.  Preparing both boundary inputs in $\ket+$ gives the two-vertex graph state
\[
 \operatorname{CZ}\ket{++}
 =\frac12(\ket{00}+\ket{01}+\ket{10}-\ket{11}).
\]
For a graph $G=(V,E)$, preparing $\ket+$ at every vertex and applying CZ on every edge produces
\begin{equation}
 \ket G=\prod_{(u,v)\in E}\operatorname{CZ}_{uv}\ket+^{\otimes |V|}.
 \label{eq:graph-state}
\end{equation}
In graph-like ZX form, the combinatorial graph is visible directly as Hadamard edges among green spiders.  This representation will return in the foliation construction of \cref{sec:correcting}.

\subsubsection{Example: GHZ correlations in two bases}

The normalized three-qubit GHZ state is
\[
 \ket{\mathrm{GHZ}}=\frac{\ket{000}+\ket{111}}{\sqrt2}.
\]
As a green spider, its computational-basis correlation is immediate: all three outcomes agree.  Applying $H$ to every leg changes the node to a red spider and gives
\[
 H^{\otimes3}\ket{\mathrm{GHZ}}
 =\frac12(\ket{000}+\ket{011}+\ket{101}+\ket{110}),
\]
the even-parity superposition.  Thus one spider expresses two complementary descriptions: equality in the $Z$ basis and even parity in the $X$ basis.  The repetition code uses the first description; phase-error diagnosis uses the second.

\subsubsection{Example: commuting a phase through parity computation}

Let a CNOT compute $a\oplus b$ into the target, apply $R_z(\alpha)$ there, and uncompute.  On a basis state $\ket{a,b}$, the acquired phase depends on the parity $a\oplus b$.  With the rotation convention of \cref{eq:euler-one-qubit}, the composite is exactly
\begin{equation}
 \exp\!\left(-\frac{i\alpha}{2}Z_1Z_2\right),
 \label{eq:two-qubit-parity-phase}
\end{equation}
which is the two-leg phase gadget of \cref{eq:phase-gadget}.  Diagrammatically, translate both CNOTs, fuse the target-colour spiders with the intervening phase, and use the interaction rules to expose one phase node connected symmetrically to both qubits.

This calculation is useful because it converts a compute--rotate--uncompute pattern into a commuting Pauli rotation.  Several such rotations can be reordered when their Pauli products commute, and equal supports combine by phase addition.  On hardware, the best extracted circuit depends on which pairs can interact directly.

\begin{figure}[ht]
\centering
\begin{tikzpicture}[scale=0.95]
  \draw[wire] (-0.5,0.7)--(0.5,0.7);
  \draw[wire] (-0.5,-0.7)--(0.5,-0.7);
  \node[zspider] (g1) at (0.8,0.7) {};
  \node[xspider] (r1) at (1.9,-0.7) {};
  \draw[wire] (g1)--(r1);
  \draw[wire] (0.5,0.7)--(g1); \draw[wire] (0.5,-0.7)--(r1);
  \node[zspider] (p) at (3.1,-0.7) {};
  \node[below,font=\sffamily\scriptsize] at (p.south) {$\alpha$};
  \draw[wire] (r1)--(p);
  \node[zspider] (g2) at (4.8,0.7) {};
  \node[xspider] (r2) at (3.9,-0.7) {};
  \draw[wire] (g2)--(r2); \draw[wire] (p)--(r2);
  \draw[wire] (g2)--(6.0,0.7); \draw[wire] (r2)--(6.0,-0.7);
  \node[font=\sffamily\large] at (6.7,0) {$\rightsquigarrow$};
  \draw[wire] (7.4,0.7)--(11.0,0.7);
  \draw[wire] (7.4,-0.7)--(11.0,-0.7);
  \node[zspider] (q) at (9.2,0) {};
  \node[right,font=\sffamily\scriptsize] at (q.east) {$\alpha$};
  \draw[wire] (q)--(8.4,0.7); \draw[wire] (q)--(8.4,-0.7);
  \node[below,font=\sffamily\small] at (2.7,-1.35) {compute, rotate, uncompute};
  \node[below,font=\sffamily\small] at (9.2,-1.35) {parity-phase gadget};
\end{tikzpicture}
\caption{Schematic exposure of a two-qubit parity rotation.  The precise graph-like convention may display Hadamard edges on the gadget legs.}
\label{fig:parity-gadget}
\end{figure}

\subsection{Universality, completeness, and normal forms}

These terms answer different questions.

\paragraph{Universality.}  A generator set is universal if every target map in a declared class can be represented exactly or approximated arbitrarily well.  Arbitrary-phase ZX generators represent all finite pure-qubit linear maps.  Clifford+$T$ circuits approximate arbitrary unitaries but represent only a countable exact subset.

\paragraph{Completeness.}  A rewrite system is complete for a fragment if every equality true under the tensor semantics can be derived using its rules.  Stabilizer ZX completeness does not imply that a particular simplification heuristic will find a short proof, and arbitrary-phase expressivity does not by itself supply a complete axiom set.  Nevertheless, complete axiomatisations of the full pure-state qubit language are known \cite{hadzihasanovic-ng-wang}; their existence does not make an arbitrary heuristic rewriting strategy complete.

\paragraph{Normal form.}  A normal form selects a canonical or constrained representative.  If every diagram can be reduced to a unique normal form, equality checking becomes reduction and comparison.  Useful optimization forms need not be unique, and the form easiest for proof may be poor for hardware extraction.

For the stabilizer fragment, completeness and efficient classical simulation are compatible because Clifford operations map Pauli observables to Pauli observables.  Clifford+$T$ introduces non-stabilizer phases; complete axiomatisations exist, but simplification and optimal extraction become substantially harder.  PyZX therefore combines sound local rules with heuristic strategies rather than promising globally optimal circuits \cite{backens,jeandel-perdrix-vilmart,kissinger-wetering}.

\subsection{Three failure modes of an attractive redraw}

\paragraph{A hidden transpose.}  Sliding a non-symmetric gate around a cup without applying \cref{eq:cup-transpose} changes the represented map.  The error may remain invisible for $H$, $X$, or $Z$ because those matrices are symmetric, then fail for a generic complex gate.

\paragraph{A hidden scalar.}  Replacing normalized Bell tensors by the unnormalized compact cup can multiply a branch by $\sqrt2$ per cup.  A projective state ray is unaffected after renormalization, but a success probability or interference between branches is changed.

\paragraph{A hidden environment.}  Erasing a measurement or discard wire can turn a mixed channel into a coherent postselection.  The output density matrix and all later interference predictions change.  Open-system diagrams must retain the environmental or discard structure needed by complete positivity.

These failures motivate the hybrid workflow below.  The diagram discovers a local structural reason; operator semantics checks the orientation, scalar, and open-system type.

\subsection{Graphical proof as local mutation}

A diagrammatic proof is a sequence
\[
 D_0\rightsquigarrow D_1\rightsquigarrow\cdots\rightsquigarrow D_k
\]
in which each step replaces a small subdiagram by an equal one.  If every rule is sound, then $\llbracket D_0\rrbracket=\llbracket D_k\rrbracket$.  The advantage is not that matrices disappear; it is that the reason for a large equality can become spatially local.  Software such as PyZX automates many such transformations \cite{kissinger-wetering}, but scalar conventions, the chosen fragment, the cost function, hardware connectivity, and the verification target must all remain explicit.

\subsubsection{A reproducible graphical calculation}

A trustworthy diagrammatic calculation can be documented in five steps.

\begin{enumerate}
\item \textbf{Declare the semantics.}  Specify the tensor assigned to every generator, the direction of wires, and whether states are normalized.
\item \textbf{Declare the fragment.}  State which phase values and which generators are allowed.  A completeness theorem applies only to its stated fragment.
\item \textbf{Name every rewrite.}  Each mutation should cite fusion, identity, colour change, bialgebra, Hopf, a phase rule, or another explicit axiom.
\item \textbf{Track the boundary and scalar.}  The input/output type must remain fixed, and any loop or disconnected scalar component must be evaluated.
\item \textbf{Verify a terminal form.}  Translate the final small diagram into an operator or compare it numerically on a basis.  This last check is inexpensive and catches convention mismatches.
\end{enumerate}

For Clifford circuits, tableau methods offer an independent polynomial-time verification.  For a small arbitrary-phase circuit, direct matrix comparison is appropriate.  For a large parameterized circuit, one may compare symbolic phase polynomials or evaluate at enough parameter values to test the claimed identity, while remembering that numerical agreement is evidence rather than a formal proof.

\begin{example}[The identity spider]
The one-input, one-output green spider at phase zero is
\[
 \Zsp{0}{1}{1}=\ket0\bra0+\ket1\bra1=I.
\]
Fusion says that attaching it to any green spider changes nothing.  This graphical identity is not a convention imposed after the fact; it follows from the tensor semantics.  At phase $\alpha$, the same arity is instead
\[
 \Zsp{\alpha}{1}{1}=\operatorname{diag}(1,e^{i\alpha}),
\]
so erasing a labelled one-to-one spider would be unsound.
\end{example}

\begin{example}[A normalized cup changes the snake]
If the qubit cup is normalized as $\ket{\Phi^+}=\ket\Omega/\sqrt2$ and the cap is its adjoint, partial contraction gives $I/2$, not $I$.  The normalized tensor is the physical Bell state, while the unnormalized tensor supplies an exact compact-structure yanking equation.  Either convention is valid; mixing them is not.
\end{example}

\subsection{Problems for Drawing Quantum Circuits}

\begin{enumerate}[label=\textbf{2.\arabic*.},leftmargin=3.3em]
\item \textbf{Index semantics.}  Translate a diagram consisting of $A:U\to V\otimes W$ followed by $B:V\to X$ on its first output into indices.  Identify the type of the remaining boundary and verify that deforming the connecting wire does not change the expression.

\item \textbf{Snake normalization.}  Evaluate both partial contractions in \cref{eq:snake} using normalized cups and caps in dimension $d$.  Determine the scalar required to restore the identity.

\item \textbf{Sliding around a cup.}  Prove \cref{eq:cup-transpose} for an arbitrary $d\times d$ matrix.  Repeat with a cap and state which transpose or conjugate appears for each wire orientation.

\item \textbf{Teleportation branch.}  Expand \cref{eq:teleportation} for $\ket\psi=\alpha\ket0+\beta\ket1$.  Check all four Bell outcomes and the order of the Pauli corrections.

\item \textbf{Spider fusion.}  Compose $\Zsp{\alpha}{2}{2}$ with $\Zsp{\beta}{2}{1}$ along one leg and verify the fused formula directly.  What happens if the connected nodes have different colours?

\item \textbf{Classical points.}  Solve $\delta\ket\chi=\ket\chi\otimes\ket\chi$ for an unnormalized qubit vector $\ket\chi=a\ket0+b\ket1$.  Explain why allowing the zero vector is physically unhelpful.

\item \textbf{Scalar bookkeeping.}  Evaluate a closed qubit loop, a normalized Bell cup followed by a normalized Bell cap, and the overlap of $\ket{\Phi^+}$ with $\ket{\Phi^-}$.  Draw the corresponding closed diagrams and distinguish the values $2$, $1$, and $0$.

\item \textbf{Rewrite versus cost.}  Choose a small Clifford+$T$ circuit, translate it into ZX form, and simplify it by hand or with PyZX.  Report separately the number of spiders, the extracted two-qubit gate count, the $T$ count, and the depth on a linear nearest-neighbour architecture.
\end{enumerate}

\paragraph{Checks.}  Problem 2 produces $I/d$ for normalized bends.  In Problem 6 the nonzero copied solutions are the two computational-basis rays.  In Problem 7, the unnormalized loop has value $2$, the normalized Bell norm is $1$, and the orthogonal Bell overlap is $0$.

\begin{keyidea}[The best practical workflow]
Draw to expose locality and topology; rewrite to discover structure; translate the remaining diagram back to an operator; and use algebra or numerics to verify scalars, general error spans, approximation bounds, and hardware costs.
\end{keyidea}

The next section uses precisely this hybrid method.  The algebraic theorem is Knill--Laflamme.  The graphical operation is folding an encoding circuit against two error histories.  The local ZX reduction makes the logical-state independence of the result visible.

\subsection*{End-of-lecture synthesis: drawing with semantics}

A circuit picture earns the right to calculate only after its generators and boundary have been assigned precise tensors.  Wires then carry types; vertical or sequential connection is composition; juxtaposition is tensor product; and planar deformation preserves an expression only while connectivity, order, and orientation remain legitimate.  Cups and caps add a compact structure, but bending a wire is not innocent: a transpose or dual map appears, and normalized Bell states introduce dimension-dependent scalars.  The drawing therefore compresses index notation without abolishing it.

Spiders illustrate the productive tension between intuitive pictures and exact semantics.  A green spider copies the classical points of the computational basis and enforces equality of their labels.  It does not clone arbitrary quantum states.  A red spider expresses the corresponding structure in the complementary basis.  Fusion, colour change, bialgebra, and Hopf rules expose large circuit identities as small local mutations, while phase gadgets collect parity-dependent phases into a form useful for optimization.  For open systems, discarding must be drawn explicitly; otherwise a pure-state equation can be mistaken for equality of channels.

Four claims should remain separate.  \emph{Soundness} says each rewrite preserves the assigned linear map.  \emph{Universality} says the generators can express the maps of interest.  \emph{Completeness} says that, within a stated fragment, semantic equality can be derived by the rules.  \emph{Optimization} says that a chosen rewrite strategy improves a chosen cost after extraction to a particular architecture.  None of these claims implies the others.

The practical verification loop is consequently bilingual.  Declare the generator semantics and scalar convention; rewrite locally to reveal structure; translate the small residual diagram back into an operator or channel; and compare it algebraically or numerically on the fixed boundary.  This is not a retreat from diagrams.  It is what makes them auditable.  In the next lecture, the same loop turns error correction into a picture of two error histories folded around an encoder, while the Knill--Laflamme equations certify exactly what the picture must reduce to.

\clearpage
\section{Correcting Quantum Circuits}
\label{sec:correcting}

\subsection{An error is another process in the circuit}

An ideal wire denotes the identity.  A physical wire is a dynamical system and may accumulate unwanted rotations, dephasing, leakage, crosstalk, or measurement faults.  At the circuit level an error can be drawn as an unwanted gate inserted into the intended process.  More generally, noise is a channel
\[
 \cN(\rho)=\sum_a E_a\rho E_a^\dagger.
\]
The Kraus operators $E_a$ need not describe mutually exclusive classical events; they span the operator space of the channel.  This point is why the exact correction condition must test $E_a^\dagger E_b$ for every pair.

\subsubsection{Why Pauli errors are a useful basis}

Every operator on one qubit has an expansion
\begin{equation}
 A=a_0I+a_xX+a_yY+a_zZ,
 \label{eq:pauli-expansion}
\end{equation}
because $I,X,Y,Z$ are an orthogonal basis under the Hilbert--Schmidt inner product.  Consequently, a code that corrects the four Pauli components on a given qubit corrects their entire complex linear span, including a small coherent rotation such as
\[
 e^{-i\epsilon X/2}=\cos(\epsilon/2)I-i\sin(\epsilon/2)X.
\]
Quantum error correction does not require the physical error to have ``really been'' either $I$ or $X$.  The syndrome measurement can correlate orthogonal error sectors with different classical records, turning a coherent superposition of correctable components into a branch that can be recovered.

For $n$ qubits, tensor products of Pauli operators form a basis for all $2^n\times2^n$ operators.  The \emph{weight} of a Pauli is the number of nonidentity tensor factors.  Distance statements are usually formulated in this basis: a distance-$d$ code detects every Pauli of weight below $d$ and corrects arbitrary noise supported on at most $\lfloor(d-1)/2\rfloor$ qubits.

Leakage and loss require care because they leave the qubit Hilbert space.  One can enlarge the local space and include leakage operators in the error set, or use dedicated leakage-reduction units.  Likewise, correlated noise is not captured by a model that assigns independent single-qubit Pauli probabilities.  The theorem below accepts any declared operator set; the experimental challenge is declaring a set that faithfully covers the noise of interest.

\subsubsection{Detection, correction, and fault tolerance}

These three claims have different strengths.

\begin{itemize}
\item A code \emph{detects} an error set when the code space can be distinguished from the erroneous subspaces without distinguishing logical states.
\item It \emph{corrects} the set when a recovery channel restores every encoded state after any error in the linear span.
\item A protocol is \emph{fault tolerant} when faults occurring during encoding, syndrome extraction, recovery, and measurement do not spread into uncorrectable patterns faster than they are diagnosed.
\end{itemize}

An ideal recovery map is not yet a fault-tolerant circuit.  For example, one ancilla that sequentially interacts with many data qubits can propagate a single ancilla fault into a high-weight data error.  Repeated syndrome rounds, verified ancillas, locality, decoder timing, and a declared circuit-level noise model enter only at the fault-tolerant level.

A quantum code consists of a logical Hilbert space $\cH_L$, a physical Hilbert space $\cH_P$, and an isometric encoder
\[
 V:\cH_L\hookrightarrow\cH_P,
 \qquad V^\dagger V=I_L.
\]
The code space is $\cC=V\cH_L$ and its orthogonal projector is $P=VV^\dagger$.  A recovery is a channel $\cR$ intended to reverse the noise on states supported in $\cC$.

\subsection{Encoding is not cloning: the three-qubit repetition code}

The repetition encoder is
\begin{equation}
 V\ket0=\ket{000},\qquad V\ket1=\ket{111}.
 \label{eq:repetition-encoder}
\end{equation}
Therefore
\[
 V(\alpha\ket0+\beta\ket1)=\alpha\ket{000}+\beta\ket{111}.
\]
This is one logical qubit stored in correlations among three physical qubits, not three independent copies.  In the language of \cref{sec:drawing}, $V$ is the green spider $\Zsp{0}{1}{3}$.

Assume the promised error set
\[
 \mathscr E=\{I,X_1,X_2,X_3\},
\]
where $X_i$ flips physical qubit $i$.  The code detects these errors by measuring whether neighbouring bits agree.  The corresponding commuting observables are $Z_1Z_2$ and $Z_2Z_3$.

\begin{figure}[ht]
\centering
\begin{minipage}{0.54\textwidth}
\centering
\begin{quantikz}[row sep=0.4cm,column sep=0.45cm]
\lstick{$\ket\psi$} & \ctrl{1} & \ctrl{2} & \qw \\
\lstick{$\ket0$}    & \targ{}  & \qw      & \qw \\
\lstick{$\ket0$}    & \qw      & \targ{}  & \qw
\end{quantikz}
\end{minipage}
\hfill
\begin{minipage}{0.41\textwidth}
\centering
\begin{tikzpicture}
  \draw[wire] (-1.1,0)--(0,0); \node[zspider,minimum size=8mm] (z) at (0.35,0) {};
  \draw[wire] (z)--(1.65,0.72); \draw[wire] (z)--(1.65,0); \draw[wire] (z)--(1.65,-0.72);
  \node[below,font=\sffamily\small] at (0.35,-1.0) {$\Zsp{0}{1}{3}$};
\end{tikzpicture}
\end{minipage}
\caption{The two-CNOT encoder and its one-spider ZX representation define the same isometry.}
\label{fig:repetition-encoder}
\end{figure}

Encode agreement as syndrome bit $0$ and disagreement as $1$.  Then
\begin{center}
\begin{tabular}{@{}cccc@{}}
\toprule
error & $Z_1Z_2$ check & $Z_2Z_3$ check & recovery \\
\midrule
$I$   & $0$ & $0$ & $I$ \\
$X_1$ & $1$ & $0$ & $X_1$ \\
$X_2$ & $1$ & $1$ & $X_2$ \\
$X_3$ & $0$ & $1$ & $X_3$ \\
\bottomrule
\end{tabular}
\end{center}
The syndrome reveals the error sector, not the amplitudes $\alpha$ and $\beta$.  An ancilla may extract one parity by receiving CNOTs from the two data qubits and then being measured.  It learns their XOR without measuring either data value separately.

\subsubsection{Syndrome extraction as a coherent calculation}

To measure $Z_1Z_2$, prepare an ancilla in $\ket0$, apply CNOT from data qubit $1$ to the ancilla and then CNOT from data qubit $2$ to the ancilla, and measure the ancilla in the computational basis.  On a basis string $\ket{x_1x_2}$ the ancilla becomes $\ket{x_1\oplus x_2}$.  On a superposition of codewords, both $\ket{000}$ and $\ket{111}$ yield parity zero, so the logical amplitudes remain coherent.

\begin{figure}[ht]
\centering
\begin{quantikz}[row sep=0.42cm,column sep=0.48cm]
\lstick{data 1} & \ctrl{3} & \qw      & \qw      & \qw \\
\lstick{data 2} & \ctrl{2} & \ctrl{3} & \qw      & \qw \\
\lstick{data 3} & \qw      & \ctrl{2} & \qw      & \qw \\
\lstick{$\ket0_{a_1}$} & \targ{} & \qw & \meter{} & \cw \\
\lstick{$\ket0_{a_2}$} & \qw & \targ{} & \meter{} & \cw
\end{quantikz}
\caption{Ideal extraction of the two repetition-code parities.  The ancillas reveal agreement information and not the value of the encoded basis label.}
\label{fig:repetition-syndrome-circuit}
\end{figure}

Let $\cC_0=\cC$ and $\cC_i=X_i\cC$ for $i=1,2,3$.  These four two-dimensional subspaces are pairwise orthogonal and fill the eight-dimensional physical Hilbert space:
\[
 (\C^2)^{\otimes3}=\cC_0\oplus\cC_1\oplus\cC_2\oplus\cC_3.
\]
The two syndrome bits label this direct-sum decomposition.  An ideal recovery first correlates each sector with its label and then applies $X_i$ conditionally.  Because the same logical vector $\alpha\ket{000}+\beta\ket{111}$ appears inside every sector after the appropriate inverse flip, the syndrome record is statistically independent of $\alpha,\beta$.

If two data qubits flip, the observed syndrome coincides with the syndrome of a different single-qubit error.  For instance $X_1X_2$ and $X_3$ both produce the same pair of parity disagreements.  Applying the minimum-weight correction $X_3$ leaves $X_1X_2X_3$, which acts as logical $X$ on the code.  This is the simplest example of decoder ambiguity: the syndrome identifies a fibre of possible histories, not necessarily the true history.

The promise has limits.  A phase flip gives
\[
 Z_1(\alpha\ket{000}+\beta\ket{111})
 =\alpha\ket{000}-\beta\ket{111},
\]
while both neighbouring pairs still agree.  The syndrome is $00$, so this code does not detect the phase error.  Conjugating by Hadamard exchanges $X$ and $Z$ errors; full quantum codes combine both kinds of protection.

\subsection{From repetition to full single-qubit protection}

The bit-flip repetition code protects equality in the $Z$ basis.  Conjugating every physical qubit by $H$ gives a phase-flip repetition code with codewords
\[
 \ket{0_L}=\ket{+++},\qquad
 \ket{1_L}=\ket{---}.
\]
It diagnoses one $Z$ error using $X$-basis parity checks but does not protect against an $X$ error.  A full distance-$3$ quantum code must distinguish the Pauli components $X,Y,Z$ on every one of its protected locations.

Shor's nine-qubit code concatenates the two repetition ideas \cite{shor-code}.  One layer protects phase information by repeating in the $X$ basis, while three-qubit GHZ blocks protect each layer against bit flips.  Its codewords can be written
\[
 \ket{0_L}=\frac{(\ket{000}+\ket{111})^{\otimes3}}{2\sqrt2},
 \qquad
 \ket{1_L}=\frac{(\ket{000}-\ket{111})^{\otimes3}}{2\sqrt2}.
\]
The construction is pedagogically transparent but not economical in qubit count.

Steane's seven-qubit code is a CSS code derived from the classical $[7,4,3]$ Hamming code \cite{steane-code}.  It encodes one qubit and has distance three, using separate $X$- and $Z$-type parity checks with the same support pattern.  The CSS structure allows bit and phase components to be decoded by related classical syndromes.

The five-qubit perfect code is the smallest code correcting an arbitrary error on one qubit.  It is not CSS: its stabilizers mix $X$ and $Z$ factors.  These examples show that full quantum protection is not tied to one construction.  The invariant statement is the distance or, more generally, the Knill--Laflamme condition for the declared error span.

\begin{center}
\begin{tabularx}{0.96\textwidth}{@{}>{\sffamily\bfseries}lccccX@{}}
\toprule
code & $n$ & $k$ & $d$ & CSS? & main lesson \\
\midrule
repetition &3&1&1&yes&bit-flip distance is $3$ under the restricted promise\\
five-qubit &5&1&3&no&smallest full one-qubit correction\\
Steane &7&1&3&yes&classical parity-check structure for $X$ and $Z$\\
Shor &9&1&3&yes&concatenate bit- and phase-repetition ideas\\
surface family &variable&variable&variable&yes&local checks and scalable distance\\
\bottomrule
\end{tabularx}
\end{center}

Formally, the repetition code is CSS with an empty $X$-check matrix; this classification does not imply protection in both Pauli sectors.  The code does not have quantum distance $3$ because the weight-one operator $Z_1$ acts nontrivially within the code without being detected.  It has distance $3$ only relative to the promised classical bit-flip error model.  Writing the promise into the parameter table prevents an instructive toy code from being mistaken for a full quantum code.

Distance still does not specify a fault-tolerant architecture.  The five-qubit code has excellent qubit efficiency but weight-four mixed-Pauli checks.  Surface codes use more qubits for a given distance but offer geometrically local checks and well-developed repeated syndrome circuits.  Hardware connectivity, native gates, measurement fidelity, leakage, and decoder speed determine which tradeoff is useful.

\subsection{The Knill--Laflamme condition}

The decisive question is not whether error sectors look different in a chosen example.  It is whether the environment or syndrome can learn anything about the logical state.

\begin{theorem}[Knill--Laflamme \cite{knill-laflamme}]
Let $P$ project onto a code space $\cC$ and let $\mathscr E=\{E_a\}$ be a set of errors.  There exists an exact recovery correcting the linear span of $\mathscr E$ if and only if there is a matrix of scalars $C=(c_{ab})$ such that
\begin{equation}
 P E_a^\dagger E_b P=c_{ab}P
 \qquad\text{for every }a,b.
 \label{eq:knill-laflamme}
\end{equation}
\end{theorem}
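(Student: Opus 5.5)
I will prove the two directions separately. For necessity I will use the complementary channel; for sufficiency I will construct the recovery explicitly from a diagonalized error basis.

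\textbf{Necessity.} Suppose a channel $\cR$ satisfies $\cR(E\rho E^\dagger)$-type recovery for the whole span. Concretely, assume $\cR\circ\cN(\rho)=\rho$ for every $\rho$ supported on $\cC$, where $\cN(\rho)=\sum_a E_a\rho E_a^\dagger$. After rescaling the $E_a$ so that $\sum_a E_a^\dagger E_a\le I$ on $\cC$, which changes $c_{ab}$ only by positive factors, write $\cR$ with Kraus operators $R_k$. Then $\{R_kE_a\}$ is a Kraus family for the composite channel, and that channel equals the identity channel on $\cC$. Unitary freedom of Kraus representations, in the form ``a channel acting as the identity on $\cC$ has every Kraus operator proportional to $I$ on $\cC$,'' gives
\[
 R_kE_aP=\lambda_{ka}P
\]
for scalars $\lambda_{ka}$. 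To justify this freedom I will take $\ket\psi\in\cC$ and note that $\sum_{k,a}R_kE_a\proj\psi E_a^\dagger R_k^\dagger=\proj\psi$ is pure. Hence each term is a multiple of $\proj\psi$. A short linearity argument on sums $\ket\psi+\ket\phi$ then shows the multiple is independent of $\psi$. Finally I compute
\[
 PE_a^\dagger E_bP=\sum_kPE_a^\dagger R_k^\dagger R_kE_bP=\Big(\sum_k\bar\lambda_{ka}\lambda_{kb}\Big)P,
\]
using $\sum_kR_k^\dagger R_k=I$. This is the required condition with $c_{ab}=\sum_k\bar\lambda_{ka}\lambda_{kb}$.

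\textbf{Sufficiency.} The condition says the matrix $C=(c_{ab})$ is Hermitian and positive semidefinite. It is Hermitian by taking adjoints, and positive because $\sum\bar x_ax_bc_{ab}P=PF^\dagger FP\ge0$ with $F=\sum_bx_bE_b$. I will diagonalize $C=U\,\mathrm{diag}(d_k)\,U^\dagger$ and pass to the new error basis $F_k=\sum_aU_{ak}E_a$. This basis has the same span and satisfies
\[
 PF_k^\dagger F_lP=d_k\delta_{kl}P.
\]
For $d_k=0$ we get $F_kP=0$, so these errors are harmless on the code. For $d_k>0$, polar decomposition gives $F_kP=\sqrt{d_k}\,U_kP$ with $U_k$ a partial isometry mapping $\cC$ onto $\cC_k=F_k\cC$. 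The off-diagonal vanishing makes the subspaces $\cC_k$ pairwise orthogonal. This is exactly the sector decomposition seen for the repetition code. With projectors $P_k=U_kPU_k^\dagger$, I define the recovery
\[
 \cR(\rho)=\sum_k PU_k^\dagger P_k\rho P_kU_kP+Q\rho Q,
\]
where $Q=I-\sum_kP_k$ makes $\cR$ trace preserving. A direct check gives $PU_k^\dagger P_kF_lP=\delta_{kl}\sqrt{d_k}P$, and the $Q$ term kills $F_l\cC$. It follows that $\cR(F_l\rho F_m^\dagger)\propto\rho$ for $\rho$ on $\cC$, with proportionality constant $\delta_{lm}d_l$. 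Linearity then extends recovery to the whole span of $\mathscr E$, and hence to any noise channel whose Kraus operators lie in that span.

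\textbf{Main obstacle.} I expect the delicate step to be the necessity direction's claim that each $R_kE_a$ restricts to a scalar on $\cC$, i.e.\ that no logical information leaks to the environment. I would prove it via the purity argument above, first on basis vectors of $\cC$ and then on their superpositions to match the scalars. The remaining work is bookkeeping of normalization and of the zero eigenvalues of $C$.
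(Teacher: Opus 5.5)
Your sufficiency argument is the paper's. You diagonalize the positive semidefinite $C$ and pass to the basis $F_k$ with $PF_k^\dagger F_lP=d_k\delta_{kl}P$. The maps $F_kP/\sqrt{d_k}$ are then isometries with pairwise orthogonal images, so you project onto these sectors, undo each isometry, and complete the map on the orthogonal complement. The difference is that you write out the Kraus operators $PU_k^\dagger P_k$ and $Q$ and verify $\cR(F_l\rho F_m^\dagger)=\delta_{lm}d_l\rho$, which the paper leaves implicit.

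Necessity is where you take a different route. The paper dilates the noise as $W\ket\psi=\sum_aE_a\ket\psi\otimes\ket a_E$. It then argues that, because recovery must restore arbitrary superpositions, the environment overlaps $\bra\phi V^\dagger E_a^\dagger E_bV\ket\psi$ cannot depend on the logical state. This gives $V^\dagger E_a^\dagger E_bV=c_{ab}I_L$ and ties directly to the folded operator $F(a,b)$. However, the key step, that the environment may carry no logical information, is asserted rather than derived.

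You follow the alternative the paper only mentions, reversibility of the channel restricted to the code. You show that each Kraus operator $R_kE_a$ of $\cR\circ\cN$ acts as a scalar on $\cC$, via your purity-plus-linearity lemma. Then $c_{ab}=\sum_k\bar\lambda_{ka}\lambda_{kb}$ follows from $\sum_kR_k^\dagger R_k=I$. This gives a rigorous proof and exhibits $C$ as a Gram matrix, so its positivity is visible. It is also the precise content of the paper's heuristic: the combined record $\sum_{k,a}\lambda_{ka}\ket a\ket k$ of the noise and recovery environments is independent of the logical state.

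There is one small inconsistency to repair. After you rescale so that $\sum_aE_a^\dagger E_a\le I$ on $\cC$, you cannot demand $\cR\circ\cN(\rho)=\rho$, since the trace may drop. Assume $\cR\circ\cN(\rho)=\kappa\rho$ instead, or apply your lemma separately to each $\cR(E_a\rho E_a^\dagger)\propto\rho$, which removes the need to rescale. Your purity argument works unchanged either way. Also, what you actually use is the Kraus form of the composite channel, not the complementary channel named in your opening sentence.
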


The scalar $c_{ab}$ may distinguish or correlate error descriptions, but it cannot depend on the encoded logical state.  The matrix $C$ is positive semidefinite.  Off-diagonal entries may be nonzero, and degenerate errors may act identically on the code.  By changing the Kraus basis one may diagonalize $C$, but the general condition is scalar action, not specifically the numbers $0$ and $1$.

\subsubsection{Why the condition is necessary}

Let the noise isometry be
\[
 W\ket\psi=\sum_a E_a\ket\psi\otimes\ket a_E.
\]
For encoded logical states $\ket\phi,\ket\psi$, the environment overlap between branches $a,b$ is controlled by
\[
 \bra\phi V^\dagger E_a^\dagger E_bV\ket\psi.
\]
If recovery is to restore an arbitrary superposition, the environment may carry information about which error sector occurred but not about the logical amplitudes.  The operator on the logical space must therefore be a scalar:
\[
 V^\dagger E_a^\dagger E_bV=c_{ab}I_L.
\]
Multiplying by $V$ and $V^\dagger$ gives \cref{eq:knill-laflamme}.  Necessity can also be obtained from reversibility of the channel restricted to the code, but the environment argument makes the information-flow content visible.

\subsubsection{Why the condition is sufficient}

Because $C$ is positive semidefinite, choose a unitary change of error basis that diagonalizes it.  The new error operators $F_r$ obey
\[
 P F_r^\dagger F_sP=\lambda_r\delta_{rs}P.
\]
For every $\lambda_r>0$, the map $F_rV/\sqrt{\lambda_r}$ is an isometry, and different $r$ have orthogonal images.  A syndrome measurement can project onto these images without learning the logical state.  On sector $r$, apply the inverse isometry to return to the code.  Extending the construction arbitrarily on the unused orthogonal complement produces a completely positive trace-preserving recovery.  Thus the pairwise scalar condition constructs the sector decomposition needed for correction.

This proof also explains why checking only $PE_aP$ is insufficient.  Coherent noise contains cross terms $E_a\rho E_b^\dagger$, so recovery must preserve interference for every pair.  The products $E_a^\dagger E_b$ compare precisely those branches.

\subsubsection{Degenerate errors}

Different physical operators may have the same action on the code.  If $E_aP=E_bP$, then
\[
 P E_a^\dagger E_bP=P E_a^\dagger E_aP,
\]
and the corresponding off-diagonal $c_{ab}$ need not vanish.  The syndrome need not distinguish $E_a$ from $E_b$ because no recovery decision depends on that distinction.  Surface codes are highly degenerate: multiplying an error string by a stabilizer changes its physical support while leaving its logical action and endpoint syndrome unchanged.

\begin{warningbox}[A frequent oversimplification]
The formula $PE_a^\dagger E_bP=\delta_{ab}P$ is sufficient but not the general theorem.  It describes a convenient orthonormal error basis for a nondegenerate example.  The correct invariant statement is scalar action $c_{ab}P$ after restriction to the code.
\end{warningbox}

\subsubsection{Approximate correction}

Physical noise rarely satisfies an exact finite promise.  Approximate QEC asks whether there is a recovery for which the corrected logical channel is close to the identity according to entanglement fidelity, diamond distance, or another operational metric.  A useful perturbative form of Knill--Laflamme is
\begin{equation}
 P E_a^\dagger E_bP=c_{ab}P+P\Delta_{ab}P,
 \label{eq:approx-kl}
\end{equation}
where the residual operators $\Delta_{ab}$ act weakly or nearly as scalars on the code.  The size and collective structure of these residuals bound how much logical information leaks to the environment.

Approximate correction is not permission to ignore a failed exact test.  One must choose an error metric and quantify the residual.  A small operator norm per pair may still accumulate across many correlated channels; conversely, a code can perform well for a high-probability noise distribution even if a worst-case low-probability operator is poorly corrected.  The declared metric determines the claim.

The distinction between exact and approximate statements parallels the later experiment discussion.  A finite device spectrum can approximate a target weighted graph within disorder and linewidth.  A logical protocol can approximate an identity channel within a measured failure rate.  In both cases the tolerance is part of the theorem or benchmark, not an aesthetic judgement from a plot.

For the repetition error set $\{I,X_1,X_2,X_3\}$, equal errors give $E_a^\dagger E_a=I$.  If $a\neq b$, the product flips one or two physical qubits and maps $\operatorname{span}\{\ket{000},\ket{111}\}$ to an orthogonal subspace.  Hence
\[
 P E_a^\dagger E_bP=\delta_{ab}P.
\]
The clean $0/I$ outcome is therefore a special nondegenerate example of \cref{eq:knill-laflamme}.

\subsection{Folding evaluates the same operator}

Because $P=VV^\dagger$, the Knill--Laflamme operator on the logical space is
\begin{equation}
 F(a,b)=V^\dagger E_a^\dagger E_bV.
 \label{eq:folded-map}
\end{equation}
Graphically, encode, apply the forward history $E_b$, reverse the history $E_a$, and decode.  Exact correction requires the folded logical wire to reduce to $c_{ab}I_L$.

\begin{figure}[ht]
\centering
\begin{tikzpicture}[node distance=10mm]
  \draw[wire] (-1.0,0)--(-0.25,0);
  \node[gate,minimum width=13mm] (v) at (0.45,0) {$V$};
  \draw[wire] (v.east)--++(0.7,0); \node[gate,minimum width=16mm] (eb) at (2.5,0) {$E_b$};
  \draw[wire] (eb.east)--++(0.7,0); \node[gate,minimum width=16mm] (ea) at (4.7,0) {$E_a^\dagger$};
  \draw[wire] (ea.east)--++(0.7,0); \node[gate,minimum width=13mm] (vd) at (6.85,0) {$V^\dagger$};
  \draw[wire] (vd.east)--++(0.8,0);
  \node[font=\sffamily\small,above=3mm of eb] {physical comparison};
  \node[font=\sffamily\small,below=5mm of eb] {$F(a,b)=V^\dagger E_a^\dagger E_bV$};
  \draw[flow] (8.2,0)--(9.2,0);
  \draw[wire] (9.45,0)--(11.0,0);
  \node[above,font=\sffamily\small] at (10.2,0.08) {$c_{ab}I_L$};
\end{tikzpicture}
\caption{Folding is not a competing correctability condition; it is a diagrammatic evaluation of the Knill--Laflamme operator on the logical wire.}
\label{fig:folding-kl}
\end{figure}

For the repetition code, equal bit-flip histories cancel and the encoding spider fuses with its adjoint to a straight wire.  Distinct promised histories leave a mismatched computational-basis pattern, which the adjoint spider annihilates.  The local ZX reductions therefore certify the same $\delta_{ab}I_L$ calculation.

The diagrammatic and algebraic proofs have different strengths.  The operator equation is basis-independent, handles general Kraus spans and degeneracy, keeps scalar normalization explicit, and extends naturally to approximate correction.  The diagram exposes composition, locality, and the distinction between syndrome data and logical data.  The strongest method uses both.

\subsection{From syndromes to footprints}

In the repetition code the two parity bits are locally visible information left by an error history.  A more general field-theoretic formulation calls such information a \emph{footprint} \cite{rayan-diagrammatic}.

Let $R$ be a bounded region of a circuit spacetime and let $\mathscr H_R$ denote a chosen class of error histories supported in $R$.  A footprint map
\[
 \Foot_R:\mathscr H_R\longrightarrow\mathscr F_{\partial R}
\]
retains specified locally visible data on the boundary of $R$: for example charge sectors, defect endpoints, parity outcomes, locations, or measurement rounds.  For an observed footprint $f$, the fibre
\begin{equation}
 \Foot_R^{-1}(f)=\{e\in\mathscr H_R:\Foot_R(e)=f\}
 \label{eq:footprint-fibre}
\end{equation}
is the candidate set presented to the decoder.

The fibre is a set of histories, not an additional physical object supported on the boundary.  Its elements occupy the interior region $R$; their common footprint is the boundary datum $f$.  Correctability becomes a fibrewise question: do all histories compatible with the observed local data have a logical-state-independent comparison, or at least admit a common recovery class?

\begin{figure}[ht]
\centering
\begin{tikzpicture}[node distance=8mm and 14mm]
  \node[boxnode,text width=34mm] (hist) {interior histories\\$e_1,e_2,e_3,\ldots$};
  \node[coralnode,text width=30mm,right=of hist] (foot) {observed footprint\\$f$ on $\partial R$};
  \node[greennode,text width=35mm,right=of foot] (dec) {decoder chooses\\a recovery class};
  \draw[flow] (hist)--node[above,font=\sffamily\scriptsize] {$\Foot_R$} (foot);
  \draw[flow] (foot)--(dec);
  \node[below=6mm of foot,font=\sffamily\small] {$\Foot_R^{-1}(f)$ is the ambiguity the decoder must resolve};
\end{tikzpicture}
\caption{A syndrome is one kind of footprint.  In larger codes, one footprint generally has several compatible histories.}
\label{fig:footprint-fibre}
\end{figure}

\subsubsection{A decoder chooses a class, not a microscopic past}

Given footprint $f$, a decoder returns a recovery $r(f)$.  Success does not require $r(f)$ to equal the actual error history $e$.  It requires the comparison $r(f)e$ to act trivially on the logical information.  In a stabilizer code this means
\[
 r(f)e\in\mathcal S
\]
up to a physically irrelevant phase, where $\mathcal S$ is the stabilizer group.  If instead $r(f)e$ belongs to a nontrivial logical coset, the recovery introduces or completes a logical fault.

Maximum-likelihood decoding therefore asks for the most probable \emph{logical class} conditioned on the observed syndrome, not necessarily the most probable individual error.  Minimum-weight decoding is an approximation justified for some independent low-rate noise models; degeneracy can make the total probability of a larger class exceed that of its single most likely representative.

The field-theoretic language separates three objects:
\begin{enumerate}
\item the interior support and time ordering of a history $e$;
\item the measured boundary datum $f=\Foot_R(e)$;
\item the residual logical class of $r(f)e$.
\end{enumerate}
This separation is useful for repeated measurement, where measurement faults themselves form spacetime edges and a syndrome \emph{change} between rounds is often the relevant footprint.

\subsection{Stabilizers and CSS parity checks}

An $n$-qubit stabilizer code is the joint $+1$ eigenspace of an abelian subgroup $\mathcal S$ of the Pauli group that does not contain $-I$ \cite{gottesman}.  If $n-k$ independent generators are imposed, the code space has dimension $2^k$.  A Pauli error anticommutes with some stabilizer generators; the signs of those measured eigenvalues form its syndrome.

A CSS code separates $X$-type and $Z$-type stabilizers.  Let binary parity-check matrices $H_X$ and $H_Z$ have rows specifying their supports.  Commutation is exactly
\begin{equation}
 H_XH_Z^{\mathsf T}=0\pmod2.
 \label{eq:css-orthogonality}
\end{equation}
An $X$ error represented by a binary vector $e_X$ produces $Z$-check syndrome $H_Ze_X^{\mathsf T}$; a $Z$ error produces $X$-check syndrome $H_Xe_Z^{\mathsf T}$.  Logical operators lie in the corresponding kernels but outside the row spaces generated by stabilizers.

For the three-qubit repetition code,
\[
 H_Z=\begin{pmatrix}1&1&0\\0&1&1\end{pmatrix}.
\]
It checks $X$ errors but has no nontrivial $X$-type stabilizers protecting against $Z$ errors.  The all-ones vector lies in $\ker H_Z$ and represents the logical $X=X_1X_2X_3$.  This binary calculation is the finite-dimensional precursor of the chain-complex description of surface codes.

\begin{figure}[ht]
\centering
\begin{tikzpicture}[node distance=7mm and 9mm]
  \node[boxnode,text width=30mm] (e) {Pauli error vector\\$e_X$ or $e_Z$};
  \node[coralnode,text width=30mm,right=of e] (h) {parity checks\\$H_Ze_X^{\mathsf T}$\\or $H_Xe_Z^{\mathsf T}$};
  \node[greennode,text width=34mm,right=of h] (q) {kernel modulo row space\\logical quotient};
  \draw[flow] (e)--(h); \draw[flow] (h)--(q);
  \node[below=7mm of h,font=\sffamily\small,align=center] {local syndrome first;\\global equivalence second};
\end{tikzpicture}
\caption{CSS decoding has a linear-algebra layer and a quotient layer.  Zero syndrome means membership in a kernel, not necessarily membership in the stabilizer row space.}
\label{fig:css-kernel-quotient}
\end{figure}

\subsection{Surface codes: boundaries, cycles, and logical classes}

Let a closed surface $\Sigma$ be equipped with a finite cellulation.  Over $\Ftwo$ its cellular chain complex is
\begin{equation}
 C_2(\Sigma)\xrightarrow{\partial_2}C_1(\Sigma)
 \xrightarrow{\partial_1}C_0(\Sigma),
 \qquad \partial_1\partial_2=0.
 \label{eq:cellular-chain}
\end{equation}
Place data qubits on edges.  In one standard convention, $Z$-type error strings are 1-chains $e\in C_1$.  Their endpoint syndrome is $\partial_1e$.  Thus
\[
 Z_1=\ker\partial_1
\]
is the space of closed, locally undetectable strings.  Face boundaries form
\[
 B_1=\im\partial_2
\]
and act as products of plaquette stabilizers.  The logical $Z$ classes are therefore
\begin{equation}
 H_1(\Sigma;\Ftwo)=Z_1/B_1.
 \label{eq:logical-homology}
\end{equation}
The dual statement identifies logical $X$ operators with
\[
 H^1(\Sigma;\Ftwo)=Z^1/B^1,
\]
or equivalently with nontrivial cycles on the dual cellulation.  The $X/Z$ naming may be interchanged by convention; the quotient structure does not change.

The exact criterion is ``not a boundary,'' not merely ``noncontractible.''  A separating simple closed curve on a higher-genus surface can be noncontractible yet homologically trivial because it bounds a subsurface.  If that subsurface is a union of faces, the corresponding string is a stabilizer product.  Homology supplies the invariant that removes the ambiguity.

The quotient in \cref{eq:logical-homology} is summarized by the short exact sequence
\begin{equation}
 0\longrightarrow B_1\longrightarrow Z_1\longrightarrow H_1\longrightarrow0.
 \label{eq:short-exact-homology}
\end{equation}
Exactness at $Z_1$ says precisely that the closed strings representing the zero logical class are the boundaries.

Now let two error histories $e,e'$ have the same endpoint footprint:
\[
 \partial_1e=\partial_1e'.
\]
Over $\Ftwo$, their difference $e+e'$ is a cycle.  If $e+e'\in B_1$, the two histories differ by stabilizers and belong to the same recovery class.  If $[e+e']\neq0$ in $H_1$, they differ by a logical operator.  Topology enters decoding at exactly this point.

\begin{figure}[ht]
\centering
\begin{tikzpicture}[scale=0.95]
  \draw[DeepBlue,line width=1pt,rounded corners=2mm] (0,0) rectangle (4.4,2.6);
  \foreach \x in {0.55,1.65,2.75,3.85}{\draw[MidGray!55] (\x,0)--(\x,2.6);}
  \foreach \y in {0.52,1.3,2.08}{\draw[MidGray!55] (0,\y)--(4.4,\y);}
  \draw[SpiderGreen,line width=3pt,rounded corners=2mm] (0.55,0.52)--(1.65,0.52)--(1.65,1.3)--(0.55,1.3)--cycle;
  \node[SpiderGreen,font=\sffamily\small] at (1.1,-0.35) {face boundary $\in B_1$};
  \draw[Coral,line width=3pt] (0,2.08)--(4.4,2.08);
  \draw[Coral,-{Latex[length=2mm]},line width=1pt] (4.3,2.25) to[bend left=25] (0.1,2.25);
  \node[Coral,font=\sffamily\small] at (2.2,3.05) {closed after edge identification; nonzero in $H_1$};
  \node[font=\sffamily\scriptsize,rotate=90] at (-0.22,1.3) {identified};
  \node[font=\sffamily\scriptsize,rotate=90] at (4.62,1.3) {identified};
  \begin{scope}[xshift=66mm]
    \node[boxnode,text width=18mm] (b) at (0,1.8) {$B_1$\\boundaries};
    \node[boxnode,text width=18mm] (z) at (3.0,1.8) {$Z_1$\\cycles};
    \node[coralnode,text width=18mm] (h) at (6.0,1.8) {$H_1$\\logical classes};
    \draw[flow] (-1.7,1.8)--(-1.0,1.8) node[left,font=\sffamily\small] {$0$};
    \draw[flow] (b)--(z); \draw[flow] (z)--(h); \draw[flow] (h)--++(1.5,0) node[right,font=\sffamily\small] {$0$};
    \node[below=6mm of z,font=\sffamily\small] {$\ker(Z_1\to H_1)=B_1$};
  \end{scope}
\end{tikzpicture}
\caption{Left: a plaquette boundary and a homologically nontrivial closed string.  Right: the exact sequence formalizing stabilizers and logical equivalence classes.}
\label{fig:homology-strings}
\end{figure}

\subsubsection{Counting encoded qubits on a closed surface}

For a connected closed orientable surface of genus $g$,
\[
 \dim H_0=1,\qquad \dim H_1=2g,\qquad \dim H_2=1.
\]
With one qubit on each edge, there is a vertex check and a face check, but the product of all vertex checks and the product of all face checks are each constrained to the identity.  The number of independent stabilizers is therefore
\[
 (V-1)+(F-1)=V+F-2.
\]
Using $V-E+F=2-2g$, the encoded-qubit count is
\begin{equation}
 k=E-(V+F-2)=2g.
 \label{eq:surface-code-k}
\end{equation}
This agrees with the two independent homology directions for each handle.  On a torus, $g=1$ and there are two logical qubits: one may choose primal $Z$ loops around the two basic cycles and dual $X$ loops intersecting them once.

The intersection pairing explains logical anticommutation.  If a primal $Z$ string and a dual $X$ string cross an odd number of times, their Pauli operators anticommute; if the intersection number is even, they commute.  Thus homology supplies the logical supports while the primal--dual intersection form supplies the encoded Pauli algebra.

\subsubsection{A syndrome fibre on a surface}

Let defects appear at two vertices.  Any 1-chain joining the pair has the same boundary and hence the same endpoint syndrome.  Two candidate paths $e$ and $e'$ differ by the closed chain $e+e'$.  There are three possibilities:
\begin{enumerate}
\item $e+e'$ is a sum of face boundaries, so the recoveries are stabilizer equivalent;
\item $e+e'$ is a nontrivial homology class, so the choices differ logically;
\item on a surface with boundary, $e+e'$ is relative and its class depends on which endpoint sets are allowed to absorb strings.
\end{enumerate}

For independent edge noise, minimum-weight perfect matching pairs observed defects using short paths.  The algorithm chooses one representative in the syndrome fibre.  Its logical failure probability is the probability that the chosen representative and the actual error close to a nontrivial cycle.  This is the global event that no collection of local endpoint checks can decide by itself \cite{dennis-et-al}.

\begin{keyidea}[The code-space quotient in three languages]
In stabilizer language, harmless differences lie in $\mathcal S$.  In CSS linear algebra, they lie in a check row space.  In surface topology, they are cellular boundaries.  Logical errors are the corresponding normalizer, kernel, or cycle classes after quotienting out those harmless differences.
\end{keyidea}

\subsection{Boundaries, punctures, and gluing}

For a surface with an allowed boundary region $A\subseteq\Sigma$, relative homology classifies strings permitted to end on $A$.  The short exact sequence of chain complexes
\[
 0\longrightarrow C_\bullet(A)\longrightarrow C_\bullet(\Sigma)
 \longrightarrow C_\bullet(\Sigma,A)\longrightarrow0
\]
induces the long exact sequence
\begin{equation}
 \cdots\longrightarrow H_1(A)\longrightarrow H_1(\Sigma)
 \longrightarrow H_1(\Sigma,A)
 \xrightarrow{\delta} H_0(A)\longrightarrow H_0(\Sigma)\longrightarrow\cdots.
 \label{eq:relative-les}
\end{equation}
The connecting map $\delta$ records endpoint or boundary-component data.  Dual cohomology treats the complementary Pauli boundary condition.  For cutting, code surgery, and gluing, the Mayer--Vietoris long exact sequence tracks which logical classes survive, merge, or become boundaries \cite{hatcher}.

For example, let $A$ contain two disjoint boundary components on which a primal string may end.  A relative 1-cycle can be an arc joining those components.  Its ordinary boundary is nonzero, but that boundary lies in $C_0(A)$ and therefore vanishes in the relative complex.  The connecting map sends the relative class to the difference of its endpoint components in $H_0(A)$.  Such an arc can represent a logical operator even though it is not a closed curve in the ordinary sense.

Punctures can be treated as additional boundary components.  Moving or merging them changes the relative groups and hence the available logical operators.  In lattice surgery, a temporary set of joint checks changes which strings count as boundaries across an interface; measuring the new checks extracts a logical parity.  The long exact sequence is useful because it tracks the change in classes without relying on an ambiguous picture of whether a curve ``looks contractible.''

Gluing also has a circuit interpretation.  Each code patch has a boundary Hilbert space and a set of permitted endpoint data.  Joining patches identifies part of those boundaries and sums over compatible data, much like contracting two diagrammatic gates.  New closed histories may be created by the gluing, while some former boundary-to-boundary logical strings become stabilizers.  Topological exact sequences provide the bookkeeping for this change of type.

\subsection{Distance and the hyperbolic tradeoff}

The distance of a stabilizer code is the minimum weight of a nontrivial logical Pauli.  In a homological surface code it is the shorter of the minimum lengths of a nonzero primal homology class and a nonzero dual class.  A distance-$d$ code corrects arbitrary errors of weight at most $\lfloor(d-1)/2\rfloor$, although stochastic logical performance also depends on the decoder and noise model \cite{gottesman,dennis-et-al}.

Hyperbolic surface codes change the relation among the number $n$ of physical qubits, the number $k$ of logical qubits, and the distance.  For a regular $\{p,q\}$ cellulation of a closed orientable surface, with qubits on edges,
\[
 pF=2E,\qquad qV=2E,
\]
so Euler's formula gives
\begin{equation}
 2-2g=V-E+F=E\left(\frac2q+\frac2p-1\right).
 \label{eq:hyperbolic-euler}
\end{equation}
When $1/p+1/q<1/2$, the tiling is hyperbolic and the coefficient is negative.  The genus $g$, and hence $k=2g$ for the closed surface code, grows linearly with $E=n$.  This yields a nonzero asymptotic encoding rate with bounded-weight local checks.  The price in typical closed hyperbolic families is distance growing only logarithmically with $n$, rather than the square-root scaling familiar from planar Euclidean surface-code families.  The result is a different rate--distance tradeoff, not a universal dominance claim \cite{mahmoud-hqec}.

Rearranging \cref{eq:hyperbolic-euler} gives an explicit rate formula:
\begin{equation}
 k=2g=2+n\left(1-\frac2p-\frac2q\right),
 \qquad
 \frac kn=1-\frac2p-\frac2q+\frac2n.
 \label{eq:hyperbolic-rate}
\end{equation}
For a $\{5,4\}$ family the asymptotic rate is $1/10$; for $\{8,3\}$ it is $1/12$.  These are ideal closed regular-cellulation counts.  A finite experimental graph, an open patch, or a medial construction can have different vertex and edge counts and should not be inserted into \cref{eq:hyperbolic-rate} without checking that it is the same cell complex.

The distance is a combinatorial systole: the shortest nonzero class in the primal or dual first homology.  Negative curvature allows the number of cells to grow exponentially with graph radius, so a loop whose length is proportional to the radius may have length only $O(\log n)$.  The same exponential growth is what permits a linear number of handles and hence constant rate.  Rate and distance are two faces of the same geometry.

Practical code assessment therefore requires at least
\[
 (n,k,d_X,d_Z),
\]
the row weights of $H_X,H_Z$, the distribution of short logical representatives, a decoder, and a noise model.  Equal nominal distance does not guarantee equal logical error rate: the number of near-minimum logical paths and the degeneracy of syndrome fibres also matter.

The Hyperbolic Cycle Basis framework constructs the plaquette cycles, nontrivial cycles, and cocycles algorithmically for periodic hyperbolic lattices, enabling systematic computation of parity checks, logical operators, distances, and decoding benchmarks \cite{mahmoud-hqec}.

A useful computational workflow is:
\begin{enumerate}
\item build the quotient cellulation and its boundary matrices $\partial_2,\partial_1$;
\item row-reduce to compute $Z_1=\ker\partial_1$ and $B_1=\im\partial_2$;
\item select representatives for a basis of $H_1=Z_1/B_1$;
\item compute the dual cocycles and their intersection pairing;
\item search each nonzero class for minimum-weight representatives;
\item benchmark a decoder on sampled syndrome fibres.
\end{enumerate}
The cycle-basis step is not merely visualization.  It supplies check matrices, logical operator matrices, and consistency tests that can be passed to a circuit compiler.

\subsection{Foliation and fault-tolerant measurement-based computation}

A static two-dimensional code becomes a spacetime resource when successive primal and dual Tanner-graph layers are coupled into a three-dimensional cluster state.  This process is called foliation.  Logical 1-cycles and 1-cocycles sweep through the foliation direction to form correlation surfaces, while local check relations become three-dimensional neighbourhood constraints.

\begin{figure}[ht]
\centering
\begin{tikzpicture}[scale=0.92]
  \foreach \sx/\sy/\col in {0/0/SpiderGreen,1.8/0.8/Coral,3.6/1.6/SpiderGreen}{
    \begin{scope}[shift={(\sx,\sy)}]
      \draw[\col,line width=1pt,fill=\col!8] (0,0) -- (1.2,0.3) -- (1.7,1.25) -- (0.85,2.0) -- (-0.2,1.35) -- cycle;
      \foreach \x/\y in {0/0,1.2/0.3,1.7/1.25,0.85/2.0,-0.2/1.35}{\fill[\col] (\x,\y) circle (1.5pt);}
    \end{scope}
  }
  \foreach \x/\y in {0/0,1.2/0.3,1.7/1.25,0.85/2.0,-0.2/1.35}{
    \draw[DeepBlue,dashed,line width=0.8pt] (\x,\y) -- ++(1.8,0.8);
    \draw[DeepBlue,dashed,line width=0.8pt] ({\x+1.8},{\y+0.8}) -- ++(1.8,0.8);
  }
  \node[font=\sffamily\small,SpiderGreen] at (1.0,-0.45) {primal layer};
  \node[font=\sffamily\small,Coral] at (3.0,0.25) {dual layer};
  \node[font=\sffamily\small,SpiderGreen] at (5.0,0.95) {primal layer};
  \draw[flow] (7.2,1.5)--(8.5,1.5);
  \node[boxnode,text width=43mm] at (11.1,1.5) {single-qubit measurements\\$+$ syndrome extraction\\$+$ decoding\\$\Rightarrow$ logical memory/computation};
\end{tikzpicture}
\caption{Schematic of foliation: alternating code layers are joined along a discrete third direction to form a cluster-state resource.}
\label{fig:foliation}
\end{figure}

Recent hyperbolic cluster-state constructions retain the constant-rate feature of their two-dimensional hyperbolic code layers while exhibiting simulated fault-tolerance thresholds comparable to the Euclidean Raussendorf--Harrington--Goyal construction under a circuit-level depolarizing model \cite{mahmoud-cluster}.  The comparison is model-dependent, but the construction shows that negative curvature can be carried from a static code into a repeated measurement protocol.

\subsubsection{Cluster stabilizers and correlation surfaces}

For a graph $G=(V,E)$, the graph state is prepared by initializing every vertex in $\ket+$ and applying controlled-$Z$ on every edge.  Its stabilizer generators are
\begin{equation}
 K_v=X_v\prod_{u\in N(v)}Z_u.
 \label{eq:graph-state-stabilizer}
\end{equation}
Foliation starts from the Tanner graphs of the code checks and couples alternating primal and dual layers so that products of local cluster stabilizers form extended correlation surfaces.  Measuring most cluster qubits consumes the resource while propagating logical information through the foliation direction.

A data error or a wrong measurement outcome changes the parity of nearby correlation-surface checks.  In a spacetime decoding graph, the locations where parity changes are defect events.  A chain of physical or measurement faults has those events as its boundary.  The decoder again chooses a chain with the observed boundary, and failure occurs when the actual and chosen chains close to a nontrivial spacetime logical class.

This is the repeated-measurement version of the footprint fibre.  A single syndrome round cannot distinguish a data fault from a faulty check outcome.  Differences between consecutive rounds supply temporal boundary information.  At the initial and final time slices, state preparation and readout determine relative boundary conditions, so the relevant classification is naturally relative homology in a three-dimensional complex.

\subsubsection{What a threshold statement requires}

A numerical threshold is not a property of a code graph alone.  It belongs to a family of increasing instances together with a complete protocol: preparation, entangling-gate schedule, measurement order, decoder, and stochastic fault model.  Below threshold, the logical failure probability decreases with growing distance; above threshold it does not.  Changing biased noise, erasure information, measurement latency, or decoder approximation can change the crossing point.

For constant-rate hyperbolic families, one must also state how many logical qubits are being protected and which aggregate failure statistic is reported.  The probability that \emph{any} one of $k=\Theta(n)$ logical qubits fails can scale differently from the failure probability of a selected logical qubit.  The constant-rate advantage therefore makes benchmarking more informative but also more demanding.

\subsection{Problems for Correcting Quantum Circuits}

\begin{enumerate}[label=\textbf{3.\arabic*.},leftmargin=3.3em]
\item \textbf{Pauli span.}  Expand the amplitude-damping Kraus operators in the Pauli basis.  Explain why a code that corrects $I,X,Y,Z$ on one location corrects their linear combinations even though amplitude damping is not a random Pauli channel.

\item \textbf{Repetition syndrome.}  Propagate $\alpha\ket{000}+\beta\ket{111}$ through \cref{fig:repetition-syndrome-circuit} with each of $I,X_1,X_2,X_3$.  Show that the ancilla outcomes are independent of $\alpha,\beta$.

\item \textbf{Two flips.}  List the syndromes of $X_1X_2$, $X_1X_3$, and $X_2X_3$.  Under a minimum-weight single-error decoder, determine the residual logical operation in each case.

\item \textbf{Knill--Laflamme matrix.}  Let two error representatives obey $E_2P=E_1P$.  Compute the associated $2\times2$ block of $C=(c_{ab})$ and diagonalize it.  Interpret the zero eigenvalue.

\item \textbf{CSS quotient.}  For the repetition matrix $H_Z$, compute its kernel and row space over $\Ftwo$.  Identify the logical $X$ class and verify that the zero syndrome alone does not imply a stabilizer.

\item \textbf{Surface-code count.}  Derive \cref{eq:surface-code-k} using both stabilizer ranks and homology.  Repeat for a disconnected cellulation and identify which step must be modified.

\item \textbf{Relative string.}  Draw an annulus with both boundary components able to absorb primal strings.  Compute the relevant part of the long exact sequence and classify an arc joining the two components.

\item \textbf{Hyperbolic parameters.}  Derive \cref{eq:hyperbolic-rate} for a closed regular $\{p,q\}$ cellulation.  Evaluate the limiting rate for $\{5,4\}$, $\{8,3\}$, and the Euclidean boundary cases $\{4,4\}$ and $\{6,3\}$.

\item \textbf{Spacetime decoding.}  In three repeated rounds of a single parity check, suppose the recorded outcomes are $0,1,0$.  Give two distinct fault histories with that footprint and explain which additional assumptions a decoder uses to rank them.
\end{enumerate}

\paragraph{Checks.}  In Problem 3, every two-flip pattern has the syndrome of the complementary one-flip pattern and recovery leaves $X_1X_2X_3$.  In Problem 8, the Euclidean cases have zero asymptotic rate in this closed regular-family formula.  Problem 9 has no unique answer without a noise model; that is the point.

\begin{bridgebox}[From correcting to geometrizing]
A hyperbolic code is first a combinatorial and topological object: qubits, checks, and logical cycles on a bounded-degree graph.  To execute it, a physical system must realize the graph, supply addressable quantum degrees of freedom, implement interactions, and support repeated measurement.  The next section separates the part already achieved experimentally---graph-level spectral emulation---from the nonlinear and fault-tolerant programme still ahead.
\end{bridgebox}

\subsection*{End-of-lecture synthesis: correcting without reading}

Quantum error correction does not identify a damaged codeword and then replace it with a memorized copy.  It extracts a \emph{footprint} of the error that is independent of the logical amplitudes.  The Knill--Laflamme condition makes this independence exact: for every pair of correctable errors, the folded operator $P E_a^\dagger E_b P$ must act as a scalar on the code space.  The scalar matrix may reveal degeneracy---different physical errors can have the same action on encoded states---but it may not reveal the unknown logical state.  A recovery can then distinguish the relevant orthogonal error sectors coherently and reverse their effect.

This operator theorem, its folded string diagram, and stabilizer syndromes are three views of the same constraint.  The theorem handles the full linear span of errors and proves necessity and sufficiency.  The diagram makes local cancellations and logical-state independence visible.  Stabilizer algebra turns those conditions into commuting parity measurements and efficient classical decoding.  A sound argument moves deliberately among all three rather than asking one representation to do every job.

Topological codes add a quotient.  A syndrome is a boundary, while a physical fault is one of many chains with that boundary.  A decoder chooses a plausible representative; recovery succeeds when its difference from the actual chain is a stabilizer and fails when that closed difference carries a nontrivial logical homology class.  Boundaries that absorb strings require relative homology, and repeated noisy measurements promote the same reasoning to a spacetime complex.  Thus ``find the error'' is the wrong instruction: infer an equivalence class using a stated noise model and decoder.

Finally, code parameters never stand alone.  Distance, rate, check weight, locality, syndrome schedule, decoder complexity, and the definition of logical failure form one engineering statement.  Hyperbolic cellulations can support constant asymptotic rate with logarithmic distance, unlike planar Euclidean patches, but this combinatorial advantage does not automatically supply a threshold or a device.  The geometrizing programme begins precisely at that gap: how can a physical platform reproduce the required incidence graph, and which experimentally observed properties certify only connectivity, which certify coherent quantum dynamics, and which would be needed for fault-tolerant operation?

\clearpage
\section{Geometrizing Quantum Circuits}
\label{sec:geometrizing}

\subsection{Geometry as executable connectivity}

A hyperbolic surface code begins as a cellulation of a negatively curved surface.  The physical qubits and checks depend on incidence data: which edge meets which vertex and which face.  A hardware platform does not have to bend its substrate into the hyperbolic plane.  It must implement the same weighted adjacency graph with sufficiently controlled local elements and couplings.

This distinction separates three geometries:

\begin{enumerate}
\item the intrinsic geometry or combinatorics of the target lattice;
\item the coordinate picture used to draw it, such as the Poincar\'e disk;
\item the Euclidean layout of components and wiring on a fabricated chip.
\end{enumerate}

In the Poincar\'e disk, the metric
\begin{equation}
 ds^2=\frac{4\,(dx^2+dy^2)}{(1-x^2-y^2)^2}
 \label{eq:poincare-metric}
\end{equation}
makes cells that look smaller near the boundary intrinsically congruent.  The coordinate contraction is not itself a fabrication instruction.  A graph compiler may instead place resonators wherever routing permits and encode the desired geometry into coupling strengths.

\subsubsection{Regular tilings and curvature}

A regular $\{p,q\}$ tiling has $p$ sides around every face and $q$ faces meeting at every vertex.  Its curvature type is determined by
\[
 \frac1p+\frac1q
 \begin{cases}
 >\frac12,&\text{spherical},\\
 =\frac12,&\text{Euclidean},\\
 <\frac12,&\text{hyperbolic}.
 \end{cases}
\]
This criterion can be understood from the angle budget.  A regular Euclidean $p$-gon has interior angle $(1-2/p)\pi$; fitting $q$ of them around a point requires equality.  In the hyperbolic plane, polygons of a given side count can have smaller angles, allowing more combinatorial branching than flat geometry permits.

An infinite hyperbolic tiling is not a finite code or chip.  There are two common finite constructions.  One may cut out an open patch, which has a large physical boundary, or quotient by a discrete group of translations to identify sides and form a compact surface of genus $g\geq2$.  The quotient preserves local incidence while turning long translation cycles into nontrivial homology classes.  Code distance, boundary access, and experimental routing depend on which construction is chosen.

\begin{figure}[ht]
\centering
\begin{tikzpicture}
  \node[boxnode,text width=36mm] (inf) {infinite regular tiling\\local rule $\{p,q\}$};
  \node[greennode,text width=38mm,right=13mm of inf] (patch) {finite open patch\\large accessible boundary};
  \node[coralnode,text width=38mm,right=13mm of patch] (quot) {compact quotient\\handles and logical cycles};
  \draw[flow] (inf)--node[above,font=\sffamily\scriptsize]{truncate} (patch);
  \draw[flow] (inf) to[bend left=23] node[above,font=\sffamily\scriptsize]{identify by a group} (quot);
  \node[below=8mm of patch,font=\sffamily\small,align=center] {same local polygons do not imply\\the same global topology};
\end{tikzpicture}
\caption{Two finite descendants of a hyperbolic tiling serve different goals: boundary experiments favour open patches, while closed surface codes use compact quotients.}
\label{fig:patch-vs-quotient}
\end{figure}

\subsubsection{Graph geometry versus metric geometry}

The unweighted adjacency graph records which sites are neighbours.  A weighted graph adds coupling magnitudes and phases.  A metric model adds a notion of distance, perhaps derived from graph paths or from an embedding in a continuum.  These layers should not be collapsed.  Two Euclidean chip layouts can realize the same weighted graph; two graphs with the same degree sequence can have very different spectra; and a finite graph can approximate selected continuum observables without possessing a unique underlying smooth metric.

For quantum error correction, the cell structure matters because checks are attached to vertices and faces.  For a linear resonator experiment, the one-particle weighted adjacency is the immediate object.  For a continuum field-theory comparison, the graph Laplacian and its scaling with lattice spacing matter.  A compiler must therefore begin by naming which geometric data are intended to survive fabrication.

For linear bosonic modes, a weighted tight-binding model has the form
\begin{equation}
 H=\sum_i\omega_i a_i^\dagger a_i
   +\sum_{i\neq j} t_{ij}a_i^\dagger a_j,
 \label{eq:tight-binding}
\end{equation}
where $t_{ij}$ is nonzero when the compiled graph couples sites $i$ and $j$.  In the single-excitation sector, the Hamiltonian is the weighted adjacency matrix up to the on-site terms.  This is the basic route from geometry to a measurable circuit spectrum.

If all on-site frequencies are equal, write the single-particle matrix as
\[
 h=\omega_0I+T.
\]
An eigenvector $u^{(\nu)}$ of $T$ defines a normal-mode operator
\[
 b_\nu^\dagger=\sum_i u_i^{(\nu)}a_i^\dagger,
\]
with frequency $\omega_0+\lambda_\nu$.  Degeneracies, isolated bands, and mode clusters are therefore spectral properties of the compiled weighted graph.  A local drive at site $j$ couples to mode $\nu$ in proportion to $u_j^{(\nu)}$; a mode can exist while being dark at a chosen port.

For continuum comparisons one often uses a graph Laplacian rather than adjacency:
\[
 L=D-T,
\]
where $D$ is the weighted degree matrix.  Appropriate rescaling and boundary conditions can make low-lying eigenvectors approximate smooth Laplace--Beltrami modes.  Adjacency and Laplacian spectra contain related but not identical information, so the experimental observable must be compared with the operator actually implemented by the circuit.

In linear input--output theory, coupling selected sites to ports gives a frequency-dependent response of the schematic form
\begin{equation}
 S(\omega)=I-iK^\dagger
 \left(\omega I-h+\frac i2KK^\dagger+\frac i2\Gamma\right)^{-1}K,
 \label{eq:scattering-resolvent}
\end{equation}
where $K$ records port couplings and $\Gamma$ internal loss.  Poles of the resolvent locate damped normal modes.  Numerator overlaps determine which poles are visible in a particular transmission coefficient.  \Cref{eq:scattering-resolvent} explains why spectra from several ports are combined: changing the boundary vector changes visibility without changing the underlying eigenvalues.

\subsection{The first superconducting proof of principle}

Koll\'ar, Fitzpatrick, and Houck showed that meandering coplanar-waveguide resonators can realize connectivity unavailable to ordinary planar nearest-neighbour lattices \cite{kollar}.  Resonators acted as sites and engineered intersections and couplers supplied adjacency.  Their device implemented a finite heptagon-kagome line graph and exhibited a spectrally isolated, highly degenerate flat band in microwave transmission.

The result established a crucial principle: a flat superconducting substrate can host an effective hyperbolic graph.  It did not implement quantum error correction, and the line-graph architecture exposed scaling limits.  Long meandering resonators, radial crowding in a disk-like embedding, and a design tailored to one finite graph made it difficult to scale to multiple compact quotients.  Those limitations became specifications for a later architecture: high-quality local resonators, direct graph edges, metric data encoded in couplings rather than physical distance, and a reusable construction across tilings.

\subsubsection{Why a line graph produces a flat band}

Given a graph $G$, its line graph $L(G)$ has one vertex for each edge of $G$; two vertices of $L(G)$ are adjacent when the corresponding edges of $G$ meet.  Let $B$ be an unsigned incidence matrix of $G$.  Then, up to a diagonal term,
\[
 B^{\mathsf T}B=2I+A_{L(G)}.
\]
Vectors in $\ker B$ are therefore eigenvectors of the line-graph adjacency with eigenvalue $-2$.  When the graph has many independent cycles, this kernel can be large, producing a highly degenerate flat band.  Alternating amplitudes around suitable even cycles give a concrete picture: destructive interference prevents leakage into neighbouring line-graph sites.

\begin{figure}[ht]
\centering
\begin{tikzpicture}[scale=0.95]
  \begin{scope}
    \foreach \a in {0,60,...,300}{\coordinate (v\a) at (\a:15mm);}
    \foreach \a/\b in {0/60,60/120,120/180,180/240,240/300,300/0}{\draw[DeepBlue,line width=1pt] (v\a)--(v\b);}
    \foreach \a in {0,60,...,300}{\fill[DeepBlue] (v\a) circle (1.7pt);}
    \node[font=\sffamily\small] at (0,-23mm) {parent cycle $G$};
  \end{scope}
  \draw[flow] (23mm,0)--(37mm,0);
  \begin{scope}[xshift=60mm]
    \foreach \a in {30,90,...,330}{\coordinate (e\a) at (\a:15mm);}
    \foreach \a/\b in {30/90,90/150,150/210,210/270,270/330,330/30}{\draw[Coral,line width=1pt] (e\a)--(e\b);}
    \foreach \a in {30,90,...,330}{\node[resonator,fill=Coral!15] at (e\a) {};}
    \foreach \a/\sgn in {30/+,90/-,150/+,210/-,270/+,330/-}{\node[font=\sffamily\scriptsize] at ($(0,0)!1.25!(e\a)$) {$\sgn$};}
    \node[font=\sffamily\small] at (0,-23mm) {line-graph mode on $L(G)$};
  \end{scope}
  \draw[flow] (83mm,0)--(97mm,0);
  \node[greennode,text width=40mm] at (122mm,0) {alternating amplitudes\\cancel at shared neighbours\\$\Rightarrow$ localized flat-band state};
\end{tikzpicture}
\caption{Schematic of the line-graph mechanism.  The superconducting proof of principle used a much larger hyperbolic parent graph, but the cancellation principle is local.}
\label{fig:line-graph-flat-band}
\end{figure}

This spectral mechanism is significant in its own right.  It can amplify interaction effects once nonlinearity is added and provides a sharp signature of graph structure.  It should nevertheless not be called a surface-code gap: the degrees of freedom, stabilizers, and logical subspace of a QEC protocol have not yet been introduced.

\subsection{A graph compiler based on direct capacitive coupling}

The later superconducting framework of Xu, Mahmoud, Gorgichuk, Thomale, Rayan, and Mariantoni treats the chip explicitly as a graph compiler \cite{xu-et-al}.  A semi-lumped half-wave coplanar-waveguide resonator represents a vertex.  A designed capacitance represents an edge.  Relative distances in a projected hyperbolic geometry inform the relative coupling strengths, while the substrate remains flat.

\begin{figure}[ht]
\centering
\begin{tikzpicture}[node distance=8mm and 10mm]
  \begin{scope}[xshift=0mm]
    \draw[DeepBlue,line width=1.1pt,fill=PaleBlue] (0,0) circle (17mm);
    \foreach \a in {0,45,...,315}{
      \coordinate (o\a) at (\a:13mm);
      \fill[Coral] (o\a) circle (1.7pt);
      \draw[MidGray!70,line width=0.6pt] (0,0)--(o\a);
    }
    \fill[Coral] (0,0) circle (1.9pt);
    \foreach \a/\b in {0/45,45/90,90/135,135/180,180/225,225/270,270/315,315/0}{
      \draw[MidGray!70,line width=0.6pt] (o\a)--(o\b);
    }
    \node[font=\sffamily\small] at (0,-23mm) {target weighted graph};
  \end{scope}
  \draw[flow] (22mm,0)--(34mm,0);
  \node[boxnode,text width=30mm] at (52mm,0) {compile\\vertices $\mapsto$ resonators\\edges $\mapsto$ capacitances};
  \draw[flow] (70mm,0)--(82mm,0);
  \begin{scope}[xshift=103mm]
    \draw[DeepBlue,line width=0.9pt,rounded corners=2mm,fill=WarmGray] (-20mm,-18mm) rectangle (20mm,18mm);
    \foreach \x/\y in {-13/9,-4/11,7/8,13/0,6/-9,-6/-10,-14/-2}{
      \node[resonator] (r\x\y) at (\x mm,\y mm) {};
    }
    \draw[DeepBlue,line width=0.8pt] (-13mm,9mm)--(-4mm,11mm)--(7mm,8mm)--(13mm,0mm)--(6mm,-9mm)--(-6mm,-10mm)--(-14mm,-2mm)--(-13mm,9mm);
    \draw[Coral,line width=1.0pt] (-4mm,11mm)--(6mm,-9mm);
    \draw[Coral,line width=1.0pt] (7mm,8mm)--(-6mm,-10mm);
    \draw[SpiderGreen,line width=1.0pt] (13mm,0mm)--(-14mm,-2mm);
    \node[port] at (-20mm,10mm) {}; \node[port] at (-20mm,-10mm) {};
    \node[port] at (20mm,10mm) {}; \node[port] at (20mm,-10mm) {};
    \node[font=\sffamily\small] at (0,-23mm) {flat routed device};
  \end{scope}
\end{tikzpicture}
\caption{Schematic of graph compilation.  Intrinsic adjacency and weights are preserved even though the Euclidean chip layout is freely routed.}
\label{fig:graph-compiler}
\end{figure}

\subsubsection{From a target matrix to circuit parameters}

At the linear level, capacitively coupled resonators are described by a capacitance matrix $C$ and an inductive matrix $L$.  Linearizing the circuit gives a generalized eigenvalue problem whose normal modes depend on both on-site elements and mutual capacitances.  In a weak-coupling rotating-wave description, a mutual capacitance $C_{ij}$ produces a hopping scale approximately proportional to
\[
 t_{ij}\propto C_{ij}\sqrt{\omega_i\omega_j/(C_iC_j)},
\]
with the precise prefactor fixed by the resonator normalization and layout.  The compiler therefore solves an inverse problem: choose realizable geometry so that the extracted circuit matrices reproduce the desired effective $h$ after fabrication tolerances are included.

Semi-lumped resonators help separate site size from electromagnetic wavelength.  A resonator combines distributed coplanar-waveguide sections with concentrated capacitive or inductive features, allowing many sites to fit on a chip without using the long meanders of the first architecture.  Direct capacitive edges also make the local coupling vocabulary reusable: a new target graph changes the netlist and routing more than the underlying physical principle.

A realistic compiler must address several coupled constraints:
\begin{enumerate}
\item \textbf{frequency placement}: keep on-site disorder below the spectral features of interest;
\item \textbf{coupling range}: realize weak and strong $t_{ij}$ values without parasitic modes;
\item \textbf{routing}: avoid unintended crossings and long-range capacitances;
\item \textbf{port visibility}: place drives so that important mode subspaces have nonzero overlap;
\item \textbf{fabrication variation}: include linewidths and component tolerances in the predicted response;
\item \textbf{quantum upgrade path}: reserve locations for nonlinear elements, readout, and control if the graph is to move beyond linear spectroscopy.
\end{enumerate}

The distinction between intrinsic and fabrication geometry is what makes this possible.  A strong hyperbolic coupling need not connect physically distant points on the chip through a long resonator path; the sites can be brought close in the Euclidean layout and assigned the desired capacitance directly.

The devices are cooled to approximately $10\,\mathrm{mK}$, where the resonators are quantized modes.  In the reported high-power transmission measurement, however, the response is consistently interpreted in the linear regime.  A coherent microwave tone is swept through an input port and a complex scattering parameter $S_{j1}(\omega)$ is measured at an output.  Resonance peaks identify driven normal modes; clusters and suppressed regions can be compared with the finite tight-binding spectrum, including disorder and linewidth effects.

The use of several input ports matters.  A mode weakly supported at one port may be invisible to that channel, so the spectra from $S_{21}$, $S_{31}$, and $S_{41}$ are aggregated to expose a larger part of the mode structure.  This is a graph-level validation of \cref{eq:tight-binding}; it is not yet a syndrome-extraction experiment.

The measured quantity is complex.  Its magnitude identifies transmission peaks and stop bands, while its phase contains propagation and interference information.  Cable delay, attenuation, background transmission, and impedance mismatch must be calibrated or modelled before comparing fine structure with the circuit resolvent.  A peak width combines external port loading and internal loss; it is not simply fabrication disorder.

Cooling the circuit below the photon energy scale ensures that a quantum description of each mode is available, but a high-power coherent drive populates many photons and can make the measured response well approximated by classical linear equations.  This is not a contradiction: linear classical waves and single-particle quantum amplitudes obey the same normal-mode equation.  To establish nonclassical state control one needs few-photon calibration and observables that cannot be reproduced by a coherent Gaussian field.

Mode counting also needs care.  A finite graph with $N$ ideal sites has $N$ single-particle eigenmodes, but a given $S_{j1}$ trace need not show $N$ resolved peaks.  Degeneracy combines modes at one frequency; loss merges nearby modes; a port may be orthogonal to an eigenspace; and disorder may split a theoretical degeneracy below the measurement resolution.  Agreement is therefore assessed at the level of the global clustered spectrum and modelled line shapes, not by forcing a one-to-one assignment to every ideal eigenvector.

\subsection{Two hyperbolic lattices and two global topologies}

The $\{8,3\}$ sample realizes a finite sublattice with nine octagonal faces, $48$ vertices represented by resonators, and $56$ edges represented by direct capacitive couplings.  Its unit-cell geometry compactifies on a genus-$2$ surface.  The calculated density of states has three large clusters separated by two prominent central gaps, with smaller gaps nearer the band edges.  The measured transmission reproduces the principal cluster structure, although not every predicted eigenmode is visible.

The $\{12,4\}$ sample targets a genus-$3$ geometry with five dodecagonal faces, $56$ resonators, and $60$ direct couplings.  Its calculated spectrum forms five clusters separated by four gaps.  Highly degenerate eigenstates align particularly clearly with measured peak clusters.  The two chips therefore use one local fabrication vocabulary to realize different global cell structures and different spectral organizations \cite{xu-et-al}.

\begin{figure}[ht]
\centering
\begin{tikzpicture}
  \node[boxnode,text width=50mm,minimum height=29mm] (a) at (0,0) {};
  \node[font=\sffamily\bfseries,DeepBlue] at ($(a.north)+(0,-4mm)$) {$\{8,3\}$ / genus $2$};
  \node[font=\sffamily\small,align=center] at ($(a.center)+(0,2mm)$) {9 octagonal faces\\48 resonators $\cdot$ 56 couplings};
  \begin{scope}[shift={($(a.south west)+(5mm,4mm)$)}]
    \foreach \x/\h in {0/4,1/7,2/10,3/6,4/3,7/4,8/8,9/11,10/7,11/4}{
      \draw[DeepBlue,line width=1.2pt] (\x*3mm,0)--++(0,\h*0.55mm);
    }
    \draw[Coral,line width=2.3pt] (15mm,0)--(18mm,0);
  \end{scope}
  \node[boxnode,text width=50mm,minimum height=29mm] (b) at (70mm,0) {};
  \node[font=\sffamily\bfseries,DeepBlue] at ($(b.north)+(0,-4mm)$) {$\{12,4\}$ / genus $3$};
  \node[font=\sffamily\small,align=center] at ($(b.center)+(0,2mm)$) {5 dodecagonal faces\\56 resonators $\cdot$ 60 couplings};
  \begin{scope}[shift={($(b.south west)+(5mm,4mm)$)}]
    \foreach \x/\h in {0/8,1/11,2/6,5/5,6/10,7/7,10/6,11/12,12/8,15/5,16/10,17/7}{
      \draw[DeepBlue,line width=1.2pt] (\x*2mm,0)--++(0,\h*0.5mm);
    }
    \foreach \x in {6mm,14mm,22mm,30mm}{\draw[Coral,line width=2.3pt] (\x,0)--++(3mm,0);}
  \end{scope}
  \node[below=7mm of a,font=\sffamily\small,align=center] {three principal mode clusters\\with prominent central gaps};
  \node[below=7mm of b,font=\sffamily\small,align=center] {five mode clusters\\separated by four gaps};
\end{tikzpicture}
\caption{Schematic comparison of the two hyperbolic devices.  The miniature spectra indicate clustered organization only; they are not reproductions of measured data.}
\label{fig:device-comparison}
\end{figure}

The numbers in \cref{fig:device-comparison} describe the reported resonator netlists and their associated hyperbolic unit-cell construction.  They should not be substituted into the closed surface-code incidence identities $pF=2E$ and $qV=2E$ from \cref{sec:correcting}.  The resonator graph, the polygonal unit-cell bookkeeping, and a QEC cellulation place degrees of freedom on different combinatorial objects.  This is an instructive instance of why the type of a geometry matters.

The two chips test scalability in more than raw component count.  They change $p$, $q$, genus, face organization, degeneracy pattern, and number of principal spectral gaps while retaining the same direct-coupling fabrication logic.  A one-off demonstration could accidentally depend on a special layout.  Reproducing distinct target spectra with a shared compiler is evidence that the architecture is portable across a family of hyperbolic graphs.

The compact quotient viewpoint also explains why genus is relevant even in a linear experiment.  Side identifications constrain long cycles and the allowed mode patterns.  Those global constraints affect the finite adjacency spectrum.  Spectral sensitivity to the quotient does not imply many-body topological order, but it does show that the device response retains information beyond local degree.

This result is topology-sensitive in a specific sense: distinct finite weighted graphs associated with different compact hyperbolic data produce distinct normal-mode spectra, and the fabrication framework resolves that distinction.  It is not a measurement of topological order.

\subsection{A spectral gap is not a code gap}

The word \emph{gap} is overloaded.  In the present devices it refers to separation between clusters of normal modes of a finite, essentially linear, weighted graph.  Such a gap is visible in microwave transmission and can be compared with the eigenvalue differences of \cref{eq:tight-binding}.

A code or many-body protection gap would instead separate a protected interacting subspace from excitations in a Hamiltonian whose structure enforces quantum order.  Demonstrating it would require suitable state preparation, calibrated quantum interactions, observables sensitive to many-body coherence, and ultimately logical or syndrome-level tests.

\begin{center}
\begin{tabularx}{0.94\textwidth}{@{}>{\bfseries\sffamily}lXX@{}}
\toprule
 & Normal-mode spectral gap & Quantum code/protection gap \\
\midrule
object & finite weighted adjacency problem & interacting encoded many-body system \\
observable & transmission peaks and suppressed bands & state, correlations, syndrome, logical lifetime \\
present status & demonstrated in the linear devices & not demonstrated on these devices \\
what it validates & graph compilation and spectral control & protected quantum information \\
\bottomrule
\end{tabularx}
\end{center}

The distinction is constructive, not dismissive.  A validated geometric substrate removes one uncertainty from the harder quantum experiment: whether the intended graph has actually been fabricated.

\subsection{Nonlinearity turns modes into quantum resources}

Linear resonators have equally spaced levels.  A Josephson element introduces anharmonicity, schematically
\begin{equation}
 H_{\mathrm{nl}}=\omega a^\dagger a+\frac{K}{2}a^{\dagger2}a^2,
 \label{eq:kerr}
\end{equation}
so the transition frequency depends on occupation.  The Kerr scale $K$ enables state-selective control, photon blockade, and entangling dynamics.  Selected transmon-like elements or nonlinear resonators can therefore turn a spectrally validated hyperbolic bath into a quantum simulator with controllable interactions.

The nonlinearity originates from the Josephson energy
\[
 -E_J\cos\varphi
 =-E_J+\frac{E_J}{2}\varphi^2-\frac{E_J}{24}\varphi^4+\cdots.
\]
The quadratic term contributes an effective inductance; the quartic term makes the oscillator anharmonic.  After quantization and a rotating-wave approximation, the quartic contribution produces the Kerr term in \cref{eq:kerr}.  Distributing a Josephson element across a normal mode makes $K$ depend on that mode's participation in the nonlinear component.  Circuit QED engineering is therefore an exercise in shaping both the linear eigenvectors and their nonlinear participation \cite{blais-cqed}.

Several regimes are possible.
\begin{itemize}
\item With $|K|$ much smaller than linewidths, the device remains effectively linear.
\item With resolvable anharmonicity, selected modes can serve as qubits or number-sensitive resonators.
\item With many weakly nonlinear sites, the graph approaches an interacting Bose--Hubbard model on hyperbolic connectivity.
\item With strongly nonlinear two-level sites and engineered exchange, it approaches a spin model whose frustration and propagation inherit the graph geometry.
\end{itemize}

Adding Josephson elements is not a local afterthought.  They introduce frequency crowding, flux sensitivity, extra loss channels, control lines, calibration overhead, and interactions whose effective range is set by the normal modes.  The validated linear compiler is valuable because it provides a baseline against which those nonlinear shifts can be measured.

An intermediate programme proposed by Bienias and collaborators couples a small number of superconducting qubits to a hyperbolic photonic lattice \cite{bienias}.  Qubit relaxation can probe the bath, curvature limits the size of photon--qubit bound states, and photon-mediated interactions follow hyperbolic geodesics and can generate frustrated spin models.  This lies between linear spectroscopy and full error correction: add a few nonlinear quantum probes before attempting a complete encoded architecture.

For a qubit of frequency $\omega_q$ coupled locally to site $j$ with strength $g$, the self-energy samples the local Green function
\[
 G_{jj}(\omega)=\bra j(\omega I-h+i0^+)^{-1}\ket j.
\]
Its imaginary part controls radiative decay into available modes, while its real part produces a Lamb shift.  In a gap, a qubit excitation can hybridize with an exponentially localized photonic cloud to form a bound state.  With two qubits, exchanging virtual photons through $G_{ij}(\omega_q)$ mediates an interaction whose magnitude and sign depend on the hyperbolic propagation paths.

\subsection{What evidence would establish the quantum regime?}

Different scientific claims require different observables.  A useful experimental ladder is:

\begin{figure}[ht]
\centering
\begin{tikzpicture}[node distance=5.5mm]
  \node[boxnode,text width=33mm] (s1) {\textbf{1. Spectrum}\\validate the weighted graph};
  \node[boxnode,text width=33mm,right=of s1] (s2) {\textbf{2. Quantum states}\\few-photon preparation and readout};
  \node[greennode,text width=33mm,right=of s2] (s3) {\textbf{3. Correlations}\\connected functions and entanglement};
  \node[coralnode,text width=33mm,below=of s3] (s4) {\textbf{4. QEC protocol}\\repeated checks and decoding};
  \node[coralnode,text width=33mm,left=of s4] (s5) {\textbf{5. Logical test}\\encoded survival and failure rates};
  \draw[flow] (s1)--(s2); \draw[flow] (s2)--(s3); \draw[flow] (s3)--(s4); \draw[flow] (s4)--(s5);
\end{tikzpicture}
\caption{Increasingly strong claims require increasingly informative observables.  The present devices establish the first stage.}
\label{fig:experimental-ladder}
\end{figure}

State tomography can test preparation on a small subsystem.  Joint qubit or photon readout gives two-site correlations.  Connected higher-point functions reveal interactions that a Gaussian linear theory cannot supply.  Randomized measurements, classical shadows, or two-copy interference can estimate subsystem purities and R\'enyi entropies.  Repeated ancilla readout gives syndrome histories.  Finally, encoded survival or process tomography can compare logical and physical error rates.

For error correction, the decisive data are not attractive spectra but repeated syndromes and logical failure probabilities under a declared noise model and decoder.  For holographic or bulk--boundary questions, the decisive data are connected boundary correlators, entanglement structure, and reconstruction properties.

\subsubsection{Few-photon and nonclassical tests}

The second-order coherence
\[
 g^{(2)}(0)=
 \frac{\langle a^\dagger a^\dagger aa\rangle}
      {\langle a^\dagger a\rangle^2}
\]
distinguishes Poissonian coherent light from antibunched or bunched fields.  Photon blockade can give $g^{(2)}(0)<1$, evidence that the Josephson nonlinearity is acting at the level of individual excitations.  Spectroscopy of the $0\to1$ and $1\to2$ transitions measures anharmonicity directly.

Entanglement requires a joint statement.  For two qubits, tomography reconstructs $\rho_{AB}$ and permits concurrence, negativity, or a Bell-state fidelity witness.  For larger lattices, full tomography is exponential.  Random local measurements can estimate selected Pauli correlators, subsystem purities, or classical-shadow observables.  An entanglement witness must exceed a separable bound with uncertainty and readout errors included.

A linear network driven by coherent tones produces a Gaussian coherent state whose connected normally ordered correlations factorize.  Observing a complicated spatial intensity pattern is therefore not by itself evidence of entanglement.  Non-Gaussian state preparation, sub-Poissonian statistics, or a certified inseparability inequality supplies the missing distinction.

\subsubsection{From nonlinear simulation to QEC}

Even an interacting entangled device is not automatically an error-correcting memory.  A QEC experiment needs a code-space preparation, a schedule of commuting checks, ancillas or measurement modes, repeated rounds, a decoder, and a logical observable.  The natural sequence is incremental:
\begin{enumerate}
\item calibrate one nonlinear site and its readout;
\item entangle a small pair or plaquette;
\item measure one stabilizer without resolving its logical basis states;
\item repeat a compatible set of checks and decode injected faults;
\item compare an encoded lifetime or process fidelity with an unencoded baseline;
\item scale across the hyperbolic graph while tracking correlated errors and boundary effects.
\end{enumerate}
The graph compiler addresses the connectivity bottleneck at the bottom of this ladder.  The remaining steps require a control and measurement compiler layered on top of it.

\subsection{Why the boundary is unusually important}

In hyperbolic geometry, volume grows exponentially with radius and the boundary of a finite region remains large relative to its bulk.  Boundary drive and readout can therefore retain detailed information about propagation through the interior.  This makes hyperbolic lattices attractive for tabletop studies of bulk--boundary correspondence.

Boettcher and collaborators showed how discrete hyperbolic circuit lattices can approximate continuum field theory on the Poincar\'e disk in a long-wavelength regime \cite{boettcher}.  Dey and collaborators proposed and simulated a nonlinear electrical-circuit protocol in which two- and three-point boundary functions recover aspects of a holographic conformal field theory \cite{dey-et-al}.  In a linear theory, connected higher-point functions vanish or factorize.  Nonlinearity supplies the bulk interaction vertex needed for a genuine connected three-point response.

These are tabletop tests of selected features of bulk--boundary correspondence, not realizations of quantum gravity or full AdS/CFT.  The discipline is the same as in the spectral-gap discussion: state exactly which observable has been reproduced and which interpretation it supports.

\subsubsection{Exponential growth keeps the boundary extensive}

In curvature radius $\ell$, the circumference and area of a hyperbolic disk of radius $R$ are
\[
 C(R)=2\pi\ell\sinh(R/\ell),
 \qquad
 A(R)=2\pi\ell^2\bigl(\cosh(R/\ell)-1\bigr).
\]
Both grow as $e^{R/\ell}$, so the boundary-to-volume ratio approaches a nonzero constant rather than vanishing.  A finite hyperbolic lattice consequently has a macroscopic fraction of sites near its boundary.  This is a challenge for bulk thermodynamics and an opportunity for boundary control.

If sources $J_i$ are applied at boundary ports, derivatives of the response with respect to those sources generate correlation functions.  A linear network has a response linear in $J$ and therefore no connected three-point susceptibility.  A weak bulk nonlinearity produces terms schematically
\[
 \langle O_iO_jO_k\rangle_c
 \sim \sum_{v\in\mathrm{bulk}}
 G_{iv}G_{jv}G_{kv}\,\lambda_v,
\]
where $G_{iv}$ propagates from bulk interaction site $v$ to boundary point $i$ and $\lambda_v$ is the nonlinear vertex.  This factorized bulk sum is the discrete analogue of a tree-level Witten diagram.

For a boundary conformal theory, two-point functions of a primary operator of dimension $\Delta$ scale with boundary separation according to a power law in an appropriate conformal coordinate.  On a finite discrete lattice, one tests a regulated version over a limited range and compares several system sizes or graph refinements.  Agreement with one fitted exponent is suggestive; a stronger test checks multiple operators, three-point structure, covariance under boundary reparameterization, and robustness to disorder.

\subsection{Holographic codes reconnect geometry and error correction}

The HaPPY construction tiles a hyperbolic disk with perfect tensors \cite{pastawski}.  Bulk legs are logical inputs and open boundary legs are physical outputs.  The tensor network defines an isometric encoder, while a bulk operator may have several boundary reconstructions.  This makes quantum error correction part of the geometry of the bulk--boundary map.

\begin{figure}[ht]
\centering
\begin{tikzpicture}[scale=0.95]
  \draw[DeepBlue,line width=1.0pt,fill=PaleBlue!45] (0,0) circle (25mm);
  \foreach \a in {0,45,...,315}{
    \node[port] (p\a) at (\a:25mm) {};
    \node[zspider,minimum size=5.5mm] (s\a) at (\a:16mm) {};
    \draw[wire] (s\a)--(p\a);
  }
  \node[xspider,minimum size=7mm] (c) at (0,0) {};
  \foreach \a in {0,45,...,315}{\draw[wire] (c)--(s\a);}
  \draw[flow] (37mm,0)--(51mm,0);
  \node[boxnode,text width=48mm] at (79mm,0) {geometry organizes an isometric encoder\\and redundant boundary reconstructions};
  \node[xspider,minimum size=4.4mm] at (-14mm,-31mm) {};
  \node[font=\sffamily\footnotesize,SpiderRed,anchor=west] at (-9mm,-31mm) {bulk logical input};
  \node[port,minimum size=4.4mm] at (36mm,-31mm) {};
  \node[font=\sffamily\footnotesize,DeepBlue,anchor=west] at (41mm,-31mm) {boundary physical legs};
\end{tikzpicture}
\caption{Schematic of a hyperbolic tensor-network encoder.  It is conceptual and does not depict a particular HaPPY tiling.}
\label{fig:holographic-encoder}
\end{figure}

A tensor with $2m$ legs is \emph{perfect} when every bipartition into $m$ input and $m$ output legs defines an isometry up to normalization.  This strong condition lets a tensor be reinterpreted in many directions, echoing the leg-bending flexibility of \cref{sec:drawing} but now with an isometry property across every balanced cut.  Contracting perfect tensors in a negatively curved tiling gives an encoding map from selected bulk legs to open boundary legs.

The greedy reconstruction rule starts from a boundary region and repeatedly absorbs any tensor for which at least half of the legs are already controlled.  Bulk operators inside the resulting entanglement wedge can be pushed to that boundary region.  If a complementary boundary region is erased, an operator remains recoverable whenever a representative can be pushed entirely into the surviving region.  This is an erasure-correction statement: geometry organizes redundant reconstructions of the same logical operator.

The tensor network also satisfies a discrete Ryu--Takayanagi-like entropy relation in an appropriate code subspace: a minimal cut counts entangled bonds, with bulk contributions inside the wedge.  It is a toy model, not a derivation of continuum gravity, but it makes the compatibility among isometry, redundancy, boundary regions, and hyperbolic growth unusually explicit.

Diagrammatic tools can analyze special choices of perfect tensors.  Stabilizer perfect tensors admit Pauli pushing rules that can be represented by Clifford ZX diagrams.  Yet the two frameworks answer different questions: perfection is a many-leg isometry condition, while a ZX spider has a specific basis-copying tensor and generally is not perfect at arbitrary arity.

A hyperbolic superconducting chip does not automatically implement a HaPPY code.  What it supplies is a controlled geometry on which tensor-network encoders, quantum checks, and boundary probes might be specified.  The scientific task is to design the interactions and measurements that turn graph adjacency into an actual encoded isometry or repeated correction protocol.

\subsection{Co-designing a hyperbolic quantum experiment}

Suppose the target is no longer only a spectrum but a protected logical process on a hyperbolic architecture.  The design must pass information in both directions between mathematics and hardware.

From the code side, one exports a cell complex, check supports, logical representatives, a syndrome schedule, and tolerable error correlations.  From the device side, one exports available frequencies, coupling strengths, nonlinearities, port and readout locations, coherence times, fabrication constraints, and a crosstalk model.  A compiler that accepts only an abstract adjacency matrix is insufficient because two edges of the same graph may have very different control costs.

\begin{center}
\begin{tabularx}{0.96\textwidth}{@{}>{\sffamily\bfseries}lXX@{}}
\toprule
layer & input specification & required validation \\
\midrule
topology & quotient cellulation, boundaries, logical cycles & ranks, homology, primal--dual intersections \\
code & qubit placement, checks, distance, decoder & simulated logical failure under a noise model \\
schedule & native gates, check ordering, resets & fault propagation and circuit-level threshold \\
linear device & on-site frequencies and weighted couplings & multiport spectrum and mode shapes \\
nonlinear control & Josephson participation, drives, readout & few-photon calibration and gate/process fidelity \\
logical experiment & encoded preparation and observables & repeated syndromes and logical benchmark \\
\bottomrule
\end{tabularx}
\end{center}

The validation arrows are not all forward.  If one check has excessive weight for the native connectivity, the code cellulation may need to change.  If a port is dark to a required normal-mode subspace, the boundary partition or layout may need to change.  If fabrication disorder closes a designed spectral separation, the target coupling pattern may need larger margins.  Co-design means iterating these constraints before the full chip is fabricated.

\subsubsection{A scale hierarchy}

Several energy and time scales must be ordered.  A schematic hierarchy for nonlinear resonators is
\[
 \text{control bandwidth},\ \kappa,\ \gamma
 \ll |K|,|t_{ij}|,\text{ spectral separations}
 \ll \omega_i,
\]
where $\kappa$ is photon loss and $\gamma$ a qubit decoherence rate.  The precise ordering depends on the experiment: resolving number states needs $|K|\gg\kappa$, while an adiabatic many-body preparation requires a drive slow relative to a minimum gap but fast relative to decoherence.

For repeated QEC, cycle time is another scale.  Checks must be measured faster than errors accumulate, but very fast control can introduce leakage and crosstalk.  Decoder latency matters when feed-forward is real-time.  A hyperbolic graph has bounded local degree in the abstract, yet Euclidean routing and shared control hardware can create nonlocal scheduling conflicts that lengthen the cycle.

Finite-size physics also sets a hierarchy.  The graph radius must be large enough to separate bulk behaviour from boundary effects for a continuum study, but boundary effects are the signal in a holographic experiment.  A compact quotient removes a physical edge but introduces short nontrivial cycles.  The same finite size can therefore be a bug or a feature depending on the observable.

\subsubsection{Three concrete first nonlinear experiments}

\paragraph{A local bound-state probe.}  Couple one tunable qubit to a selected resonator, sweep the qubit through a spectral gap, and measure relaxation and dressed-state spectroscopy.  The claim is curvature-dependent localization of a photon--qubit bound state.  The comparison requires the calibrated local Green function and a Euclidean or graph-randomized control.

\paragraph{A two-qubit mediated interaction.}  Place two qubits at sites with several hyperbolic paths between them.  Detune them into a dispersive regime and measure coherent exchange or a conditional phase.  Vary their graph separation while keeping local fabrication parameters comparable.  The claim is that $G_{ij}(\omega)$, rather than Euclidean chip distance, organizes the interaction.

\paragraph{A nonlinear boundary three-point function.}  Introduce one calibrated nonlinear site or a distributed weak nonlinearity, drive two boundary sources, and measure intermodulation at a third port.  Subtract the linear background and compare the connected signal with the discrete bulk Green-function sum.  The claim is a controlled interacting bulk contribution to boundary response, not yet a quantum-gravity simulation.

Each experiment builds on the linear spectral baseline and adds one qualitatively new observable.  This staged strategy localizes discrepancies: a failed nonlinear prediction can be separated from a failed graph compilation.

\subsection{A chip as a multi-input/multi-output gate}

The present superconducting samples have four microwave ports.  At the level of linear response, the device is already a multi-input/multi-output map described by a scattering matrix.  One may partition the ports as two inputs and two outputs, or one input and three outputs, and compare the resulting arity with the typed gates of \cref{sec:drawing}.  This does not make the chip a ZX spider: a spider obeys specific algebraic equations and basis-copying semantics.  It does suggest an experimental programme in which the \emph{whole process}, rather than only a stored state, is treated as the object to be protected.

That viewpoint aligns with field-theoretic quantum error correction.  A circuit history occupies a region, its ports and cuts define boundaries, errors are interior defect histories, and measured data are footprints.  Protecting a gate means ensuring that all correctable interior histories compatible with the observed boundary data act on the logical interface in the prescribed way.

At a fixed frequency, a four-port linear device has
\[
 \bm b_{\mathrm{out}}=S(\omega)\bm b_{\mathrm{in}}.
\]
Selecting two drive ports and two monitored outputs extracts a $2\times2$ sub-block; choosing one input and three outputs extracts a column.  These are operational partitions of the same scattering process, not changes to the internal graph.  Reciprocity, passivity, and loss impose constraints on $S$; an ideal lossless reciprocal network has a unitary symmetric scattering matrix after all channels are included.

The analogy with a quantum gate becomes exact only in a declared encoded subspace and quantum regime.  A coherent-state scattering matrix describes first moments of bosonic fields.  A quantum channel also specifies noise and higher moments.  Process tomography would prepare a spanning set of few-photon inputs and reconstruct the output channel, while a protected-process experiment would encode those inputs and show that logical process error decreases under correction.

This suggests a process-level Knill--Laflamme question.  Instead of preserving a stored identity channel, one may ask whether a desired isometry $W:L_{\mathrm{in}}\to L_{\mathrm{out}}$ is implemented despite a family of interior faults.  Folding two faulty histories against $W$ should again leave a logical-state-independent comparison on the interface.  The geometry of ports and cuts then becomes part of the code specification rather than merely experimental plumbing.

\subsection{Synthesis: from diagram to device}

The complete path can now be stated without collapsing its stages:

\begin{enumerate}
\item \textbf{Specify the process.}  Identify logical inputs, desired outputs, measurements, and allowed noise.
\item \textbf{Draw and simplify.}  Use typed diagrams and sound local rules to expose basis structure, locality, and compositional boundaries.
\item \textbf{Certify correction.}  Apply Knill--Laflamme algebraically or through folded diagrams; identify footprint fibres and logical ambiguities.
\item \textbf{Choose the geometry.}  Compute checks and logical cycles from a cellulation, including relative boundary data when required.
\item \textbf{Compile the graph.}  Map vertices and weighted edges to physical modes and couplings while respecting routing and control constraints.
\item \textbf{Validate by levels of evidence.}  Establish the spectrum first, then quantum state control, interactions and entanglement, repeated syndromes, and finally logical performance.
\end{enumerate}

\subsection{Problems for Geometrizing Quantum Circuits}

\begin{enumerate}[label=\textbf{4.\arabic*.},leftmargin=3.3em]
\item \textbf{Curvature criterion.}  Derive the condition $1/p+1/q\lessgtr1/2$ from the angle sum around a vertex.  Classify $\{3,3\}$, $\{4,4\}$, and $\{7,3\}$.

\item \textbf{Coordinate size.}  Integrate the metric \cref{eq:poincare-metric} along a radial segment to show that the unit-circle boundary is at infinite intrinsic distance.  Explain why cells shrink in Euclidean coordinates near the edge of the disk.

\item \textbf{Normal modes.}  Diagonalize a three-site path with equal nearest-neighbour hopping $t$.  Determine which modes are dark when a port couples only to the middle site and compare with \cref{eq:scattering-resolvent}.

\item \textbf{Line graph.}  Compute $B^{\mathsf T}B$ for a four-cycle and verify the $-2$ line-graph eigenvector with alternating amplitudes.  What changes for an odd cycle?

\item \textbf{Disorder.}  Add independent on-site shifts $\delta_i$ to a degenerate adjacency matrix.  Use first-order degenerate perturbation theory to predict the splitting.  Which port overlaps would be required to resolve all split modes?

\item \textbf{Spectral versus code gap.}  Give a minimal list of additional Hamiltonian terms, state preparations, and observables required to turn a measured normal-mode gap into evidence for an encoded many-body subspace.  Mark which items are platform dependent.

\item \textbf{Josephson expansion.}  Quantize the quadratic part of $-E_J\cos\varphi$ together with a charging energy and show how the quartic term produces a negative anharmonicity in the transmon regime.  Identify the approximation used in dropping non-number-conserving terms.

\item \textbf{Boundary growth.}  Compute $C(R)/A(R)$ for a hyperbolic disk and its $R\to\infty$ limit.  Compare with a Euclidean disk.  Relate the result to the fraction of addressable boundary sites in a finite lattice.

\item \textbf{Three-point response.}  Starting from a cubic bulk nonlinearity, derive the leading source dependence of a connected boundary three-point function in terms of three bulk-to-boundary Green functions.  State why it vanishes in a purely linear Gaussian model.

\item \textbf{Perfect-tensor erasure.}  On a small toy network of isometries, apply the greedy reconstruction rule to two different surviving boundary regions.  Identify a bulk operator with two distinct boundary representatives and explain why they agree on the code subspace.

\item \textbf{Evidence audit.}  Design a five-stage experimental programme for a nonlinear hyperbolic chip.  For each stage, state a falsifiable claim, a measured observable, a calibration, and a failure mode that the preceding stage would not detect.
\end{enumerate}

\paragraph{Checks.}  In Problem 3, the antisymmetric end-site mode is dark at the centre.  In Problem 4, the alternating vector closes consistently only on an even cycle.  In Problem 8, the hyperbolic boundary-to-area ratio approaches $1/\ell$, whereas the Euclidean ratio vanishes with radius.

\clearpage
\section*{\texorpdfstring{\(\boldsymbol{\infty}\)\quad Revisiting Quantum Circuits}{Infinity: Revisiting Quantum Circuits}}
\addcontentsline{toc}{section}{\texorpdfstring{\(\infty\)\quad Revisiting Quantum Circuits}{Infinity: Revisiting Quantum Circuits}}

The four lectures were presented in an order, but their relation is not a one-way hierarchy.  Thinking, drawing, correcting, and geometrizing are four ways of interrogating the same composable process.  Each exposes information that the others may suppress, and each can force a revision of the description that came before it.

\begin{keyidea}[The four verbs revisited]
Thinking identifies the quantum process and the information it computes.  Drawing makes composition and equality local.  Correcting separates local evidence of an error from its logical action.  Geometrizing compiles global code structure into adjacency, interactions, boundaries, and experimental observables.
\end{keyidea}

The point of rethinking quantum circuits is therefore not to abandon the familiar gate model.  It is to let the circuit acquire enough structure to connect an algorithmic promise, a graphical proof, a topological code, and a physical device without confusing one level for another.  The distinctions among these levels matter just as much as the bridges between them.

\subsection*{One circuit, four descriptions}

At the first level, a circuit is specified by its inputs, controls and measurements, and intended task.  This keeps the mathematical object tied to a question: a unitary matrix with no declared input convention, measurement, or promise is not yet an algorithm.

The diagrammatic level makes composition visible.  Wires record types, boxes record processes, and connected boundaries record which outputs become which inputs.  Cups, caps, and spiders permit local calculations even when their matrix expressions are global.  Their use depends on semantics: a deformation or rewrite must preserve the represented map, channel, probability, or other declared quantity.

Error correction introduces a second distinction.  The physical process and the logical process are not the same object.  Syndrome data reveal something about an error without, in the correctable regime, revealing the encoded state.  The Knill--Laflamme condition, folded diagrams, footprint fibres, and homological classes are different ways of making this separation exact.  They also show why adjacency alone does not define a code and why a measured defect is not automatically a logical failure.

Geometry asks how the types, checks, logical representatives, ports, and interactions are to be realized.  A cellulation may organize stabilizers and logical cycles, while a weighted circuit graph organizes normal modes and control pathways.  A hyperbolic graph may support a useful code, a useful simulator, both, or neither; the role is determined by the degrees of freedom and observables.

The resulting workflow is a loop.  Hardware constraints can force a change in a coupling graph or syndrome schedule.  That change can alter the code geometry, which can require a new diagrammatic factorization, which may reveal that the original process specification was incomplete.  Conversely, a graphical identity can expose a lower-weight check or a different boundary partition and thereby suggest a more realistic device layout.  Moving among the four descriptions is therefore part of the analysis, not a presentation step added after the mathematics is finished.

\subsection*{What the summer school changed}

The questions following the lectures repeatedly crossed the boundaries between these descriptions.  Questions about rotating cups and caps led back to the equivalence of differently typed multilinear maps.  Questions about folding error diagrams led back to the precise content of Knill--Laflamme.  Questions about cycles led from pictures of error strings to homology and logical operators.  Questions about superconducting lattices required a sharper separation between evidence for a graph spectrum, evidence for quantum state control, and evidence for an encoded logical process.

Those interactions made clear that the transitions between the four verbs carry much of the real content.  A lecture can cross such a transition quickly through speech, gesture, and a sequence of images.  Written notes cannot rely on timing or gesture.  They must state the types of the maps, the hypotheses behind a rewrite, the information contained in a syndrome, and the experimental claim supported by an observable.  This is why the present document is an expansion rather than a transcript of the four lectures.

The questions also changed the emphasis.  More space has been given to preparations, measurements, and channels alongside unitaries; to the linear algebra beneath leg bending and spider diagrams; to the comparison between algebraic and diagrammatic correction criteria; and to the hierarchy of evidence needed when an abstract geometry is compiled into hardware.  These additions are not digressions from quantum circuits.  They identify what must be supplied before a circuit description can be carried faithfully from one mathematical or physical setting to another.

\subsection*{Where to go from here}

On the interface between thinking and drawing, the next task is to treat adaptive circuits as naturally as closed unitary ones.  Mid-circuit measurements, classical outcomes, feed-forward, resets, noise channels, and parameterized controls should have diagrammatic descriptions whose semantics remain explicit.  This is important both for proofs and for verified compilation: an automated rewrite is valuable only when the quantity it preserves is known.

On the interface between drawing and correcting, one would like local graphical methods that scale from exact code identities to complete fault histories.  Measurement outcomes, gauge choices, leakage, correlated noise, and time ordering all have to coexist with the compositional advantages of a diagram.  Symbolic rewriting, stabilizer algebra, homology, and numerical channel calculations should be used together, with each method checking the regime in which the others apply.

On the interface between correcting and geometrizing, code design and device design should be allowed to constrain one another from the start.  Check weight, logical distance, routing, readout ports, control schedules, fabrication disorder, and decoder latency belong to one iterative specification.  For hyperbolic superconducting architectures, a responsible progression begins with calibrated spectral agreement, then moves to few-photon control and nonlinear response, then to repeated syndrome extraction, and only then to logical benchmarking.  Passing one level is evidence for attempting the next; it is not a substitute for it.

A reader can therefore revisit any circuit by asking four questions.  What process and measurements are claimed?  Which equalities can be made local?  Which physical distinctions are erased by encoding or become logical?  Which geometry and device data are needed to realize and test the proposal?  The goal is to compare the task, proof, code, and device without identifying them.  These questions are the invitation to keep revisiting quantum circuits.

\clearpage
\phantomsection


\addcontentsline{toc}{section}{References}
\begin{thebibliography}{99}
\begin{sloppypar}

\bibitem{atiyah}
M. F. Atiyah,
``Topological quantum field theories,''
\emph{Publications Math\'ematiques de l'IH\'ES} \textbf{68} (1988), 175--186.
\href{https://doi.org/10.1007/BF02698547}{doi:10.1007/BF02698547}.

\bibitem{backens}
M. Backens,
``The ZX-calculus is complete for stabilizer quantum mechanics,''
\emph{New Journal of Physics} \textbf{16} (2014), 093021.
\href{https://doi.org/10.1088/1367-2630/16/9/093021}{doi:10.1088/1367-2630/16/9/093021}.

\bibitem{bennett-teleportation}
C. H. Bennett, G. Brassard, C. Cr\'epeau, R. Jozsa, A. Peres, and W. K. Wootters,
``Teleporting an unknown quantum state via dual classical and Einstein--Podolsky--Rosen channels,''
\emph{Physical Review Letters} \textbf{70} (1993), 1895--1899.
\href{https://doi.org/10.1103/PhysRevLett.70.1895}{doi:10.1103/PhysRevLett.70.1895}.

\bibitem{bienias}
P. Bienias, I. Boettcher, R. Belyansky, A. J. Koll\'ar, and A. V. Gorshkov,
``Circuit Quantum Electrodynamics in Hyperbolic Space: From Photon Bound States to Frustrated Spin Models,''
\emph{Physical Review Letters} \textbf{128} (2022), 013601.
\href{https://doi.org/10.1103/PhysRevLett.128.013601}{doi:10.1103/PhysRevLett.128.013601}.

\bibitem{blais-cqed}
A. Blais, A. L. Grimsmo, S. M. Girvin, and A. Wallraff,
``Circuit quantum electrodynamics,''
\emph{Reviews of Modern Physics} \textbf{93} (2021), 025005.
\href{https://doi.org/10.1103/RevModPhys.93.025005}{doi:10.1103/RevModPhys.93.025005}.

\bibitem{boettcher}
I. Boettcher, P. Bienias, R. Belyansky, A. J. Koll\'ar, and A. V. Gorshkov,
``Quantum Simulation of Hyperbolic Space with Circuit Quantum Electrodynamics: From Graphs to Geometry,''
\emph{Physical Review A} \textbf{102} (2020), 032208.
\href{https://doi.org/10.1103/PhysRevA.102.032208}{doi:10.1103/PhysRevA.102.032208}.

\bibitem{choi}
M.-D. Choi,
``Completely positive linear maps on complex matrices,''
\emph{Linear Algebra and its Applications} \textbf{10} (1975), 285--290.
\href{https://doi.org/10.1016/0024-3795(75)90075-0}{doi:10.1016/0024-3795(75)90075-0}.

\bibitem{coecke-duncan}
B. Coecke and R. Duncan,
``Interacting Quantum Observables: Categorical Algebra and Diagrammatics,''
\emph{New Journal of Physics} \textbf{13} (2011), 043016.
\href{https://doi.org/10.1088/1367-2630/13/4/043016}{doi:10.1088/1367-2630/13/4/043016}.

\bibitem{dennis-et-al}
E. Dennis, A. Kitaev, A. Landahl, and J. Preskill,
``Topological quantum memory,''
\emph{Journal of Mathematical Physics} \textbf{43} (2002), 4452--4505.
\href{https://doi.org/10.1063/1.1499754}{doi:10.1063/1.1499754}.

\bibitem{deutsch}
D. Deutsch,
``Quantum theory, the Church--Turing principle and the universal quantum computer,''
\emph{Proceedings of the Royal Society of London A} \textbf{400} (1985), 97--117.
\href{https://doi.org/10.1098/rspa.1985.0070}{doi:10.1098/rspa.1985.0070}.

\bibitem{dey-et-al}
S. Dey, A. Chen, P. Basteiro, A. Fritzsche, M. Greiter, M. Kaminski,
P. M. Lenggenhager, R. Meyer, R. Sorbello, A. Stegmaier, R. Thomale,
J. Erdmenger, and I. Boettcher,
``Simulating Holographic Conformal Field Theories on Hyperbolic Lattices,''
\emph{Physical Review Letters} \textbf{133} (2024), 061603.
\href{https://doi.org/10.1103/PhysRevLett.133.061603}{doi:10.1103/PhysRevLett.133.061603}.

\bibitem{gottesman}
D. Gottesman,
\emph{Stabilizer Codes and Quantum Error Correction},
Ph.D. thesis, California Institute of Technology, 1997.
\href{https://arxiv.org/abs/quant-ph/9705052}{arXiv:quant-ph/9705052}.

\bibitem{grover}
L. K. Grover,
``A fast quantum mechanical algorithm for database search,''
in \emph{Proceedings of the Twenty-Eighth Annual ACM Symposium on Theory of Computing} (STOC '96), 1996, 212--219.
\href{https://doi.org/10.1145/237814.237866}{doi:10.1145/237814.237866}.

\bibitem{hadzihasanovic-ng-wang}
A. Hadzihasanovic, K. F. Ng, and Q. Wang,
``Two complete axiomatisations of pure-state qubit quantum computing,''
in \emph{Proceedings of the 33rd Annual ACM/IEEE Symposium on Logic in Computer Science} (LICS 2018), 2018, 502--511.
\href{https://doi.org/10.1145/3209108.3209128}{doi:10.1145/3209108.3209128}.

\bibitem{hatcher}
A. Hatcher,
\emph{Algebraic Topology},
Cambridge University Press, Cambridge, 2002.
\href{https://pi.math.cornell.edu/~hatcher/AT/AT.pdf}{Author's online edition}.

\bibitem{jeandel-perdrix-vilmart}
E. Jeandel, S. Perdrix, and R. Vilmart,
``A Complete Axiomatisation of the ZX-Calculus for Clifford+$T$ Quantum Mechanics,''
in \emph{Proceedings of the 33rd Annual ACM/IEEE Symposium on Logic in Computer Science} (LICS 2018), 2018, 559--568.
\href{https://doi.org/10.1145/3209108.3209131}{doi:10.1145/3209108.3209131}.

\bibitem{kissinger-wetering}
A. Kissinger and J. van de Wetering,
``PyZX: Large Scale Automated Diagrammatic Reasoning,''
\emph{Electronic Proceedings in Theoretical Computer Science} \textbf{318} (2020), 229--241.
\href{https://doi.org/10.4204/EPTCS.318.14}{doi:10.4204/EPTCS.318.14}.

\bibitem{knill-laflamme}
E. Knill and R. Laflamme,
``Theory of quantum error-correcting codes,''
\emph{Physical Review A} \textbf{55} (1997), 900--911.
\href{https://doi.org/10.1103/PhysRevA.55.900}{doi:10.1103/PhysRevA.55.900}.

\bibitem{kollar}
A. J. Koll\'ar, M. Fitzpatrick, and A. A. Houck,
``Hyperbolic Lattices in Circuit Quantum Electrodynamics,''
\emph{Nature} \textbf{571} (2019), 45--50.
\href{https://doi.org/10.1038/s41586-019-1348-3}{doi:10.1038/s41586-019-1348-3}.

\bibitem{mahmoud-hqec}
A. A. Mahmoud, K. M. Ali, and S. Rayan,
``Systematic Approach to Hyperbolic Quantum Error Correction Codes,''
\emph{Physical Review A} \textbf{113} (2026), 042426.
\href{https://doi.org/10.1103/95mp-w7kr}{doi:10.1103/95mp-w7kr};
\href{https://arxiv.org/abs/2504.07800}{arXiv:2504.07800}.

\bibitem{mahmoud-cluster}
A. A. Mahmoud, G. Tournaire, S. Bachmann, and S. Rayan,
``Hyperbolic Cluster States for Fault-Tolerant Measurement-Based Quantum Computing,''
preprint, 2026.
\href{https://arxiv.org/abs/2603.27004}{arXiv:2603.27004}.

\bibitem{nielsen-chuang}
M. A. Nielsen and I. L. Chuang,
\emph{Quantum Computation and Quantum Information}, 10th Anniversary Edition,
Cambridge University Press, Cambridge, 2010.
\href{https://doi.org/10.1017/CBO9780511976667}{doi:10.1017/CBO9780511976667}.

\bibitem{pastawski}
F. Pastawski, B. Yoshida, D. Harlow, and J. Preskill,
``Holographic quantum error-correcting codes: Toy models for the bulk/boundary correspondence,''
\emph{Journal of High Energy Physics} \textbf{2015}, no. 6 (2015), 149.
\href{https://doi.org/10.1007/JHEP06(2015)149}{doi:10.1007/JHEP06(2015)149}.

\bibitem{rayan-diagrammatic}
S. Rayan,
``A diagrammatic field theory of quantum error correction,''
preprint, 2026.
\href{https://arxiv.org/abs/2607.08911}{arXiv:2607.08911}.

\bibitem{selinger}
P. Selinger,
``A survey of graphical languages for monoidal categories,''
in B. Coecke (ed.), \emph{New Structures for Physics}, Lecture Notes in Physics \textbf{813}, Springer, Berlin, 2011, 289--355.
\href{https://doi.org/10.1007/978-3-642-12821-9_4}{doi:10.1007/978-3-642-12821-9\_4}.

\bibitem{shor-code}
P. W. Shor,
``Scheme for reducing decoherence in quantum computer memory,''
\emph{Physical Review A} \textbf{52} (1995), R2493--R2496.
\href{https://doi.org/10.1103/PhysRevA.52.R2493}{doi:10.1103/PhysRevA.52.R2493}.

\bibitem{steane-code}
A. M. Steane,
``Error Correcting Codes in Quantum Theory,''
\emph{Physical Review Letters} \textbf{77} (1996), 793--797.
\href{https://doi.org/10.1103/PhysRevLett.77.793}{doi:10.1103/PhysRevLett.77.793}.

\bibitem{stinespring}
W. F. Stinespring,
``Positive functions on $C^*$-algebras,''
\emph{Proceedings of the American Mathematical Society} \textbf{6} (1955), 211--216.
\href{https://doi.org/10.1090/S0002-9939-1955-0069403-4}{doi:10.1090/S0002-9939-1955-0069403-4}.

\bibitem{van-de-wetering}
J. van de Wetering,
``ZX-calculus for the working quantum computer scientist,''
preprint, 2020.
\href{https://arxiv.org/abs/2012.13966}{arXiv:2012.13966}.

\bibitem{watrous}
J. Watrous,
\emph{The Theory of Quantum Information},
Cambridge University Press, Cambridge, 2018.
\href{https://doi.org/10.1017/9781316848142}{doi:10.1017/9781316848142}.

\bibitem{wootters-zurek}
W. K. Wootters and W. H. Zurek,
``A single quantum cannot be cloned,''
\emph{Nature} \textbf{299} (1982), 802--803.
\href{https://doi.org/10.1038/299802a0}{doi:10.1038/299802a0}.

\bibitem{xu-et-al}
X. Xu, A. A. Mahmoud, N. Gorgichuk, R. Thomale, S. Rayan, and M. Mariantoni,
``A Scalable Superconducting Circuit Framework for Emulating Physics in Hyperbolic Space,''
preprint, 2025.
\href{https://arxiv.org/abs/2510.23827}{arXiv:2510.23827}.

\end{sloppypar}
\end{thebibliography}
\end{document}